\DeclareMathOperator{\supp}{supp}
\newcommand{\comp}{\mathsf{c}}
\newcommand{\abs}[1]{| #1 |}
\newcommand{\norm}[1]{\left\lVert #1 \right\rVert}
\newcommand{\trace}[1]{\text{tr} \left( #1 \right)}
\DeclareMathOperator{\tr}{tr}
\DeclareMathOperator{\sat}{sat}
\DeclareMathOperator{\proj}{proj}
\newtheorem{theorem}{Theorem}
\newtheorem{lemma}{Lemma}
\newtheorem{proposition}{Proposition}
\newtheorem{assumption}{Assumption}
\theoremstyle{definition}
\newtheorem{definition}{Definition}
\theoremstyle{remark}
\newtheorem{remark}{Remark}
\title{\LARGE \bf
Stability Bounds for Learning-Based Adaptive Control of Discrete-Time Multi-Dimensional Stochastic Linear Systems with Input Constraints
}
\author{Seth Siriya, Jingge Zhu, Dragan Ne\v{s}i\'{c}, and Ye Pu 
\thanks{All authors are with the Department of Electrical and Electronic Engineering, University of Melbourne, Parkville, 3010, Victoria, Australia. {\tt\small ssiriya@student.unimelb.edu.au}, {\tt\small\{jingge.zhu, dnesic, ye.pu\}@unimelb.edu.au}} 
\thanks{S. Siriya is supported by an Australian Government Research Training Program (RTP) Scholarship.}
}
\begin{document}

\maketitle
\thispagestyle{empty}
\pagestyle{empty}

\begin{abstract}
We consider the problem of adaptive stabilization for discrete-time, multi-dimensional linear systems with bounded control input constraints and unbounded stochastic disturbances, where the parameters of the true system are unknown. To address this challenge, we propose a certainty-equivalent control scheme which combines online parameter estimation with saturated linear control. We establish the existence of a high probability stability bound on the closed-loop system, under additional assumptions on the system and noise processes. Finally, numerical examples are presented to illustrate our results. 
\end{abstract}

\section{Introduction}

Adaptive control (AC) is concerned with the design of controllers for dynamical systems whose model parameters are unknown.
When deploying these algorithms in the real world, it is important that actuator saturation is accounted for during design, since ignorance of such issues may result in failure to achieve stability.
Moreover, systems are often subject to rare, large, disturbances --- often modelled by unbounded stochastic noise --- which can degrade control performance and potentially cause instability.
This motivates the need to develop provably stabilizing adaptive control algorithms that simultaneously handle input constraints and additive, unbounded, stochastic disturbances.

Discrete-time (DT) stochastic AC has recently seen renewed interest in the form of online model-based reinforcement learning --- especially for the online linear quadratic regulation (LQR) task, which aims to minimize regret with respect to the optimal LQR controller on an unknown, linear, stochastic system (see \cite{kargin2022thompson, simchowitz2020naive}). These results have been extended to handle state and input constraints \cite{li2021safe}, but only when disturbances are bounded. DT extremum seeking (ES) AC results have also shown promise for stabilizing unstable DT systems (\cite{ radenkovic2016stochastic, radenkovic2018extremum2}), but do not account for input constraints. Despite the long history of DT stochastic AC, going back to classic linear results such as \cite{goodwin1981discrete, guo1996self}, the control of non-strictly stable systems subject to input constraints has not garnered attention. Other nonlinear DT stochastic AC problems have been considered, such as dead-zone nonlinearities \cite{xiong1993stochastic}, and linearly parameterized nonlinear systems \cite{liu2022global}.
On the other hand, control constraints have been studied for the stabilization of unknown, DT output-feedback linear systems with bounded disturbances in \cite{chaoui2001adaptive,zhang2001adaptive}, but unbounded disturbances are not supported. Recently, mean square boundedness of a learning-based adaptive control scheme for at-worst marginally stable, scalar, linear systems, subject to Gaussian disturbances and bounded controls, was established in \cite{siriya2022learning}, by combining results from statistical learning theory \cite{simchowitz2018learning} with input-constrained stochastic control \cite{chatterjee2012mean}. However, multi-dimensional results are missing. 

Motivated by our previous scalar result \cite{siriya2022learning}, we move towards filling the gap in the multi-dimensional setting. In particular, we aim to develop a method for adaptive stabilization of unknown, multi-dimensional linear systems, subject to additive, i.i.d. sub-Gaussian zero-mean stochastic disturbances, and positive upper bound constraints on the control magnitude. Our main contributions are twofold:

Firstly, we propose a certainty-equivalence (CE) adaptive control scheme to address the problem. It consists of a saturated linear controller based on parameter estimates obtained via ordinary least squares (OLS) online, which has been intentionally excited by a bounded noise to facilitate parameter convergence. The saturation level and exciting noise level can be jointly selected to satisfy the control input constraint. 
Moreover, we do not assume prior knowledge of any bounds on the system parameters.

Secondly, we prove the existence of a high probability stability bound which holds on sub-sampled states of the closed-loop system, under the assumption that the system is controllable, $\Vert A\Vert \leq 1$, the saturation level of the CE component of the controller overpowers the statistics of the disturbance and exciting noise processes, and that a persistency of excitation-like condition holds on the state-input data sequence. To achieve this, we first establish an upper bound on the parameter estimation error that holds over time with high probability using tools from finite sample statistical learning theory \cite{simchowitz2018learning}. Then, we derive a probabilistic upper bound on the norm of the sub-sampled states which relies on a given estimation error bound. 
These two results are subsequently combined to derive a parameterized family of high probability upper bounds on the norm of the sub-sampled states. 
Our main result then follows.


\paragraph*{Notation} For a vector $x \in \mathbb{R}^n$, $| x |$ denotes its Euclidean norm. Given a matrix $M \in \mathbb{R}^{n\times m}$, $\Vert M\Vert $ is its induced $2$-norm, $\sigma_{\text{max}}(M)$ and $\sigma_{\text{min}}(M)$ denotes its maximum and minimum singular values respectively, $\mathcal{B}_r(M)$ denotes the $2$-norm open ball of radius $r > 0$ centered at $M$ and $\bar{\mathcal{B}}_r(M)$ denotes its closure, and $M^\dagger$ denotes its Moore-Penrose inverse. If $M\in \mathbb{R}^{n \times n}$ is symmetric, $\lambda_{\text{min}}(M)$ denotes its minimum eigenvalue, and $\lambda_{\text{max}}(M)$ denotes its maximum eigenvalue. 
For $r>0$, we define the saturation function $\text{sat}_r:\mathbb{R}^d\rightarrow\mathbb{R}^d$ by $\text{sat}_r(x) := x$ if $|x| \leq r$, and $\text{sat}_r(x) := r x / |x|$ if  $|x| > r$. 
The identity matrix is denoted by $\mathbf{I}$. 
The support of a function $f$ that maps from some set to $\mathbb{R}$ is denoted by $\supp(f)$. Given sets $A$ and $ B$, $A^{\comp}$ denotes the complement of $A$, $A\cap B$ denotes the intersection of $A$ and $B$, and $A \cup B$ denotes their union.
Consider a probability space $(\Omega,\mathcal{F}, P)$, and a random vector $X:\Omega \rightarrow \mathbb{R}^d$, an event $A \in \mathcal{F}$, and a sub-sigma-algebra $\mathcal{G}\subseteq \mathcal{F}$, defined on this space. 
The expected value value of $X$ is denoted by $\mathbb{E}[X]$. We define the indicator function $\bm{1}_A:\Omega \rightarrow \{0,1\}$ as $\bm{1}_A := 1$ on the event $A$, and $\bm{1}_A := 0$ on the event $A^{\comp}$. 
Denote the unit sphere embedded in $\mathbb{R}^d$ by $\mathcal{S}^{d-1}$. 
For scalar $X$, we say $X | \mathcal{G}$ is $\sigma^2$-sub-Gaussian if  $\mathbb{E}[e^{t X} | \mathcal{G}] \leq e^{\sigma^2 t^2 /2}$ for all $t \in \mathbb{R}$. For vector $X$, we say $X | \mathcal{G}$ is $\sigma^2$-sub-Gaussian if $\mathbb{E}[X]=0$, and $u^{\top}X | \mathcal{G}$ is $\sigma^2$-sub-Gaussian for any $u \in \mathcal{S}^{d-1}$.

\section{Problem Setup} \label{sec:problem}

Consider the following stochastic linear system:
\begin{equation}
    X_{t+1} = AX_t + BU_t + W_t, \ t \in \mathbb{N}_0, \quad X_0 = x_0, \label{eqn:open-system}
\end{equation}
where the random sequences $(X_t)_{t \in \mathbb{N}_0}$, $(U_t)_{t \in \mathbb{N}_0}$ and $(W_t)_{t \in \mathbb{N}_0}$ are the states, controls, and disturbances, taking values in $\mathbb{R}^n$, $\mathbb{R}^m$, and $\mathbb{R}^n$ respectively, $x_0 \in \mathbb{R}^n$ is the initial state, and $A \in \mathbb{R}^{n \times n}$ and $B \in \mathbb{R}^{n \times m}$ are the true, unknown, system matrices. For convenience, we let $\theta_* = [A,B]$ denote the true system parameter. This is in contrast to $\hat{\theta}_t$, which denotes the estimated parameter at time $t$ and will be formally defined later. We make the following assumptions on the system in \eqref{eqn:open-system}. 

\begin{assumption} \label{assump:disturbance}
    The disturbance $(W_t)_{t \in \mathbb{N}_0}$ is an i.i.d. sequence that has an unbounded support, and is mean-zero and sub-Gaussian.
\end{assumption}

\begin{assumption} \label{assump:reachable}
    The matrix $A$ satisfies $\Vert A\Vert  \leq 1$, and $(A,B)$ is $\kappa$-step reachable with $\kappa \leq n$, that is, $\text{rank}(\mathcal{R}_{\kappa}(A,B)) = n$, where $\mathcal{R}_{\kappa}(A,B) := \begin{bmatrix} B & AB & \hdots & A^{\kappa-1}B  \end{bmatrix} \in \mathbb{R}^{n \times \kappa m}$. For ease of notation, we denote $\mathcal{R}_*=\mathcal{R}_{\kappa}(A,B)$.
\end{assumption}
\begin{remark}
Note that Assumption \ref{assump:disturbance} is broad enough to handle many different types of disturbance with an unbounded support, including Gaussian distributions. Since $W_t$ is assumed to be sub-Gaussian, its covariance matrix exists, which we denote by $\Sigma_W$. Assumption \ref{assump:reachable} is sufficient for guaranteeing the existence of control policies with bounded control constraints that render the system mean square bounded \cite{chatterjee2012mean}. This gives us hope that an adaptive control strategy is possible.
\end{remark}

Our goal is to formulate an adaptive control policy $(\pi_t)_{t \in \mathbb{N}_0}$ such that $\pi_t$ is a mapping from current and past state and control input data $(X_0,\hdots,X_t,U_0,\hdots,U_{t-1})$ and a randomizaton term $V_t$ to $\mathbb{R}^n$ for $t \in \mathbb{N}_0$. Here, $(V_t)_{t\in\mathbb{N}_0}$ taking values in $\mathbb{R}^{m}$ is an i.i.d. random sequence whose purpose is to excite the system in order to facilitate convergence of parameter estimates. The overall policy needs to be designed so that $|U_t| \leq U_{\text{max}}$ holds where $U_{\text{max}} > 0$ is the control magnitude constraint, whilst provably achieving stochastic stability guarantees on the closed-loop system states $(X_t)_{t \in \mathbb{N}}$ with $U_t = \pi_t(X_0,\hdots,X_t,U_0,\hdots,U_{t-1},V_t)$. Moreover, we require as part of our design that $\pi_t$ does not depend on the true system parameters $(A,B)$.


\section{Method and Main Result} \label{sec:method}


For the purpose of control design, we require knowledge of some $\kappa$ satisfying Assumption \ref{assump:reachable}. Although we can have $\kappa < n$ in many cases when systems have multiple inputs, if it is only known that $(A,B)$ is controllable, $\kappa = n$ is always a valid choice. Our control strategy is summarized in Algorithm \ref{alg:control-algorithm}.



\begin{algorithm}
\caption{Stochastic Adaptive Input-Constrained Control}
\begin{algorithmic}[1]
    \STATE \textbf{Inputs:} $U_{\text{max}} > 0$, $C \in (0,U_{\text{max}})$, $\bar{A}_0 \in \mathbb{R}^{n \times n}$, $\bar{B}_0 \in \mathbb{R}^{n \times m} \backslash \{0\} $, $\kappa \leq n$ \\
    \STATE $D \leftarrow U_{\text{max}} - C $
    \STATE $\tau \leftarrow 0$
    \STATE Measure $X_0$
    \FOR{$\tau = 0,1,\hdots$}
        \STATE Sample $V_{\kappa \tau},\hdots,V_{\kappa(\tau+1)-1}$ satisfying Assumption \ref{assump:exciting-noise}
        \STATE Compute $\bar{U}_{\tau}$ following \eqref{eqn:control-strategy}
        \FOR{$t = \kappa \tau, \hdots, \kappa (\tau + 1) - 1$}
            \STATE Extract $U_t$ from $\bar{U}_{\tau}$ following \eqref{eqn:control-strategy}
            \STATE Apply $U_t$ to \eqref{eqn:open-system}
            \STATE Measure $X_{t+1}$ from \eqref{eqn:open-system}
        \ENDFOR
        \STATE Compute $\bar{\theta}_{\tau+1}$ following \eqref{eqn:parameter-estimate}
    \ENDFOR
\end{algorithmic}\label{alg:control-algorithm}
\end{algorithm}

We now describe our strategy in greater detail. 
For all $\tau \in \mathbb{N}_0$, our sub-sampled control sequence $(\bar{U}_{\tau})_{\tau \in \mathbb{N}_0}$ is given by
\begin{align}
    \underbrace{\begin{bmatrix} U_{\kappa (\tau + 1) - 1 } \\ \vdots \\ U_{\kappa \tau} \end{bmatrix}}_{\bar{U}_{\tau}} = \text{sat}_{D}(-g(\bar{A}_{\tau},\bar{B}_{\tau}) X_{\kappa \tau} ) + \begin{bmatrix} V_{\kappa(\tau+1)-1} \\ \vdots \\ V_{\kappa \tau} \end{bmatrix} \label{eqn:control-strategy},
\end{align}
where $g(A',B') := \mathcal{R}_{\kappa}(A',B')^{\dagger} (A')^{\kappa}$, and $V_{t}$ is an additive \textit{excitation term} sampled so that Assumption \ref{assump:exciting-noise} is satisfied.


\begin{remark}
    When the true system parameter is used for control --- i.e. $\text{sat}_D(-g(A,B)x)$ is our control law --- the control policy can be described as a saturated deadbeat controller for the dynamical system obtained when \eqref{eqn:open-system} is sampled with periodicity $\kappa$. A similar controller structure was shown to achieve mean square boundedness in \cite{ramponi2010attaining}, except the saturation and and linear gain is switched. We opt for our order since our CE control strategy involves using estimates $(\bar{A}_{\tau},\bar{B}_{\tau})$ rather than $(A,B)$, and our estimates can be unbounded leading to unbounded gain. Applying saturation afterwards guarantees our controls satisfy $U_{\text{max}}$.
\end{remark}

\begin{assumption} \label{assump:exciting-noise}
    The random sequence $(V_t)_{t \in \mathbb{N}_0}$ taking values in $\mathbb{R}^m$ is i.i.d. Additionally, $| V_t | \leq C$ holds, and $V_i$,$W_j$ are independent for all $i,j \in \mathbb{N}_0$.
\end{assumption}
\begin{remark} 
Assumption \ref{assump:exciting-noise} restricts the magnitude of the additive noise $V_t$, which is required for satisfying control input constraints. We denote the covariance matrix for $V_t$ by $\Sigma_V$.
\end{remark}

Let $(\hat{\theta}_t)_{t \in \mathbb{N}}$ taking values in $\mathbb{R}^{n \times (n+m)}$ be the sequence of estimates of the true parameter $\theta_*$ at time obtained via OLS estimation:
\begin{equation}
    \hat{\theta}_{t} \in \arg \min_{\theta \in \mathbb{R}^{n \times (n+m)}} \sum_{s=1}^t \norm{X_s - \theta Z_s}_2^2, \label{eqn:parameter-estimate}
\end{equation}
where $(Z_t)_{t \in \mathbb{N}}$ taking values in $\mathbb{R}^{n + m}$ is the state-input data sequence, i.e. $Z_t = (X_{t-1},U_{t-1})$.
Let $(\bar{\theta}_{\tau})_{\tau \in \mathbb{N}}$ be the sequence of sub-sampled parameter estimates satisfying $\bar{\theta}_{\tau} = \hat{\theta}_{\kappa \tau}$, and let $(\bar{A}_{\tau})_{\tau \in \mathbb{N}},(\bar{B}_{\tau})_{\tau \in \mathbb{N}}$, be the sub-sampled estimates of $A$ and $B$ respectively, satisfying $[\bar{A}_{\tau},\bar{B}_{\tau}] = \bar{\theta}_{\tau}$. Note, the initial parameter estimate $(\bar{A}_0,\bar{B}_0)$ is not computed via OLS, but instead freely chosen by the designer in $\mathbb{R}^{n \times n}\times\mathbb{R}^{n \times m}\backslash \{0 \}$. Additionally, $C$ is a user-specified excitation constant satisfying $0 < C < U_{\text{max}}$ which determines the size of the excitation term, and $D = U_{\text{max}}-C$ is the magnitude of the certainty-equivalent component of the control policy.

Under this control strategy, the sub-sampled state sequence $(\bar{X}_{\tau})_{\tau \in \mathbb{N}_0}$, satisfying $\bar{X}_{\tau} = X_{\kappa \tau}$ evolves via the following closed loop system:
\begin{align} 
    \bar{X}_{\tau+1} &= AX_{\kappa(\tau+1)-1} + B U_{\kappa(\tau+1)-1} + W_{\kappa(\tau+1)-1}\\
    &=A^{\kappa}\bar{X}_{\tau} + \mathcal{R}_{\kappa}(A,B)\text{sat}_D(-g(\bar{A}_{\tau},\bar{B}_{\tau}) \bar{X}_{\tau}) + \bar{V}_{\tau} \\&\quad + \bar{W}_{\tau} \label{eqn:closed-loop-system}
\end{align}
for all $\tau \in \mathbb{N}_0$, where $\bar{V}_{\tau} = \mathcal{R}_* [V_{\kappa(\tau+1)-1}^{\top},\hdots, V_{\kappa \tau}^{\top}]^{\top}$, and $\bar{W}_\tau = \mathcal{R}_* [W_{\kappa(\tau+1)-1}^{\top},\hdots, W_{\tau}^{\top} ]^{\top}$.
Next, we let $M_{\bar{W}} = \ln( \mathbb{E}[e^{|\bar{W}_{\tau}|}] )$, and $M_{\bar{V}} = \ln( \mathbb{E}[e^{|\bar{V}_{\tau}|}] )$, whose existence are guaranteed from Assumptions \ref{assump:disturbance} and \ref{assump:exciting-noise}. We make the following assumption on their relationship with $D$ and $\mathcal{R}_*$.
\begin{assumption} \label{assump:large-sat-level}
    The saturation level $D$, $M_{\bar{V}}$, and $M_{\bar{W}}$, satisfy $\frac{D}{\Vert  \mathcal{R}_*^{\dagger} \Vert  } > M_{\bar{V}} + M_{\bar{W}}$.
\end{assumption}
\begin{remark}
    Assumption \ref{assump:large-sat-level} can be interpreted as saying that the magnitude of the certainty equivalent component of our controls is sufficiently large, such that it overpowers the statistics of the disturbance and the injected noise.
\end{remark}

Next, we define the block martingale small-ball (BMSB) condition, and assume that our state-input data sequence satisfies it.
\begin{definition} \label{def:bmsb}
(Martingale Small-Ball \cite[Definition 2.1]{simchowitz2018learning}) Given a process $(Z_t)_{t \geq 1}$ taking values in $\mathbb{R}^d$, we say that it satisfies the $(k,\Gamma_{\text{sb}},p)$-block martingale small-ball (BMSB) condition for $k \in \mathbb{N}$, $\Gamma_{\text{sb}} \succ 0$, and $p > 0$, if, for any $\zeta \in \mathcal{S}^{d-1}$ and $j \geq 0$, $\frac{1}{k} \sum_{i=1}^k P (\abs{\zeta^{\top}Z_{j+i}} \geq \sqrt{\zeta^{\top} \Gamma_{\text{sb}} \zeta} \mid \mathcal{F}_j) \geq p$ holds. Here, $(\mathcal{F}_t)_{t \geq 1}$ is any filtration which $(\zeta^{\top} Z_{t})_{t \geq 1}$ is adapted to.
\end{definition}

\begin{assumption} \label{assump:bmsb}
    The constants $k > 0$, $\Gamma_{\text{sb}} \succ 0$, and $p > 0$ are such that the state-input data sequence $(Z_t)_{t \in \mathbb{N}}$ satisfies the $(k,\Gamma_{\text{sb}},p)$-BMSB condition.
\end{assumption}

\begin{remark}
    The BMSB condition in Definition \ref{def:bmsb} can be used to establish that persistency of excitation holds, which is important for deriving high probability bounds on the estimation error (see \cite{tsiamis2022statistical}). By supposing $(Z_t)_{t \in \mathbb{N}}$ satisfies the BMSB condition, in Assumption \ref{assump:bmsb}, we are saying that conditioned on past $Z_t$, the averaged distributions of future $Z_t$ are sufficiently spread. We proved that it holds in the scalar case \cite{siriya2022learning}, and leave the vector case to future work. 
\end{remark}

We now present the main result of this paper, on the existence of a high probability stability bound for our learning-based adaptive control scheme.
\begin{theorem} \label{theorem:simplified-bound-v3}
    Suppose Assumptions \ref{assump:disturbance}, \ref{assump:reachable}, \ref{assump:exciting-noise}, \ref{assump:large-sat-level}, and \ref{assump:bmsb} hold. There exist $\lambda \in (0,1)$, $N_1,N_3>0$, and $N_2:\mathbb{R}^n \rightarrow \mathbb{R}_{\geq 0}$ such that
    \begin{align}
        |\bar{X}_{\tau}| &\leq \ln(N_2(x_0)(2/\delta)^{N_1}\lambda^{\tau} + N_3) \\
        & \quad + \ln(2/\delta) \label{eqn:theorem-simplified-bound}
    \end{align}
    holds with probability at least $1-\delta$ for all $x_0 \in \mathbb{R}^n$, $\delta \in (0,1)$ and $\tau \in \mathbb{N}_0$.
\end{theorem}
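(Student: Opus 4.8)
The plan is to split the argument into a \emph{learning component} that bounds $\Vert\bar{\theta}_\tau-\theta_*\Vert$ uniformly in $\tau$ with high probability, and a \emph{control component} that, conditionally on the estimates being accurate, shows the sub-sampled closed loop \eqref{eqn:closed-loop-system} behaves like a noise-driven saturated deadbeat loop and hence obeys an exponential supermartingale-type drift. The two are then glued with a union bound (splitting $\delta$ into $\delta/2+\delta/2$, which is the source of the $2/\delta$ factor) and Markov's inequality applied to $e^{|\bar{X}_\tau|}$ (which is the source of the outer $\ln(2/\delta)$ term).

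\emph{Learning component.} Writing \eqref{eqn:open-system} as $X_s=\theta_*Z_s+W_{s-1}$, I would first check that $(W_{s-1})$ is a conditionally mean-zero, conditionally sub-Gaussian martingale-difference sequence relative to a filtration to which $(Z_s)$ is adapted; this uses Assumption \ref{assump:disturbance} and the independence of $W$ from the exciting noise in Assumption \ref{assump:exciting-noise}. With Assumption \ref{assump:bmsb} supplying the block-martingale small-ball property of $(Z_t)$, I would invoke the finite-sample least-squares identification bounds of \cite{simchowitz2018learning}: the small-ball lemma gives, after a deterministic burn-in, a linear-in-$t$ lower bound on $\lambda_{\min}\big(\sum_{s\le t}Z_sZ_s^\top\big)$ with high probability, and the self-normalized tail bound controls $\Vert\sum_{s\le t}Z_sW_{s-1}^\top\Vert$ without any prior bound on $\theta_*$ (the $\log\det$ term is absorbed using the sub-exponential growth of the state, which holds since $\Vert A\Vert\le1$ and $|U_t|\le U_{\text{max}}$). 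Combining these yields, for every $\delta_1\in(0,1)$, an event $\mathcal{E}_1(\delta_1)$ with $P(\mathcal{E}_1(\delta_1))\ge1-\delta_1$ on which $\Vert\hat{\theta}_t-\theta_*\Vert\le\epsilon(t,\delta_1)$ for all $t\ge T_0(\delta_1)$, where $\epsilon(t,\delta_1)\to0$ and $T_0(\delta_1)$ grows like $\log(1/\delta_1)$ up to constants and lower-order terms. Sub-sampling transfers this to $\bar{\theta}_\tau$.

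\emph{Control component.} Here I would prove a deterministic ``approximate deadbeat'' estimate: there exist $\bar{\varepsilon}_0>0$, $c_1\in\big(M_{\bar{V}}+M_{\bar{W}},\,D/\Vert\mathcal{R}_*^\dagger\Vert\big)$ (a nonempty interval by Assumption \ref{assump:large-sat-level}), and $R_0,R_1>0$ such that $\Vert\theta-\theta_*\Vert\le\bar{\varepsilon}_0$ implies $|A^\kappa x+\mathcal{R}_*\,\text{sat}_D(-g(\theta)x)|\le|x|-c_1$ for $|x|\ge R_0$ and $\le R_1$ for $|x|<R_0$. The backbone is the exact identity $\mathcal{R}_*g(\theta_*)=A^\kappa$ together with the true-parameter bound $|A^\kappa x+\mathcal{R}_*\,\text{sat}_D(-g(\theta_*)x)|\le|x|-D/\Vert\mathcal{R}_*^\dagger\Vert$ on the saturation region, which uses $\Vert A^\kappa\Vert\le1$ (Assumption \ref{assump:reachable}) and $|g(\theta_*)x|\le\Vert\mathcal{R}_*^\dagger\Vert\,|A^\kappa x|$; the map $\theta\mapsto g(\theta)$ is continuous near $\theta_*$ since $\mathcal{R}_*$ has full row rank, so the Moore--Penrose inverse is continuous there. \textbf{The step I expect to be the main obstacle} is controlling the saturated-control perturbation \emph{uniformly in} $|x|$: the naive bound contributes an error term growing linearly in $|x|$. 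I would instead argue that, in the regime where $|A^\kappa x|$ is comparable to $|x|$, the unit vectors $g(\theta)x/|g(\theta)x|$ and $g(\theta_*)x/|g(\theta_*)x|$ differ by $O(\Vert\theta-\theta_*\Vert)$ \emph{independently of} $|x|$, so the saturated control direction moves by only a bounded amount, whereas when $|A^\kappa x|$ is small relative to $|x|$ the deadbeat term is negligible and a crude bound suffices; Assumption \ref{assump:large-sat-level} provides exactly the slack to absorb this error while keeping $c_1>M_{\bar{V}}+M_{\bar{W}}$. Since $\{\Vert\bar{\theta}_\tau-\theta_*\Vert\le\bar{\varepsilon}_0\}$ is $\mathcal{F}_\tau$-measurable and $\bar{V}_\tau,\bar{W}_\tau$ are independent of $\mathcal{F}_\tau$ and of each other with $\ln\E[e^{|\bar{V}_\tau|}]=M_{\bar{V}}$, $\ln\E[e^{|\bar{W}_\tau|}]=M_{\bar{W}}$, combining the estimate above with $|\bar{X}_{\tau+1}|\le|A^\kappa\bar{X}_\tau+\mathcal{R}_*\text{sat}_D(-g(\bar{A}_\tau,\bar{B}_\tau)\bar{X}_\tau)|+|\bar{V}_\tau|+|\bar{W}_\tau|$ yields, pointwise on that event, the conditional drift $\E[e^{|\bar{X}_{\tau+1}|}\mid\mathcal{F}_\tau]\le\rho\,e^{|\bar{X}_\tau|}+c_2$ with $\rho:=e^{-c_1+M_{\bar{V}}+M_{\bar{W}}}\in(0,1)$ and $c_2:=e^{R_1+M_{\bar{V}}+M_{\bar{W}}}$.

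\emph{Gluing.} Taking $\delta_1=\delta/2$ and $\tau_1:=\lceil T_0(\delta/2)/\kappa\rceil$ large enough that $\epsilon(\kappa\tau,\delta/2)\le\bar{\varepsilon}_0$ for $\tau\ge\tau_1$, on $\mathcal{E}_1(\delta/2)$ the drift holds for all $\tau\ge\tau_1$, so $U_\tau:=e^{|\bar{X}_\tau|}\prod_{s=\tau_1}^{\tau-1}\bm{1}_{\{\Vert\bar{\theta}_s-\theta_*\Vert\le\bar{\varepsilon}_0\}}$ satisfies $\E[U_{\tau+1}\mid\mathcal{F}_\tau]\le\rho U_\tau+c_2$, hence $\E[U_\tau]\le\rho^{\tau-\tau_1}\E[e^{|\bar{X}_{\tau_1}|}]+c_2/(1-\rho)$. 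The crude propagation bound $|\bar{X}_{s+1}|\le|\bar{X}_s|+\Vert\mathcal{R}_*\Vert D+|\bar{V}_s|+|\bar{W}_s|$ (valid as $\Vert A^\kappa\Vert\le1$, $|\text{sat}_D(\cdot)|\le D$) gives $\E[e^{|\bar{X}_{\tau_1}|}]\le e^{|x_0|}K^{\tau_1}$ with $K:=e^{\Vert\mathcal{R}_*\Vert D+M_{\bar{V}}+M_{\bar{W}}}$, so $\E[U_\tau]\le e^{|x_0|}(K/\rho)^{\tau_1}\rho^\tau+c_2/(1-\rho)$; since $\tau_1$ is logarithmic in $1/\delta$, $(K/\rho)^{\tau_1}\le C_0(2/\delta)^{N_1}$ for suitable constants. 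Finally $e^{|\bar{X}_\tau|}\bm{1}_{\mathcal{E}_1(\delta/2)}\le U_\tau$, so Markov's inequality with the remaining $\delta/2$ budget gives $e^{|\bar{X}_\tau|}\le(2/\delta)\E[U_\tau]$ with probability at least $1-\delta$; taking logarithms, factoring out $(2/\delta)$, and setting $\lambda:=\rho$, $N_2(x_0):=e^{|x_0|}C_0$, $N_3:=c_2/(1-\rho)$ gives \eqref{eqn:theorem-simplified-bound} for $\tau\ge\tau_1$. The finitely many indices $\tau<\tau_1$ are covered by the same propagation bound and Markov's inequality, absorbing $\rho^{-\tau}\le\rho^{-\tau_1}$ (again a power of $2/\delta$) into $N_1$ and $N_2$; enlarging the constants to dominate both ranges yields the bound for all $\tau\in\mathbb{N}_0$.
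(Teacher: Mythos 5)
Your proposal is correct and takes essentially the same route as the paper: a uniform-in-time OLS error bound built on the BMSB machinery of \cite{simchowitz2018learning} (Proposition \ref{prop:estim-bound-all-time}), a uniform-in-state perturbation analysis of the saturated deadbeat loop giving a geometric drift for $e^{|\bar{X}_\tau|}$ under small estimation error (Lemma \ref{lemma:control-error-simplified}, Proposition \ref{prop:geometric-stability}, Lemma \ref{lemma:stability-deterministic-param}), and the same gluing via a $\delta/2+\delta/2$ split, Markov's inequality, a crude propagation bound for the burn-in phase, and a $\ln(1/\delta)$ bound on the burn-in time absorbed into $(2/\delta)^{N_1}$ (Theorem \ref{theorem:stability-bound-all-epsilon}, Lemma \ref{lemma:bound-time-stable}). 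Your supermartingale $U_\tau$ with the indicator product is a repackaging of the induction in Proposition \ref{prop:stability-random-varying-param}, and your case split on $|A^{\kappa}x|$ versus $|x|$ mirrors the compact-set/unit-vector argument underlying Lemma \ref{lemma:control-error-simplified}.
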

Theorem \ref{theorem:simplified-bound-v3} says that, for any initial state $x_0 \in \mathbb{R}^n$ and sub-sampled time $\tau \in \mathbb{N}_0$, with probability at least $1-\delta$, $\bar{X}_{\tau}$ will be in a ball around the origin with size equal to the right hand side of \eqref{eqn:theorem-simplified-bound}. In particular, we can interpret the result as a stability bound since the right hand side is uniformly bounded by $\ln(N_2(x_0)(2/\delta)^{N_1}+N_3) + \ln(2/\delta) $ over all $\tau \in \mathbb{N}_0$, and will asymptotically converge to $\ln(N_3) + \ln(2/\delta)$ regardless of $x_0$. Although the structure of this bound is non-standard, it can show up when bounding systems which converge to a set at a linear rate. 


\section{Proof of Main Result} \label{sec:results}

In this section, we build towards the proof of Theorem \ref{theorem:simplified-bound-v3}. In Section \ref{sec:estimation-error-bounds}, we establish a high probability upper bound on the parameter estimation error in the form of Proposition \ref{prop:estim-bound-all-time}. In Section \ref{sec:stability-bounds}, we provide in Proposition \ref{prop:stability-random-varying-param} a probabilistic upper bound on the norm of the sub-sampled states which relies on a given estimation error bound. This result is subsequently combined with Proposition \ref{prop:estim-bound-all-time} to derive a family of high probability stability bounds in Theorem \ref{theorem:stability-bound-all-epsilon}. Theorem \ref{theorem:simplified-bound-v3} then follows as a consequence. Although we provide sketches of the key ideas for proving Proposition \ref{prop:estim-bound-all-time}, Lemma \ref{lemma:control-error-simplified}, Lemma \ref{lemma:stability-deterministic-param}, and Lemma \ref{lemma:bound-time-stable}, we defer the formal proofs to the supplementary materials.

\subsection{Estimation Error Bound} \label{sec:estimation-error-bounds}
We first provide Proposition \ref{prop:estim-bound} from \cite[Theorem~2.4]{simchowitz2018learning}. It gives a high probability upper bounds on the estimation error for parameter estimates obtained by applying OLS to a general time-series with linear responses, and can successfully be applied when the BMSB condition in Definition~\ref{def:bmsb} is satisfied, and high-probability upper bounds on $\sum_{t=1}^T Z_t Z_t^{\top}$ can be found. 
\begin{proposition} \label{prop:estim-bound}
\cite[Theorem 2.4]{simchowitz2018learning} Consider some matrix $\theta_* \in \mathbb{R}^{n \times d}$. Fix $T \in \mathbb{N}$, $\delta \in (0,1)$, and $0 \prec \Gamma_{\text{sb}} \preceq \overline{\Gamma}$. Suppose $(Z_t,Y_t)_{t =1}^T \in ( \mathbb{R}^d \times \mathbb{R}^n )^T$ is a random sequence such that (a) $Y_t = \theta_* Z_t + \eta_t$ for $t \leq T$, where $\eta_t \mid \mathcal{F}_{t-1}$ is mean-zero and $\sigma^2$-sub-Gaussian with $\mathcal{F}_t$ denoting the sigma-algebra generated by $\eta_0, \hdots, \eta_t, Z_1, \hdots, Z_t$, (b) $Z_1,\hdots,Z_T$ satisfies the $(k, \Gamma_{\text{sb}}, p)$-BMSB condition, and (c) $P(\sum_{t=1}^T Z_t Z_t^{\top} \not \preceq T \overline{\Gamma}) \leq \delta$ holds. Then if $T \geq \frac{10 k}{p^2}\Big(\ln\Big(\frac{1}{\delta}\Big)+2d\ln(10/p)+\ln \det (\overline{\Gamma}\Gamma_{\text{sb}}^{-1})\Big)$, we have
\begin{align}
    &P\Bigg(\norm{\hat{\theta}_T-\theta_*}_{2} > \frac{90 \sigma}{p} \\
    &\times \sqrt{\frac{ n + d \ln \frac{10}{p} + \ln \det \overline{\Gamma} \Gamma_{\text{sb}}^{-1} + \ln \big(\frac{1}{\delta}\big) }{T \lambda_{\text{min}}(\Gamma_{\text{sb}}) }}\Bigg) \leq 3 \delta, \label{eqn:estim-bound-conclusion}
\end{align}
where $\hat{\theta}_T = (\bm{Z}_i^{\top} \bm{Z}_T)^{\dagger}\bm{Z}_T^{\top} \bm{Y}_T \in \arg \min_{\theta \in \mathbb{R}^{2}} \sum_{t=1}^T \Vert Y_t - \theta^{\top} Z_t\Vert^2$, $\bm{Z}_T=[Z_1,\hdots,Z_T]^{\top}$, $\bm{Y}_T = [Y_1,\hdots,Y_T]^{\top} $.
\end{proposition}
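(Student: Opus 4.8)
The plan is to note that the final statement, Proposition \ref{prop:estim-bound}, is quoted verbatim from \cite[Theorem 2.4]{simchowitz2018learning}, so rather than reproving it from scratch I would organize the proof around the two standard ingredients of self-normalized least-squares analysis: a lower bound on the empirical Gram matrix $\sum_{t=1}^T Z_t Z_t^\top$ coming from the BMSB condition, and an upper bound on the self-normalized martingale $\sum_{t=1}^T Z_t \eta_t^\top$ coming from the sub-Gaussian noise. First I would write the least-squares error in self-normalized form. Since $\hat\theta_T - \theta_* = (\bm{Z}_T^\top \bm{Z}_T)^\dagger \bm{Z}_T^\top \bm{E}_T$ where $\bm{E}_T = [\eta_1,\dots,\eta_T]^\top$ collects the noise, for any fixed direction we can control $\norm{\hat\theta_T - \theta_*}_2$ by $\lambda_{\text{min}}^{-1/2}(\sum_t Z_t Z_t^\top)$ times the norm of the self-normalized noise term. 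The goal is then to bound each factor on a high-probability event and intersect.

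Second I would handle the Gram-matrix lower bound. The key step is to convert the $(k,\Gamma_{\text{sb}},p)$-BMSB condition of Definition \ref{def:bmsb} into a quantitative statement that, with high probability, $\sum_{t=1}^T Z_t Z_t^\top \succeq \frac{p^2 T}{8}\Gamma_{\text{sb}}$ once $T$ is large enough. This is the heart of the small-ball method: for a fixed unit direction $\zeta$, the BMSB condition guarantees that in each block of length $k$ at least a $p$-fraction of the $\abs{\zeta^\top Z_{j+i}}$ are bounded below by $\sqrt{\zeta^\top \Gamma_{\text{sb}} \zeta}$ in conditional probability, and a Paley--Zygmund-type argument over the $\lfloor T/k\rfloor$ blocks, combined with a martingale (Azuma/Freedman) concentration, gives $\sum_t (\zeta^\top Z_t)^2 \gtrsim p^2 T\, \zeta^\top \Gamma_{\text{sb}}\zeta$ with failure probability exponentially small in $T p^2 / k$. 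Third I would upgrade this from a single $\zeta$ to a uniform statement over the unit sphere by a standard $\varepsilon$-net argument on $\mathcal{S}^{d-1}$ (paying the $2d\ln(10/p)$ and $\ln\det(\overline\Gamma \Gamma_{\text{sb}}^{-1})$ terms, with the upper bound assumption (c), $\sum_t Z_t Z_t^\top \preceq T\overline\Gamma$, used to discretize and to control the covering number); this is precisely where the sample-complexity threshold $T \geq \frac{10k}{p^2}(\ln(1/\delta) + 2d\ln(10/p) + \ln\det(\overline\Gamma\Gamma_{\text{sb}}^{-1}))$ enters.

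Fourth, for the noise factor I would invoke the self-normalized martingale bound for vector-valued $\sigma^2$-sub-Gaussian sequences (the matrix/ellipsoidal Azuma inequality of Abbasi-Yadkori-type), which states that with probability at least $1-\delta$, $\norm{(\sum_t Z_t Z_t^\top + \Gamma_{\text{sb}})^{-1/2}\sum_t Z_t \eta_t^\top}_2$ is at most a constant times $\sigma\sqrt{n + d\ln(10/p) + \ln\det(\overline\Gamma\Gamma_{\text{sb}}^{-1}) + \ln(1/\delta)}$; here the filtration $\mathcal{F}_t$ generated by $\eta_0,\dots,\eta_t,Z_1,\dots,Z_t$ from hypothesis (a) is exactly what makes $\eta_t$ a conditionally sub-Gaussian martingale difference adapted appropriately. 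Finally I would combine the two events: on their intersection, dividing the self-normalized noise bound by $\sqrt{\lambda_{\text{min}}(\sum_t Z_t Z_t^\top)} \geq \sqrt{p^2 T \lambda_{\text{min}}(\Gamma_{\text{sb}})/8}$ yields the stated bound \eqref{eqn:estim-bound-conclusion}, with the constant $90/p$ absorbing the numerical factors and the factor $3\delta$ coming from a union bound over the (at most three) failure events. The main obstacle is the uniform Gram-matrix lower bound in the third step: carefully executing the covering argument so that the discretization error is dominated by a fraction of the small-ball lower bound, while keeping the threshold on $T$ and the log-determinant terms matching the stated constants, is the delicate part, whereas the self-normalized noise bound and the final assembly are comparatively mechanical. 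Since the result is cited directly, I would present these as the key steps and defer the constant-chasing to \cite{simchowitz2018learning}.
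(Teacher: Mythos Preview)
Your proposal is appropriate: the paper does not prove Proposition~\ref{prop:estim-bound} at all but simply quotes it verbatim from \cite[Theorem~2.4]{simchowitz2018learning}, and you correctly recognize this and defer the details there. Your sketch of the underlying argument (small-ball lower bound on the Gram matrix via BMSB plus a covering argument, combined with a self-normalized sub-Gaussian martingale bound on the noise term) accurately reflects the structure of the proof in \cite{simchowitz2018learning}, so there is nothing to correct.
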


Under Assumptions \ref{assump:disturbance}, \ref{assump:reachable}, \ref{assump:exciting-noise}, and \ref{assump:bmsb}, we provide Proposition \ref{prop:estim-bound-all-time}  --- a high probability error bound on our parameter estimates from \eqref{eqn:parameter-estimate}. In particular, it says that, with probability at least $1-\delta$, the function $e(T,\delta)$ will bound the estimation error $\Vert \hat{\theta}_T - \theta_* \Vert$ over \textit{all} $T$ greater than $T_0(\delta,x_0)$. Alongside being specific to the parameter estimates in our problem, the key difference between this result and Proposition \ref{prop:estim-bound} is that the bound in Proposition~\ref{prop:estim-bound} holds with high probability for a specific $T$.
\begin{proposition} \label{prop:estim-bound-all-time}
Suppose Assumptions \ref{assump:disturbance}, \ref{assump:reachable}, \ref{assump:exciting-noise}, and \ref{assump:bmsb} hold. Consider the sequence of parameter estimates $(\hat{\theta}_t)_{t \in \mathbb{N}}$ from \eqref{eqn:parameter-estimate}. Fix $\delta \in (0,1)$. Then,
\begin{align}
    &P(\Vert \hat{\theta}_T-\theta_*\Vert _{2} \leq e(T,\delta) \text{ for all } T \geq T_0(\delta,x_0)) \geq 1 - \delta,
 \end{align}
where
\begin{align}
    &e(T,\delta,x_0) := \frac{90 \sqrt{\lambda_{\text{max}}(\Sigma_W)}}{p} \big ( (T \lambda_{\text{min}}(\Gamma_{\text{sb}}) )^{-1}  ( n \\
    & \ + (n + m) \ln \frac{10}{p} + \ln \det ( \frac{3(-1 + \pi^2/6) (T+1)^2}{\delta} \\
    & \ \times (4 |x_0|^2 + 2(D^2 + \trace{\Sigma_V}) +4(\Vert B\Vert^2 (D^2 + \trace{\Sigma_V}) \\
    & \ +  \trace{\Sigma_W} ) T^2 + \lambda_{\text{max}}(\Gamma_{\text{sb}})) \Gamma_{\text{sb}}^{-1}) \\
    & \ + \ln \big(\frac{3(-1 + \pi^2/6)(T+1)^2}{\delta}\big) ) \big )^{1/2}, \label{def:e} \\
    &T_0(\delta,x_0) \\
    &:= \min \{T_0' \in \mathbb{N} \mid T \geq \frac{10 k}{p^2}\Big( 2(n+m)\ln(10/p) \\
    & \ +\ln \det (\frac{3(-1 + \pi^2/6) (T+1)^2}{\delta} \times (4 |x_0|^2 + 2(D^2 \\
    & \ + \trace{\Sigma_V}) +4(\Vert B\Vert^2 (D^2 + \trace{\Sigma_V})  +  \trace{\Sigma_W} ) T^2 + \\
    & \ \lambda_{\text{max}}(\Gamma_{\text{sb}})) \Gamma_{\text{sb}}^{-1}) +\ln\Big(\frac{3(-1 + \pi^2/6)(T+1)^2}{\delta}\Big) \Big) \\
    & \ \text{ for all } T \geq T_0' \}. \label{def:T_0}
\end{align}
\end{proposition}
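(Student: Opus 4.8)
The plan is to upgrade the single‑time guarantee of Proposition~\ref{prop:estim-bound} to a uniform‑in‑$T$ guarantee by a union bound over $T$ whose per‑time failure probabilities $\delta_T$ are summable to $\delta$. First I would cast \eqref{eqn:parameter-estimate} into the regression form required by Proposition~\ref{prop:estim-bound}: from \eqref{eqn:open-system}, $X_s = \theta_* Z_s + W_{s-1}$ with $Z_s = (X_{s-1},U_{s-1})$, so the response noise is $\eta_s = W_{s-1}$ and $d = n+m$. Hypothesis (a) holds because $W_{s-1}$ is independent of the $\sigma$‑algebra generated by $x_0$, $W_0,\dots,W_{s-2}$ and all of the $V_j$ (Assumptions~\ref{assump:disturbance} and~\ref{assump:exciting-noise}: $W$ is i.i.d. and $W\perp V$), and that $\sigma$‑algebra contains $\eta_1,\dots,\eta_{s-1}$ and $Z_1,\dots,Z_{s-1}$; hence $\eta_s$ is, conditionally on the natural filtration, mean‑zero and sub‑Gaussian with parameter $\lambda_{\text{max}}(\Sigma_W)$. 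Hypothesis (b) is exactly Assumption~\ref{assump:bmsb}.

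The substance is hypothesis (c): for each $T$ I need $\overline{\Gamma}_T \succeq \Gamma_{\text{sb}}$ and $\delta_T$ with $P(\sum_{t=1}^T Z_t Z_t^{\top} \not\preceq T\overline{\Gamma}_T) \le \delta_T$. Since $\sum_{t=1}^T Z_t Z_t^{\top} \preceq (\sum_{t=1}^T |Z_t|^2)\mathbf{I}$, it suffices to bound $\E[\sum_{t=1}^T |Z_t|^2] = \sum_{s=0}^{T-1}(\E|X_s|^2 + \E|U_s|^2)$ and apply Markov's inequality. The two structural facts that make this work are: (i) the saturation in \eqref{eqn:control-strategy} caps the certainty‑equivalent part at level $D$ no matter how large the (possibly unbounded) estimated gain $g(\bar{A}_\tau,\bar{B}_\tau)$ is, so $|U_s| \le D + C = U_{\text{max}}$ deterministically and $\E|U_s|^2 \le 2(D^2 + \trace{\Sigma_V})$; and (ii) unrolling \eqref{eqn:open-system} with $\Vert A\Vert \le 1$ (Assumption~\ref{assump:reachable}) gives $|X_s| \le |x_0| + \sum_{j=0}^{s-1}(\Vert B\Vert |U_j| + |W_j|)$, so that after Cauchy--Schwarz and taking expectations one obtains $\E|X_s|^2 \le 4|x_0|^2 + 4(\Vert B\Vert^2(D^2 + \trace{\Sigma_V}) + \trace{\Sigma_W})T^2$ uniformly for $s \le T$. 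Summing shows $\E[\sum_{t=1}^T |Z_t|^2] \le T\,\beta_T(x_0)$, where $\beta_T(x_0)$ is precisely the bracketed quantity multiplying $\Gamma_{\text{sb}}^{-1}$ inside $\ln\det$ in \eqref{def:e} (the extra $\lambda_{\text{max}}(\Gamma_{\text{sb}})$ term is slack that also ensures $\overline{\Gamma}_T \succeq \Gamma_{\text{sb}}$). Markov's inequality at threshold $T\beta_T(x_0)/\delta_T$ then yields (c) with $\overline{\Gamma}_T = (\beta_T(x_0)/\delta_T)\mathbf{I}$.

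Finally I would set $\delta_T := \delta/\big(3(\pi^2/6 - 1)(T+1)^2\big)$, so that $\sum_{T\ge 1} 3\delta_T = \delta$ using $\sum_{T \ge 1}(T+1)^{-2} = \pi^2/6 - 1$. For each $T \ge T_0(\delta,x_0)$ the sample‑size requirement of Proposition~\ref{prop:estim-bound} holds with $(\delta_T, \overline{\Gamma}_T)$ --- this is exactly how \eqref{def:T_0} is defined, and $T_0(\delta,x_0)$ is finite because the right‑hand side there grows only like $\ln T$ --- so Proposition~\ref{prop:estim-bound} gives $P(\Vert \hat{\theta}_T - \theta_* \Vert_2 > e(T,\delta,x_0)) \le 3\delta_T$, where substituting $\overline{\Gamma}_T$, $d = n+m$, $\sigma^2 = \lambda_{\text{max}}(\Sigma_W)$ and $\delta_T$ into \eqref{eqn:estim-bound-conclusion} reproduces $e(T,\delta,x_0)$ of \eqref{def:e}. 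A union bound over $T \ge T_0(\delta,x_0)$ then shows the failure event $\bigcup_{T \ge T_0(\delta,x_0)}\{\Vert \hat{\theta}_T - \theta_* \Vert_2 > e(T,\delta,x_0)\}$ has probability at most $\sum_{T \ge 1} 3\delta_T = \delta$, which is the claim. The main obstacle is step (c): obtaining a polynomial‑in‑$T$ moment bound on the Gram matrix $\sum_{t=1}^T Z_t Z_t^{\top}$ despite the control gain $g(\bar{A}_\tau,\bar{B}_\tau)$ being unbounded --- resolved by noting that saturation makes $(U_t)$ uniformly bounded and $\Vert A\Vert \le 1$ keeps the state's $L^2$ growth at most linear in time.
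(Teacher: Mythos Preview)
Your proposal is correct and follows essentially the same route as the paper's own proof: verify hypotheses (a)--(c) of Proposition~\ref{prop:estim-bound} with $\overline{\Gamma}_T$ a scalar multiple of the identity obtained via Markov's inequality on $\sum_t |Z_t|^2$ (using the saturation to bound $\mathbb{E}|U_t|^2$ and $\Vert A\Vert\le 1$ to bound $\mathbb{E}|X_t|^2$ polynomially in $T$), then set the per-$T$ failure probability to $\delta_T=\delta/\big(3(\pi^2/6-1)(T+1)^2\big)$ and union-bound over $T\ge T_0(\delta,x_0)$. The paper's proof differs only in cosmetic details (it phrases the Gram-matrix bound via $\lambda_{\max}\le\tr$ rather than $\sum Z_tZ_t^\top\preceq(\sum|Z_t|^2)\mathbf{I}$, and invokes the AM--QM inequality where you say Cauchy--Schwarz).
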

We defer the full proof of Proposition \ref{prop:estim-bound-all-time} to the supplementary materials, but describe the key ideas here. We first establish that the conditions for Proposition \ref{prop:estim-bound} hold by treating $Z_t \leftarrow (X_{t-1},U_{t-1})$ as the covariates and $Y_t \leftarrow X_t$ as the response. The BMSB condition holds by Assumption \ref{assump:bmsb}, and $P(\sum_{t=1}^T Z_t Z_t^{\top} \not \preceq T \overline{\Gamma}) \leq \delta$ is established with $\overline{\Gamma} \leftarrow 4 |x_0|^2 + 2(D^2  + \trace{\Sigma_V}) +4(\Vert B\Vert^2 (D^2 + \trace{\Sigma_V})  +  \trace{\Sigma_W} ) T^2 + \lambda_{\text{max}}(\Gamma_{\text{sb}})$. For our choice of $\delta$ in Proposition \ref{prop:estim-bound}, we set $\delta \leftarrow \delta/(3(-1+\pi^2/6)(T+1)^2)$, where the second $\delta$ is from the premise of Proposition \ref{prop:estim-bound-all-time}. The intuition here is we want to obtain a bound that holds with probability $1-\delta$ over all time, but Proposition \ref{prop:estim-bound} holds for a particular time. The conclusion follows by using the union bound to convert from a result at particular time to a result over all time, making use of $3\sum_{T\geq1}\delta/(3(-1 + \pi^2/6)(T+1)^2) =\delta$.

\subsection{Stability Bound} \label{sec:stability-bounds}

We state Lemma \ref{lemma:control-error-simplified}, which provides perturbation bounds on the CE component of the controls as a function of parameter estimation error and saturation level $D$, and holds uniformly over all states in the state space.
\begin{lemma} \label{lemma:control-error-simplified}
    Suppose Assumption \ref{assump:reachable} holds. There exist $m_q, M_q > 0$ such that for all $D > 0$ and $\epsilon \in [0, m_q]$,
    \begin{align}
        &|\sat_D(-g(\bar{A},\bar{B})x) - \sat_D(-g(A,B)x)| \leq M_q \cdot D \cdot \epsilon
    \end{align}
    holds for all $\bar{A} \in \bar{\mathcal{B}}_{\epsilon}(A)$, $\bar{B} \in \bar{\mathcal{B}}_{\epsilon}(B)$, and $x \in \mathbb{R}^n$.
\end{lemma}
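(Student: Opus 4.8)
We defer the full proof of Lemma~\ref{lemma:control-error-simplified} to the supplementary materials, but outline the plan here. The idea is to decouple the perturbation of the feedback gain $g$ from the Lipschitz behaviour of $\sat_D$, and then combine the two by a case split on $\abs{x}$.

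First we would show that $g$ is locally Lipschitz at $(A,B)$. Since $\mathcal{R}_*=\mathcal{R}_\kappa(A,B)$ has full row rank $n$ by Assumption~\ref{assump:reachable} and full row rank is an open condition, the Moore--Penrose inverse $M\mapsto M^{\dagger}=M^{\top}(MM^{\top})^{-1}$ is $C^{\infty}$ on a neighbourhood of $\mathcal{R}_*$; composing with the polynomial map $(A',B')\mapsto(\mathcal{R}_\kappa(A',B'),(A')^{\kappa})$ shows $g$ is $C^{\infty}$, hence locally Lipschitz, near $(A,B)$. Thus there are $m_q,L,\bar\sigma>0$ with $\norm{g(\bar A,\bar B)-g(A,B)}\le L\epsilon$ and $\norm{\mathcal{R}_\kappa(\bar A,\bar B)^{\dagger}}\le\bar\sigma$ whenever $\bar A\in\bar{\mathcal{B}}_{\epsilon}(A)$, $\bar B\in\bar{\mathcal{B}}_{\epsilon}(B)$ and $\epsilon\le m_q$. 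Writing $\Delta:=g(\bar A,\bar B)-g(A,B)$, the pre-saturation signals are $y:=-g(A,B)x$ and $\tilde y:=y-\Delta x$.

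Next we would use two properties of radial saturation: that $\sat_D$ is the Euclidean projection onto $\bar{\mathcal{B}}_D(0)$, hence nonexpansive, so $\abs{\sat_D(\tilde y)-\sat_D(y)}\le\abs{\Delta x}\le L\epsilon\abs{x}$; and that $\abs{\sat_D(a)-\sat_D(b)}\le 2D\abs{a-b}/\max(\abs{a},\abs{b})$ whenever $\min(\abs{a},\abs{b})\ge D$. When $\abs{x}$ is small relative to $D$ the first bound already yields $\abs{\sat_D(\tilde y)-\sat_D(y)}\le L\epsilon\abs{x}\le\tfrac{1}{2}M_q D\epsilon$. In the complementary regime the key sub-case is when both $\abs{y}$ and $\abs{\tilde y}$ exceed $D$; there the second estimate reduces the claim to the scale-invariant bound $\abs{\Delta x}\le\tfrac{1}{2}M_q\,\epsilon\,\max(\abs{y},\abs{\tilde y})$ --- that is, to showing $-g(\bar A,\bar B)x$ and $-g(A,B)x$ differ in direction by $O(\epsilon)$ uniformly in $x$. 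For this we would split $\Delta x=\mathcal{R}_\kappa(\bar A,\bar B)^{\dagger}\big((\bar A)^{\kappa}-A^{\kappa}\big)x+\big(\mathcal{R}_\kappa(\bar A,\bar B)^{\dagger}-\mathcal{R}_*^{\dagger}\big)A^{\kappa}x$: the second summand is $O(\epsilon)\abs{A^{\kappa}x}=O(\epsilon)\abs{\mathcal{R}_* y}=O(\epsilon)\abs{y}$ since $\mathcal{R}_*\mathcal{R}_*^{\dagger}=\mathbf{I}$, while $\norm{(\bar A)^{\kappa}-A^{\kappa}}\le(1+\epsilon)^{\kappa}-1=O(\epsilon)$ precisely because $\norm{A}\le 1$.

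The step I expect to be the main obstacle is the remaining sub-case, where $\abs{x}$ is large but the pre-saturation signals are not both large --- in particular when $x\in\ker(A^{\kappa})$, so that $y=0$ while $\tilde y=-\mathcal{R}_\kappa(\bar A,\bar B)^{\dagger}((\bar A)^{\kappa}-A^{\kappa})x$ may itself be of order $\epsilon\abs{x}$. Handling it requires extra structure --- the identity $A^{\kappa}x=-\mathcal{R}_* y$, the stability of $\ker((\bar A)^{\kappa})$ under small perturbations, and the fact that on that subspace the perturbed signal is controlled relative to $D$ --- to still recover a bound of the form $M_q D\epsilon$. Once that case is in hand, taking $M_q$ to be the largest of the constants produced by the various regimes and shrinking $m_q$ so that $\bar{\mathcal{B}}_{m_q}(A)\times\bar{\mathcal{B}}_{m_q}(B)$ lies in the good neighbourhood and $(1+\epsilon)^{\kappa}-1\le\kappa(1+m_q)^{\kappa-1}\epsilon$ for $\epsilon\le m_q$, the remaining estimates are routine.
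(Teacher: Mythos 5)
There is a genuine gap, and you have put your finger on it yourself: the sub-case where $\abs{x}$ is large but the nominal signal $y=-g(A,B)x$ is not, is left unresolved, and the ``extra structure'' you hint at (stability of $\ker(\bar A^{\kappa})$ under small perturbations) does not exist --- kernels are not stable under perturbation, and in that regime your scale-invariant target $\abs{\Delta x}\le\tfrac12 M_q\epsilon\max(\abs{y},\abs{\tilde y})$ cannot hold, since $y\approx 0$ forces $\abs{\Delta x}\approx\abs{\tilde y}$. Indeed, without an extra ingredient the conclusion itself fails in that regime: take $n=m=\kappa=1$, $A=0$, $B=1$, so $g(A,B)=0$ and $\sat_D(-g(A,B)x)=0$; with $\bar A=\epsilon$, $\bar B=1$ and $x=-2D/\epsilon$ one gets $\abs{\sat_D(-g(\bar A,\bar B)x)-\sat_D(-g(A,B)x)}=D$, which no bound of the form $M_qD\epsilon$ can absorb. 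The mechanism that closes this case in the paper's proof is the quantity $\sigma_{\text{min}}(g(A,B))$ (implicitly assumed positive through the constants $q_1,q_4$): the state space is split at the threshold $\abs{x}\le r/(\sigma_{\text{min}}(g(A,B))-q_5(\epsilon))$, and for $\epsilon$ small enough that $q_5(\epsilon)<\sigma_{\text{min}}(g(A,B))$, Weyl's inequality gives $\abs{g(\bar A,\bar B)x}\ge(\sigma_{\text{min}}(g(A,B))-q_5(\epsilon))\abs{x}$ as well, so outside that compact set \emph{both} signals exceed $r$ and are saturated, while inside it the nonexpansiveness bound is evaluated at the worst-case $\abs{x}$, which is $O(r)$. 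In other words, the intermediate regime you worry about is excluded by construction, not handled; your outline is missing exactly this device, so the proof cannot be completed along the route you sketch (your $A^{\kappa}x=-\mathcal{R}_*y$ identity controls only one of the two summands of $\Delta x$ relative to $\abs{y}$, the other remains $O(\epsilon)\abs{x}$).

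Where your outline and the paper do overlap is in the two saturation estimates (nonexpansiveness of $\sat_D$ as projection onto $\bar{\mathcal{B}}_D(0)$, and a unit-direction perturbation bound of the form $2\norm{\Delta}/(\sigma_{\text{min}}(g(A,B))-\norm{\Delta})$ when both arguments are saturated), and in obtaining linearity in $\epsilon$ --- you via local Lipschitzness of $g$ near $(A,B)$ (fine, since $\mathcal{R}_*$ has full row rank), the paper via explicit perturbation bounds $q_5,\dots,q_{10}$ and convexity of $q_2(\cdot,r,\kappa,A,B)$ on $[0,q_1/2]$. But without the $\sigma_{\text{min}}(g(A,B))>0$ mechanism (equivalently, injectivity of $A^{\kappa}$, which Assumption \ref{assump:reachable} alone does not guarantee and which the paper's proof tacitly requires), the case analysis does not close.
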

We defer the proof of Lemma \ref{lemma:control-error-simplified} to the supplementary materials, but describe the key ideas here. This result follows from a perturbation bound that we derive on the controller saturation error using matrix analysis, which is convex in $\epsilon$ over a half-open interval. Of particular note in the proof, is that the perturbation bound holds uniformly over all states. This is because after fixing $D$ and $\epsilon$ sufficiently small, we can find a compact set $S$ of $x$ such that on $S^{\comp}$, both $\sat_D(-g(\bar{A},\bar{B})x)$ and $\sat_D(-g(A,B)x)$ are saturated, so the CE error will not grow with $|x|$. Although the CE error will grow with $|x|$ within $S$, we obtain a uniform bound based on the worst $|x|$.

Next, we provide Proposition \ref{prop:geometric-stability}, which is a slight modification of the geometric drift result in \cite[Proposition~1]{chatterjee2014stability}, and is used to derive stochastic stability results in expectation on Lyapunov-like functions of Markov processes.
\begin{proposition} \label{prop:geometric-stability}
Let $(\xi_t)_{t \in \mathbb{N}_0}$ be a Markov process taking values in $\mathbb{R}^d$, with $\xi_0$ having some distribution $\mu(\cdot)$. Suppose that there exist $\beta > 0$ and $\lambda >0 $, a measurable function $V: \mathbb{R}^d \rightarrow [0,+\infty)$, and a compact set $K \subset \mathbb{R}^d$, such that
\begin{equation}
    \mathbb{E}\left[ V(\xi_{ 1}) | \xi_{0} = x \right] \leq \lambda V(x) \text{ for all }  x \in K^{\comp} \cap \supp(\mu)   \label{eqn:geometric-condition-1}
\end{equation}
and
\begin{equation}
    \mathbb{E}\left[V(\xi_{1}) | \xi_0 = x \right] \leq \beta \text{ for all } x \in K \cap \supp(\mu) .  \label{eqn:geometric-condition-2}
\end{equation}
Then,
\begin{equation}
    \mathbb{E}\left[ V(\xi_t) \mid \xi_0 = x \right]  \leq \lambda^t V(x) + \beta \sum_{k=0}^{t-1} \lambda^{t-1-k}
\end{equation}
holds for all $x \in \supp(\mu)$, and $t \in \mathbb{N}_0$.
\end{proposition}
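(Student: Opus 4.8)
The plan is to collapse the two drift hypotheses into a single one-step geometric drift inequality valid on all of $\supp(\mu)$, and then to iterate it by induction on $t$ via the Markov property. For the merge, note that $V\ge 0$ and $\lambda,\beta>0$: on $K^{\comp}\cap\supp(\mu)$, condition \eqref{eqn:geometric-condition-1} gives $\E[V(\xi_1)\mid\xi_0=x]\le\lambda V(x)\le\lambda V(x)+\beta$, while on $K\cap\supp(\mu)$, condition \eqref{eqn:geometric-condition-2} gives $\E[V(\xi_1)\mid\xi_0=x]\le\beta\le\lambda V(x)+\beta$. Since these two pieces cover $\supp(\mu)$, we obtain $\phi(x):=\E[V(\xi_1)\mid\xi_0=x]\le\lambda V(x)+\beta$ for every $x\in\supp(\mu)$.

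Next I would induct on $t$. The case $t=0$ is immediate: $\E[V(\xi_0)\mid\xi_0=x]=V(x)$, and the right-hand side equals $\lambda^0 V(x)$ plus the empty sum $\sum_{k=0}^{-1}\lambda^{-1-k}=0$. For the inductive step, assume the claimed bound holds at time $t$ for every starting point in $\supp(\mu)$, and fix $x\in\supp(\mu)$. Conditioning on $\mathcal F_t:=\sigma(\xi_0,\dots,\xi_t)$ and using the (time-homogeneous) Markov property, $\E[V(\xi_{t+1})\mid\mathcal F_t]=\phi(\xi_t)$ almost surely, so $\E[V(\xi_{t+1})\mid\xi_0=x]=\E[\phi(\xi_t)\mid\xi_0=x]$. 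Bounding the integrand by the merged one-step inequality gives $\E[V(\xi_{t+1})\mid\xi_0=x]\le\lambda\,\E[V(\xi_t)\mid\xi_0=x]+\beta$, and then applying the induction hypothesis and collecting the geometric terms,
\[
\E[V(\xi_{t+1})\mid\xi_0=x]\le\lambda\Big(\lambda^{t}V(x)+\beta\sum_{k=0}^{t-1}\lambda^{t-1-k}\Big)+\beta=\lambda^{t+1}V(x)+\beta\sum_{k=0}^{t}\lambda^{t-k},
\]
which is exactly the assertion at time $t+1$, closing the induction.

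The one point that needs care — and the main obstacle — is the domain of validity when passing the merged bound through the integral $\E[\phi(\xi_t)\mid\xi_0=x]$: the hypotheses control $\phi$ only on $\supp(\mu)$, so one needs $\xi_t\in\supp(\mu)$ almost surely whenever $\xi_0\in\supp(\mu)$, i.e.\ that $\supp(\mu)$ is invariant under the transition kernel, which is the case in the setting where this proposition is applied. (Alternatively one may peel the \emph{first} step instead of the last, writing $\E[V(\xi_{t+1})\mid\xi_0=x]=\E[\psi_t(\xi_1)\mid\xi_0=x]$ with $\psi_t(y):=\E[V(\xi_t)\mid\xi_0=y]$ and invoking the induction hypothesis for $\psi_t$ at $\xi_1$; the same invariance remark applies.) Everything else — the two-case merge, the empty-sum base case, and the bookkeeping identity $\lambda\beta\sum_{k=0}^{t-1}\lambda^{t-1-k}+\beta=\beta\sum_{k=0}^{t}\lambda^{t-k}$ — is routine.
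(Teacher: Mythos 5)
Your proof is correct and follows essentially the same route as the paper: the paper does not prove Proposition \ref{prop:geometric-stability} itself but defers to \cite[Proposition~1]{chatterjee2014stability}, and the argument there is exactly your merge of the two drift conditions into $\mathbb{E}[V(\xi_1)\mid \xi_0=x]\le \lambda V(x)+\beta$ followed by induction over $t$ (the paper's only modifications are dropping the requirement $\lambda<1$ and not closing the geometric series, neither of which affects this argument). The $\supp(\mu)$ caveat you flag is the right thing to raise, since as literally stated the hypotheses only control the drift on $\supp(\mu)$ and the inductive step needs the merged bound at $\xi_t$; note, though, that in the paper's actual use of the proposition (proof of Lemma \ref{lemma:stability-deterministic-param}) the two drift inequalities are verified for all $z\in K$ and all $z\in K^{\comp}$ without any restriction to $\supp(\mu)$, so the one-step bound holds wherever the chain goes and the invariance question is moot there.
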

Proposition \ref{prop:geometric-stability} is the same as \cite[Proposition~1]{chatterjee2014stability}, except that we do not require $ \lambda < 1 $, and the final additive term in the upper bound on $\mathbb{E}\left[ V(\xi_t) \mid \xi_0 = x \right]$ is not replaced by the closed form of the geometric series.

We make use of Proposition \ref{prop:geometric-stability} to derive Lemma \ref{lemma:stability-deterministic-param}, which provides bounds on the expected value of an exponential function of the states of a parameterized family of systems that evolves with the same plant dynamics \eqref{eqn:closed-loop-system}, but with deterministic parameter estimates. These bounds hold uniformly over all estimates in a sufficiently small ball around the true parameter.
\begin{lemma} \label{lemma:stability-deterministic-param}
Suppose Assumptions \ref{assump:disturbance}, \ref{assump:reachable}, and \ref{assump:exciting-noise} hold. Let $m_q, M_q > 0$ satisfy Lemma \ref{lemma:control-error-simplified}. Fix $\epsilon \in [0,m_q]$. Consider a family of random sequences $(Z^{\bar{A},\bar{B}}_{\tau})_{\tau \in \mathbb{N}_0}$ parameterized by $\bar{A} \in \bar{\mathcal{B}}_{\epsilon}(A), \bar{B} \in \bar{\mathcal{B}}_{\epsilon}(B)$ that evolve according to the closed-loop system:
\begin{align}
    Z^{\bar{A},\bar{B}}_{\tau+1} = &A^{\kappa}Z^{\bar{A},\bar{B}}_{\tau} + \mathcal{R}_*\text{sat}_D(-\mathcal{R}_{\kappa}(\bar{A},\bar{B})^{\dagger} \bar{A}^{\kappa} Z^{\bar{A},\bar{B}}_{\tau}) \\
    &+ \tilde{V}_{\tau} + \tilde{W}_{\tau}, \label{eqn:parameterized-closed-loop-system}
\end{align}
for $\tau \in \mathbb{N}_0$, where:
\begin{enumerate}
    \item $(\tilde{W}_{\tau})_{\tau \in \mathbb{N}_0}$ and $(\tilde{V}_{\tau})_{\tau \in \mathbb{N}_0}$ are i.i.d. random sequences with the same distribution as $(\bar{W})_{\tau \in \mathbb{N}_0}$ and $(\bar{V}_{\tau})_{\tau \in \mathbb{N}_0}$ respectively;
    \item $Z^{\bar{A},\bar{B}}_0 = \underbar{Z}$ for all $\bar{A} \in \bar{\mathcal{B}}_{\epsilon}(A)$ and $\bar{B} \in \bar{\mathcal{B}}_{\epsilon}(B)$, where $\underbar{Z}$ is a random variable with distribution $\mu(\cdot)$;
    \item $(\tilde{W}_{\tau})_{\tau \in \mathbb{N}_0}$, $(\tilde{V}_{\tau})_{\tau \in \mathbb{N}_0}$, $\underbar{Z}$ are all independent.
\end{enumerate}
Then,
\begin{align}
    &\mathbb{E}[e^{|Z^{\bar{A},\bar{B}}_{\tau}|} \mid Z^{\bar{A},\bar{B}}_0 = z]  \leq \quad \lambda^{\tau}(\epsilon) e^{|z|} + \beta(\epsilon) \sum_{k = 0}^{\tau - 1}\lambda^{\tau - 1 - k}(\epsilon) \label{eqn:deterministic-param-bounds}
\end{align}
for all $\bar{A} \in \bar{\mathcal{B}}_{\epsilon}(A)$, $\bar{B} \in \bar{\mathcal{B}}_{\epsilon}(B)$, $z \in \supp(\mu)$ and $\tau \in \mathbb{N}_0$, where
\begin{align}
    &\lambda(\epsilon) := e^{\Vert \mathcal{R}_*\Vert \cdot M_q \cdot D \cdot \epsilon + \frac{-D}{\Vert  \mathcal{R}_*^{\dagger} \Vert } +  M_{\bar{V}} + M_{\bar{W}} }, \label{def:lambda}\\
    & \beta(\epsilon) := e^{\Vert \mathcal{R}_*\Vert \cdot M_q \cdot D \cdot \epsilon + M_{\bar{V}} + M_{\bar{W}}}. \label{def:beta}
\end{align}
\end{lemma}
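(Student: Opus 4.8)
The plan is to apply Proposition \ref{prop:geometric-stability} to the Markov process $\xi_\tau := Z^{\bar{A},\bar{B}}_\tau$ with the Lyapunov-like function $V(x) := e^{|x|}$, the compact set $K := \bar{\mathcal{B}}_{D/\Vert \mathcal{R}_*^\dagger\Vert}(0)$, and the drift constants $\lambda(\epsilon)$, $\beta(\epsilon)$ from \eqref{def:lambda}--\eqref{def:beta}. Conditions 1--3 of the lemma guarantee that $(\xi_\tau)$ is a (time-homogeneous) Markov process with $\xi_0 \sim \mu(\cdot)$, since $\xi_{\tau+1}$ is a fixed measurable function of $\xi_\tau$ plus the noise increment $\tilde{V}_\tau + \tilde{W}_\tau$, which is independent of $\xi_\tau$ and has a distribution not depending on $\tau$. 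So the task reduces to verifying the two one-step drift inequalities \eqref{eqn:geometric-condition-1}--\eqref{eqn:geometric-condition-2}, uniformly over $\bar{A} \in \bar{\mathcal{B}}_{\epsilon}(A)$, $\bar{B} \in \bar{\mathcal{B}}_{\epsilon}(B)$.

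For the one-step bound I would write $\xi_{\tau+1} = f(\xi_\tau) + \tilde{V}_\tau + \tilde{W}_\tau$ with $f(x) := A^{\kappa}x + \mathcal{R}_* \sat_D(-g(\bar{A},\bar{B})x)$, and use the triangle inequality together with the mutual independence (condition 3) of $\tilde{V}_\tau$, $\tilde{W}_\tau$, $\xi_\tau$ to obtain
\[
\mathbb{E}\!\left[e^{|\xi_{1}|} \mid \xi_0 = x\right] \leq e^{|f(x)|}\,\mathbb{E}\!\left[e^{|\tilde{V}_0|}\right]\mathbb{E}\!\left[e^{|\tilde{W}_0|}\right] = e^{|f(x)| + M_{\bar{V}} + M_{\bar{W}}},
\]
the last equality by condition 1 and the definitions of $M_{\bar{V}}$, $M_{\bar{W}}$, whose finiteness comes from Assumptions \ref{assump:disturbance} and \ref{assump:exciting-noise}. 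The core step is then a geometric bound on $|f(x)|$. Using $g(A,B) = \mathcal{R}_*^\dagger A^\kappa$ and $\mathcal{R}_* \mathcal{R}_*^\dagger = \mathbf{I}$ (valid since $\mathcal{R}_*$ has full row rank by Assumption \ref{assump:reachable}), set $u := -g(A,B)x$, so that $A^\kappa x = -\mathcal{R}_* u$ and hence $A^\kappa x + \mathcal{R}_* \sat_D(u) = \mathcal{R}_*\bigl(\sat_D(u) - u\bigr)$. A short case split on whether $|u| \leq D$, combined with $\Vert A^\kappa\Vert \leq \Vert A\Vert^\kappa \leq 1$ and $|u| \leq \Vert\mathcal{R}_*^\dagger\Vert\,|x|$, yields $|A^\kappa x + \mathcal{R}_* \sat_D(-g(A,B)x)| \leq \max\{0,\, |x| - D/\Vert\mathcal{R}_*^\dagger\Vert\}$ for all $x$. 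Adding the controller perturbation term via Lemma \ref{lemma:control-error-simplified} (applicable since $\epsilon \in [0,m_q]$ and $\bar{A},\bar{B}$ lie in the $\epsilon$-balls) gives $|f(x)| \leq \max\{0,\, |x| - D/\Vert\mathcal{R}_*^\dagger\Vert\} + \Vert\mathcal{R}_*\Vert M_q D \epsilon$.

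Substituting this into the one-step bound and splitting on $x \in K$ versus $x \in K^{\comp}$ produces exactly $\mathbb{E}[e^{|\xi_1|} \mid \xi_0 = x] \leq \beta(\epsilon)$ on $K$ and $\mathbb{E}[e^{|\xi_1|} \mid \xi_0 = x] \leq \lambda(\epsilon)e^{|x|}$ on $K^{\comp}$, i.e.\ conditions \eqref{eqn:geometric-condition-2} and \eqref{eqn:geometric-condition-1}. Proposition \ref{prop:geometric-stability} then gives \eqref{eqn:deterministic-param-bounds}, and since none of $\lambda(\epsilon)$, $\beta(\epsilon)$, $K$, $V$ depends on $\bar{A}$ or $\bar{B}$, the conclusion holds uniformly over the $\epsilon$-balls and over $z \in \supp(\mu)$, as claimed. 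I expect the main obstacle to be the geometric-contraction step: extracting the clean bound $|A^\kappa x + \mathcal{R}_*\sat_D(-g(A,B)x)| \leq \max\{0,|x|-D/\Vert\mathcal{R}_*^\dagger\Vert\}$ from the deadbeat-plus-saturation structure needs the right rewriting ($\mathcal{R}_*(\sat_D(u)-u)$, then the scaling factor $1 - D/|u|$) and care with the trivial cases $|u|\leq D$ and $x = 0$. A secondary point requiring care is the Markov/independence bookkeeping that lets the noise moment generating functions factor out of the conditional expectation.
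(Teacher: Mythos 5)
Your proposal is correct and follows essentially the same route as the paper's proof: a one-step drift bound $\mathbb{E}[e^{|\xi_1|}\mid \xi_0=x]\le e^{|f(x)|+M_{\bar{V}}+M_{\bar{W}}}$ using the independence of the noise terms, the perturbation bound from Lemma \ref{lemma:control-error-simplified}, the geometric contraction of the saturated deadbeat nominal map via $\mathcal{R}_*\mathcal{R}_*^{\dagger}=\mathbf{I}$ and $\Vert A\Vert\le 1$, and finally Proposition \ref{prop:geometric-stability} with $V(\cdot)=e^{|\cdot|}$. The only cosmetic differences are your choice of the compact set $K=\bar{\mathcal{B}}_{D/\Vert\mathcal{R}_*^{\dagger}\Vert}(0)$ instead of the paper's $K=\{z\mid |g(A,B)z|\le D\}$, and the packaging of the saturated and unsaturated cases into the single bound $\max\{0,\,|x|-D/\Vert\mathcal{R}_*^{\dagger}\Vert\}$, both of which lead to the same $\lambda(\epsilon)$ and $\beta(\epsilon)$.
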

We defer the proof of Lemma \ref{lemma:stability-deterministic-param} to the supplementary materials, but describe the key details here. By assuming $\bar{A} \in \bar{\mathcal{B}}_{\epsilon}(A)$ and $\bar{B} \in \bar{\mathcal{B}}_{\epsilon}(B)$, we can upper bound $|\bar{Z}_{1}^{\bar{A},\bar{B}}|$ based on the dynamics of the nominal sub-sampled closed-loop system (where the true parameter $A,B$ is used for the controller), and the CE error --- which, can be upper bounded using Lemma \ref{lemma:control-error-simplified}. This implies that $\mathbb{E}[e^{|Z^{\bar{A},\bar{B}}_1|} \mid Z_0 = z] \leq \mathbb{E}[e^{ | A^{\kappa}z + \mathcal{R}_*\text{sat}_D(-\mathcal{R}_*^{\dagger} A^{\kappa} z)| + M_{\bar{V}} + M_{\bar{W}} + \Vert \mathcal{R}_* \Vert M_q D \epsilon } \mid Z_0 = z ]$ holds. Letting $K$ denote the set of $z$ where $|\mathcal{R}_*^{\dagger} A^{\kappa} z)| \leq D$, we find that $\mathbb{E}[e^{|Z^{\bar{A},\bar{B}}_1|} \mid Z_0 = z] \leq \lambda(\epsilon)$ on $z \in K^{\comp}$, and $\mathbb{E}[e^{|Z^{\bar{A},\bar{B}}_1|} \mid Z_0 = z] \leq \beta(\epsilon)$ on $z \in K$. The conclusion follows by applying Proposition \ref{prop:geometric-stability} with $V(\cdot)\leftarrow e^{(\cdot)}$.
\begin{remark} \label{remark:lemma-stability-deterministic-param}
    The bound in \eqref{eqn:deterministic-param-bounds} always provides upper bounds on the conditional expectation of $e^{|Z^{\bar{A},\bar{B}}_{\tau}|}$ for $\tau > \tau_0$, and varies continuously as a function of the estimation error $\epsilon$. However, if $\epsilon  > 0$ is sufficiently small such that $\frac{-D}{\Vert  \mathcal{R}_*^{\dagger} \Vert } + \Vert \mathcal{R}_*\Vert M_qD \epsilon  + M_{\bar{V}} + M_{\bar{W}} < 0$ holds, then $\lambda(\epsilon) \in (0,1)$ will hold (under Assumption \ref{assump:large-sat-level}).
    In this scenario, the upper bound in \eqref{eqn:deterministic-param-bounds} will asymptotically converge towards a constant, and can be qualitatively interpreted as a stability bound.
\end{remark}

Using Lemma \ref{lemma:stability-deterministic-param}, we derive a probabilistic bound on the norm of the sub-sampled states $\bar{X}_{\tau}$ of the closed-loop system in \eqref{eqn:closed-loop-system} in Proposition \ref{prop:stability-random-varying-param}. Note that the upper bound here depends on the function $h$, which is an arbitrarily chosen function over $\tau$ representing an upper bound on the estimation error for the sub-sampled parameter estimates $\bar{\theta}_{\tau}$ that holds after time step $\tau_0$. The probability that this bound holds depends on the probability that the sub-sampled estimation error is bounded by $h$ between $\tau$ and $\tau_0$.
\begin{proposition} \label{prop:stability-random-varying-param}
Suppose Assumptions \ref{assump:disturbance}, \ref{assump:reachable}, and \ref{assump:exciting-noise} hold, and let $m_q,M_q > 0$ satisfy Lemma \ref{lemma:control-error-simplified}. Suppose $x_0 \in \mathbb{R}^n$. Consider any function $h: \{\tau_0, \tau_0 + 1, \hdots, \tau \} \rightarrow [0,m_q]$ with $\tau_0 \in \mathbb{N}$. Fix $\tau > \tau_0$ and $\gamma \in (0,1)$. Then,
\begin{align}
    |\bar{X}_{\tau}| <& \ln ( \frac{1}{\gamma} ) + \ln (  \mathbb{E}[ e^{|\bar{X}_{\tau_0}|} ] \prod_{i = \tau_0}^{\tau-1}\lambda(h(i)) \\
    &+ \sum_{i = \tau_0}^{\tau - 1} \beta(h(i)) \prod_{j=i+1}^{\tau-1} \lambda(h(j)) )  .
\end{align}
holds with at least probability $1 - \gamma - P(\bigcup_{i = \tau_0}^{\tau} \{ \bar{\theta}_i \not \in \bar{\mathcal{B}}_{h(i)} (\theta_*) \})$, with $\lambda,\beta$ defined in \eqref{def:lambda} and \eqref{def:beta} respectively.
\end{proposition}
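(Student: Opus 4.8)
The plan is to combine Lemma \ref{lemma:stability-deterministic-param} with a Markov-inequality argument, handling the discrepancy between the random time-varying estimates $\bar{\theta}_i$ in \eqref{eqn:closed-loop-system} and the deterministic-parameter family of \eqref{eqn:parameterized-closed-loop-system} by conditioning on the good event that the estimation error stays below $h$. First I would define the event $G := \bigcap_{i=\tau_0}^{\tau}\{\bar{\theta}_i \in \bar{\mathcal{B}}_{h(i)}(\theta_*)\}$, so that $P(G^{\comp}) = P(\bigcup_{i=\tau_0}^{\tau}\{\bar{\theta}_i \not\in \bar{\mathcal{B}}_{h(i)}(\theta_*)\})$. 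On $G$, at each sub-sampled step $i \in \{\tau_0,\dots,\tau-1\}$ the one-step drift of $e^{|\bar{X}|}$ under the true closed loop \eqref{eqn:closed-loop-system} is dominated by exactly the same bound that Lemma \ref{lemma:stability-deterministic-param} establishes for the parameterized system with $\epsilon \leftarrow h(i)$: namely $\mathbb{E}[e^{|\bar{X}_{i+1}|}\mid \mathcal{F}_{\kappa i}]\mathbf{1}_G \leq \lambda(h(i))e^{|\bar{X}_i|}\mathbf{1}_G$ when $|\mathcal{R}_*^{\dagger}A^{\kappa}\bar{X}_i| > D$, and $\leq \beta(h(i))$ otherwise. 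The subtlety is that $h(i)$ is a fixed (deterministic) function, so $\lambda(h(i)),\beta(h(i))$ are constants, and Lemma \ref{lemma:control-error-simplified} applies pointwise in the state; the exciting noise $\bar V_i$ and disturbance $\bar W_i$ entering at step $i$ are independent of $\mathcal{F}_{\kappa i}$, which is what lets the $M_{\bar V}, M_{\bar W}$ terms factor out inside the conditional expectation just as in the proof of Lemma \ref{lemma:stability-deterministic-param}.

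Next I would iterate this one-step bound. Writing $a_i := \lambda(h(i))$ and using $\max(\lambda(h(i))e^{|x|},\beta(h(i))) \leq \lambda(h(i))e^{|x|} + \beta(h(i))$ to merge the two cases, I get $\mathbb{E}[e^{|\bar X_{i+1}|}\mathbf{1}_G] \leq a_i\,\mathbb{E}[e^{|\bar X_i|}\mathbf{1}_G] + \beta(h(i))$ for each $i$ from $\tau_0$ to $\tau-1$ (here I am using that $G$ is determined by information up to step $\tau$ and that restricting to a sub-event of $G$ at an earlier index only shrinks the expectation, or more cleanly, working with the nested events $G_i := \bigcap_{j=\tau_0}^{i}\{\bar\theta_j\in\bar{\mathcal{B}}_{h(j)}(\theta_*)\}$ and noting $G \subseteq G_i$). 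Unrolling the recursion from $\tau_0$ to $\tau$ yields
\begin{align}
    \mathbb{E}[e^{|\bar X_{\tau}|}\mathbf{1}_G] &\leq \mathbb{E}[e^{|\bar X_{\tau_0}|}]\prod_{i=\tau_0}^{\tau-1}\lambda(h(i)) + \sum_{i=\tau_0}^{\tau-1}\beta(h(i))\prod_{j=i+1}^{\tau-1}\lambda(h(j)),
\end{align}
which is exactly the quantity appearing inside the logarithm in the claimed bound; call it $R$. Dropping $\mathbf 1_G$ from the initial term is legitimate since $e^{|\bar X_{\tau_0}|} \geq 0$.

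Finally I would convert the expectation bound into a high-probability bound. By Markov's inequality applied to the nonnegative random variable $e^{|\bar X_{\tau}|}\mathbf 1_G$, for any $\gamma \in (0,1)$ we have $P(e^{|\bar X_{\tau}|}\mathbf 1_G \geq R/\gamma) \leq \gamma$, hence $P(\{e^{|\bar X_{\tau}|} \geq R/\gamma\}\cap G) \leq \gamma$. Therefore $|\bar X_{\tau}| < \ln(1/\gamma) + \ln R$ fails only on $\{e^{|\bar X_{\tau}|}\geq R/\gamma\}$, whose intersection with $G$ has probability at most $\gamma$, so the failure event is contained in $G^{\comp}\cup(\{e^{|\bar X_{\tau}|}\geq R/\gamma\}\cap G)$, which has probability at most $P(G^{\comp}) + \gamma$. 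This gives the stated probability $1 - \gamma - P(\bigcup_{i=\tau_0}^{\tau}\{\bar\theta_i\not\in\bar{\mathcal{B}}_{h(i)}(\theta_*)\})$. The main obstacle I anticipate is the bookkeeping in the second step: making the restriction-to-$G$ argument rigorous across the recursion — specifically, ensuring the one-step drift inequality survives multiplication by the indicator of a \emph{future}-measurable good event — which is cleanest if one instead conditions on $G_i$ at step $i$ and uses the tower property together with $\{|\bar X_i|\text{-drift on }G_i\}\supseteq\{\text{drift on }G\}$, or alternatively defines a stopped process that freezes once the estimation error first exceeds $h$. Everything else is a direct transcription of the drift computation already carried out for Lemma \ref{lemma:stability-deterministic-param}.
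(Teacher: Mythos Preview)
Your proposal is correct and follows essentially the same route as the paper: the paper also works with the nested good events $G_i := \bigcap_{j=\tau_0}^{i}\{\bar\theta_j \in \bar{\mathcal{B}}_{h(j)}(\theta_*)\}$ (rather than the full $G$), establishes by induction the bound $\mathbb{E}[\bm{1}_{G_{\tau_0+k-1}}e^{|\bar X_{\tau_0+k}|}] \leq \mathbb{E}[e^{|\bar X_{\tau_0}|}]\prod_{i=\tau_0}^{\tau_0+k-1}\lambda(h(i)) + \sum_{i=\tau_0}^{\tau_0+k-1}\beta(h(i))\prod_{j=i+1}^{\tau_0+k-1}\lambda(h(j))$ using the tower property and the one-step version of Lemma~\ref{lemma:stability-deterministic-param}, and then finishes with Markov's inequality and the same union-of-events argument you give. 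The measurability obstacle you flag is exactly the one the paper resolves via this nested-event induction, so your ``cleanest'' fix is in fact the paper's argument.
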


\begin{proof}
    Using induction, we will first establish the following:
    \begin{align}
        &\mathbb{E}[\bm{1}_{\bar{\theta}_{\tau_0} \in \bar{\mathcal{B}}_{h(\tau_0)}(\theta_*),\hdots,\bar{\theta}_{\tau_0 + k - 1} \in \bar{\mathcal{B}}_{h(\tau_0 + k - 1)}(\theta_*)} e^{| \bar{X}_{\tau_0 + k} |} ] \\
        &\leq \mathbb{E}[e^{|\bar{X}_{\tau_0}|}] \prod_{i=\tau_0}^{\tau_0 + k - 1}\lambda(h(i)) + \sum_{i = \tau_0}^{\tau_0 + k - 1} \beta(h(i)) \prod_{j = i + 1}^{\tau_0 + k - 1} \lambda(h(j)) \label{eqn:prop-stability-random-varying-param-5}
    \end{align}
    Starting with the base case where $k = 1$, we have,
    \begin{align}
        &\mathbb{E}[ \bm{1}_{\bar{\theta}_{\tau_0} \in \bar{\mathcal{B}}_{h(\tau_0)}(\theta_*)} \mathbb{E} [ e^{| \bar{X}_{\tau_0 + 1} |} \mid \bar{\theta}_{\tau_0}, \bar{X}_{\tau_0} ] ] \label{eqn:prop-stability-random-varying-param-1} \\
        &\leq \mathbb{E}[ \bm{1}_{\bar{\theta}_{\tau_0} \in \bar{\mathcal{B}}_{h(\tau_0)}(\theta_*)} (\lambda(h(\tau_0)) e^{|\bar{X}_{\tau_0}|} + \beta(h(\tau_0)) ) ] \label{eqn:prop-stability-random-varying-param-2}
    \end{align}
    In order to see that \eqref{eqn:prop-stability-random-varying-param-2} holds, we define the parameterized sequence $(Z^{\bar{A},\bar{B}}_{i})_{i \in \mathbb{N}_0}$ over $\bar{A}\in \bar{\mathcal{B}}_{\delta}(A)$, $\bar{B}\in \bar{\mathcal{B}}_{\delta}(B)$, such that it evolves consistently with the closed-loop dynamics \eqref{eqn:parameterized-closed-loop-system}, but with $Z^{\bar{A},\bar{B}}_0 = \bar{X}_{\tau_0}$, and $(\tilde{W}_{i})_{i \in \mathbb{N}_0}$ and $(\tilde{V}_{i})_{i \in \mathbb{N}_0}$ satisfying $\tilde{W}_i = \bar{W}_{\tau_0 + i}$ and $\tilde{V}_i = \bar{V}_{\tau_0 + i}$ for $i \in \mathbb{N}_0$. Recall that $\bar{\theta}_{\tau_0} = [\bar{A}_{\tau_0}, \bar{B}_{\tau_0}].$ Letting $f(\bar{A},\bar{B},z) = \mathbb{E}[e^{|Z^{\bar{A},\bar{B}}_{1}|} \mid Z_0 = z]$ and noting that $\mathbb{E}[e^{|\bar{X}_{\tau_0+1}|} \mid \bar{\theta}_{\tau_0},\bar{X}_{\tau_0}] = \mathbb{E}[e^{|\bar{X}_{\tau_0+1}|} \mid \bar{A}_{\tau_0},\bar{B}_{\tau_0},\bar{X}_{\tau_0}] = f(\bar{A}_{\tau_0},\bar{B}_{\tau_0},\bar{X}_{\tau_0})$, it follows from Lemma \ref{lemma:stability-deterministic-param} that on the event $\{\bar{\theta}_{\tau_0} \in \bar{\mathcal{B}}_{h(\tau_0)}(\theta_*) \}$, $\mathbb{E}[e^{|\bar{X}_{\tau_0 + 1}|} \mid \bar{\theta}_{\tau_0},\bar{X}_{\tau_0}] \leq \lambda(h(\tau_0)) e^{|\bar{X}_{\tau_0}|} + \beta(h(\tau_0))$ holds. Thus, we have verified the base case.
    
    Now we move onto the inductive step. Suppose that $k \geq 1$ and \eqref{eqn:prop-stability-random-varying-param-5} holds for this $k$. Then, we have,
    \begin{align}
        &\mathbb{E}[\bm{1}_{\bar{\theta}_{\tau_0} \in \bar{\mathcal{B}}_{h(\tau_0)}(\theta_*),\hdots,\bar{\theta}_{\tau_0 + k } \in \bar{\mathcal{B}}_{h(\tau_0 + k)}(\theta_*)} e^{| \bar{X}_{\tau_0 + k + 1} |} ]  \\
        &= \mathbb{E}[ \bm{1}_{\bar{\theta}_{\tau_0} \in \bar{\mathcal{B}}_{h(\tau_0)}(\theta_*),\hdots,\bar{\theta}_{\tau_0 + k } \in \bar{\mathcal{B}}_{h(\tau_0 + k)}(\theta_*)} \\
        & \quad \times \mathbb{E} [ e^{| \bar{X}_{\tau_0 + k + 1} |} \mid \bar{\theta}_{\tau_0},\hdots,\bar{\theta}_{\tau_0 + k}, \bar{X}_{\tau_0 + k} ] ] \label{eqn:prop-stability-random-varying-param-7} \\
        &= \mathbb{E}[ \bm{1}_{\bar{\theta}_{\tau_0} \in \bar{\mathcal{B}}_{h(\tau_0)}(\theta_*),\hdots,\bar{\theta}_{\tau_0 + k } \in \bar{\mathcal{B}}_{h(\tau_0 + k)}(\theta_*)} \\
        &\quad \times \mathbb{E} [ e^{| \bar{X}_{\tau_0 + k + 1} |} \mid \bar{\theta}_{\tau_0 + k}, \bar{X}_{\tau_0 + k} ] ] \label{eqn:prop-stability-random-varying-param-8} \\
        &\leq \mathbb{E}[ \bm{1}_{\bar{\theta}_{\tau_0} \in \bar{\mathcal{B}}_{h(\tau_0)}(\theta_*),\hdots,\bar{\theta}_{\tau_0 + k } \in \bar{\mathcal{B}}_{h(\tau_0 + k)}(\theta_*)} \\
        &\quad \times ( \lambda(h(\tau_0 + k))e^{| \bar{X}_{\tau_0 + k} |} + \beta(h(\tau_0 + k)) ) ] \label{eqn:prop-stability-random-varying-param-9} \\
        &\leq \lambda(h(\tau_0 + k)) \mathbb{E}[ \bm{1}_{\bar{\theta}_{\tau_0} \in \bar{\mathcal{B}}_{h(\tau_0)}(\theta_*),\hdots,\bar{\theta}_{\tau_0 + k - 1 } \in \bar{\mathcal{B}}_{h(\tau_0 + k - 1)}(\theta_*)} \\[0em]
        &\quad \times e^{| \bar{X}_{\tau_0} + k |} ] + \beta(h(\tau_0 + k)) \label{eqn:prop-stability-random-varying-param-10} \\
        &\leq \lambda(h(\tau_0 + k)) \Big [ \mathbb{E}[e^{|\bar{X}_{\tau_0}|}] \prod_{i=\tau_0}^{\tau_0 + k - 1}\lambda(h(i)) \\
        &\quad + \sum_{i = \tau_0}^{\tau_0 + k - 1} \beta(h(i)) \prod_{j = i + 1}^{\tau_0 + k - 1} \lambda(h(j)) \Big ] + \beta(h(\tau_0 + k)) \label{eqn:prop-stability-random-varying-param-11} \\
        &= \mathbb{E}[e^{|\bar{X}_{\tau_0}|}] \prod_{i=\tau_0}^{\tau_0 + k}\lambda(h(i)) + \sum_{i = \tau_0}^{\tau_0 + k} \beta(h(i)) \prod_{j = i + 1}^{\tau_0 + k} \lambda(h(j)), \label{eqn:prop-stability-random-varying-param-12}
    \end{align}
    where \eqref{eqn:prop-stability-random-varying-param-7} follows from the tower property, \eqref{eqn:prop-stability-random-varying-param-8} follows from the conditional independence of $\bar{X}_{\tau_0 + k + 1}$ and $\bar{\theta}_{\tau_0},\hdots,\bar{\theta}_{\tau_0 + k-1}$ given $\bar{\theta}_{\tau_0 + k}, \bar{X}_{\tau_0 + k}$, \eqref{eqn:prop-stability-random-varying-param-9} follows via Lemma \ref{lemma:stability-deterministic-param} in a similar manner to \eqref{eqn:prop-stability-random-varying-param-2}, \eqref{eqn:prop-stability-random-varying-param-10} follows from $\bm{1}_{\bar{\theta}_{\tau_0 + k } \in \bar{\mathcal{B}}_{h(\tau_0 + k)}(\theta_*)} \leq 1$ and $\bm{1}_{\bar{\theta}_{\tau_0} \in \bar{\mathcal{B}}_{h(\tau_0)}(\theta_*),\hdots,\bar{\theta}_{\tau_0 + k - 1 } \in \bar{\mathcal{B}}_{h(\tau_0 + k - 1)}(\theta_*)} \leq 1$, and \eqref{eqn:prop-stability-random-varying-param-11} follows from the assumption that \eqref{eqn:prop-stability-random-varying-param-5} holds, which simplifies to \eqref{eqn:prop-stability-random-varying-param-12}. 
    Thus, we conclude via induction that \eqref{eqn:prop-stability-random-varying-param-5} holds for all $k \geq 1$.

    Letting $k = \tau - \tau_0$ and using \eqref{eqn:prop-stability-random-varying-param-5}, it follows that
    \begin{align}
        &\mathbb{E}[\bm{1}_{\bar{\theta}_{\tau_0} \in \bar{\mathcal{B}}_{h(\tau_0)}(\theta_*),\hdots,\bar{\theta}_{\tau - 1}(\theta_*) \in \bar{\mathcal{B}}_{h(\tau - 1)}} e^{| \bar{X}_{\tau} |} ] \\
        &\leq \mathbb{E}[e^{|\bar{X}_{\tau_0}|}] \prod_{i=\tau_0}^{\tau - 1}\lambda(h(i)) + \sum_{i = \tau_0}^{\tau - 1} \beta(h(i)) \prod_{j = i + 1}^{\tau - 1} \lambda(h(j)). \label{eqn:prop-stability-random-varying-param-6}
    \end{align}

    Now, define the events $\mathcal{E}^2 = \{ \bar{\theta}_{\tau_0} \in \bar{\mathcal{B}}_{h(\tau_0)}(\theta_*),\hdots,\bar{\theta}_{\tau - 1} \in \bar{\mathcal{B}}_{h(\tau - 1)}(\theta_*) \}$, and $\mathcal{E}^1_{\gamma} = \{ \bm{1}_{\mathcal{E}^2} e^{|\bar{X}_{\tau}|} < \frac{1}{\gamma} ( \mathbb{E}[e^{|\bar{X}_{\tau_0}|}] \prod_{i=\tau_0}^{\tau - 1}\lambda(h(i)) + \sum_{i = \tau_0}^{\tau - 1} \beta(h(i)) \prod_{j = i + 1}^{\tau - 1} \lambda(h(j)) ) \}$.
    From Markov's inequality and \eqref{eqn:prop-stability-random-varying-param-6}, it follows that $P(\mathcal{E}^1_{\gamma}) \geq 1 - \gamma$. On the event $\mathcal{E}^1_{\gamma} \cap \mathcal{E}^2$, $\bm{1}_{\mathcal{E}^2} e^{|\bar{X}_{\tau}|} = e^{|\bar{X}_{\tau}|}$ holds, which implies $e^{|\bar{X}_{\tau}|} < \frac{1}{\gamma}  ( \mathbb{E}[e^{|\bar{X}_{\tau_0}|}] \prod_{i=\tau_0}^{\tau - 1}\lambda(h(i)) + \sum_{i = \tau_0}^{\tau - 1} \beta(h(i)) \prod_{j = i + 1}^{\tau - 1} \lambda(h(j))  )$,
    which in turn is equivalent to $|\bar{X}_{\tau}| < \ln ( \frac{1}{\gamma} ) + \ln  ( \mathbb{E}[e^{|\bar{X}_{\tau_0}|}] \prod_{i=\tau_0}^{\tau - 1}\lambda(h(i)) + \sum_{i = \tau_0}^{\tau - 1} \beta(h(i)) \prod_{j = i + 1}^{\tau - 1} \lambda(h(j)) ) . $
    The conclusion follows by lower bounding the probability of this event:
    \begin{align}
        &P\big(|\bar{X}_{\tau}| < \ln ( \frac{1}{\gamma} ) + \ln \big( \big[ \mathbb{E}[e^{|\bar{X}_{\tau_0}|}] \prod_{i=\tau_0}^{\tau - 1}\lambda(h(i)) \\
        &+ \sum_{i = \tau_0}^{\tau - 1} \beta(h(i)) \prod_{j = i + 1}^{\tau - 1} \lambda(h(j)) \big] \big) \big) \\
        &\geq P(\mathcal{E}^1_{\gamma} \cap \mathcal{E}^2) \geq 1 - P((\mathcal{E}^1_{\gamma})^{\comp}) - P((\mathcal{E}^2)^{\comp}) \label{eqn:prop-stability-random-varying-param-15} \\
        &\geq 1 - \gamma - P(\bigcup_{k = \tau_0}^{\tau - 1} \{ \bar{\theta}_{k} \not \in \bar{\mathcal{B}}_{h(k)}(\theta_*) \}), \label{eqn:prop-stability-random-varying-param-16}
    \end{align}
    where \eqref{eqn:prop-stability-random-varying-param-15} follows from the union bound, and \eqref{eqn:prop-stability-random-varying-param-16} follows from $P(\mathcal{E}^1_{\gamma}) \geq 1 - \gamma$.
\end{proof}

Combining Propositions \ref{prop:estim-bound-all-time} and \ref{prop:stability-random-varying-param}, we are now ready to provide our main stability bound result in the form of Theorem \ref{theorem:stability-bound-all-epsilon}. It says, that given a failure probability $\delta$ and an estimation error parameter $\epsilon$, we have a corresponding high probability upper bound on the sub-sampled states if $\epsilon$ is sufficiently small. Note from Remark \ref{remark:lemma-stability-deterministic-param} that when $\epsilon \in (0, \min (m_q,(\Vert  \mathcal{R}_* \Vert M_q D)^{-1} ( \frac{D}{\Vert  \mathcal{R}_*^{\dagger} \Vert } - M_{\bar{V}} - M_{\bar{W}}  ) ))$ holds, $\lambda(\epsilon) \in (0,1)$ holds, and therefore the right hand side of \eqref{eqn:cor-stability-bound-all-epsilon-main-bound} is upper bounded by $\ln (2/\delta) + K(\epsilon,\delta/2,x_0) + \ln(\frac{\beta(\epsilon)}{1 - \lambda(\epsilon)})$ over all time, and asymptotically converges to $\ln (2/\delta) + \ln(\frac{\beta(\epsilon)}{1 - \lambda(\epsilon)})$ as $\tau \rightarrow \infty$. Because of this convergent behaviour, it can be viewed as providing a parameterized (in $\epsilon$) family of high probability stability bounds. 


\begin{theorem} \label{theorem:stability-bound-all-epsilon}
    Suppose Assumptions \ref{assump:disturbance}, \ref{assump:reachable}, \ref{assump:exciting-noise}, \ref{assump:large-sat-level}, and \ref{assump:bmsb} hold, and let $m_q,M_q>0$ satisfy Lemma \ref{lemma:control-error-simplified}. Suppose $x_0 \in \mathbb{R}^n$. Fix $\delta \in (0,1)$ and $\epsilon \in (0, \min (m_q,(\Vert  \mathcal{R}_* \Vert M_q D)^{-1} ( \frac{D}{\Vert  \mathcal{R}_*^{\dagger} \Vert } - M_{\bar{V}} - M_{\bar{W}}  ) ))$. Then, 
    \begin{align}
        |\bar{X}_{\tau}| < \ln(2/\delta) + \ln(K(\epsilon,\delta/2,x_0)\lambda^{\tau}(\epsilon) + \frac{\beta(\epsilon)}{1 - \lambda(\epsilon)}) \label{eqn:cor-stability-bound-all-epsilon-main-bound}
    \end{align}
    holds with probability at least $1 - \delta$ for all $\tau \geq 0$, where
    \begin{align}
        K(\epsilon,\delta,x_0) &:=  e^{|x_0|} ( e^{\Vert \mathcal{R}_*\Vert D + M_{\bar{V}} + M_{\bar{W}}} \lambda^{-1})^{ \tau_0'(\epsilon,\delta,x_0) }, \\
        \tau_0'(\epsilon,\delta,x_0) &:= \min \{ \tau \in \mathbb{N} \mid \kappa \tau \geq T_0(\delta,x_0),  \\
        &\quad e(\kappa i,\delta, x_0) \leq \epsilon \text{ for all } i \geq \tau \}, \label{def:tau_0'} 
    \end{align}
    with $T_0$, $e$, $\lambda$, $\beta$ defined in \eqref{def:T_0}, \eqref{def:e}, \eqref{def:lambda}, \eqref{def:beta} respectively.
\end{theorem}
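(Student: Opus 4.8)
The plan is to feed the all-time estimation error bound of Proposition~\ref{prop:estim-bound-all-time} and the conditional stability bound of Proposition~\ref{prop:stability-random-varying-param} each with confidence parameter $\delta/2$, and to combine them by a union bound; as in Theorem~\ref{theorem:simplified-bound-v3} the claim is read as ``for each fixed $\tau$, the event in \eqref{eqn:cor-stability-bound-all-epsilon-main-bound} has probability at least $1-\delta$''. Throughout, the hypothesis on $\epsilon$ ensures, via Remark~\ref{remark:lemma-stability-deterministic-param} and Assumption~\ref{assump:large-sat-level}, that $\lambda(\epsilon)\in(0,1)$, which is exactly what makes the geometric series converge. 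Write $\tau_0' := \tau_0'(\epsilon,\delta/2,x_0)$ and $\rho := e^{\Vert\mathcal{R}_*\Vert D + M_{\bar V}+M_{\bar W}}$, so that $K(\epsilon,\delta/2,x_0) = e^{|x_0|}(\rho/\lambda(\epsilon))^{\tau_0'}$.

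First I would record a crude a priori moment bound on the sub-sampled state. From the closed-loop recursion~\eqref{eqn:closed-loop-system}, $|\bar X_{\tau+1}| \le \Vert A^\kappa\Vert\,|\bar X_\tau| + \Vert\mathcal{R}_*\Vert D + |\bar V_\tau| + |\bar W_\tau|$ since $|\sat_D(\cdot)|\le D$, and $\Vert A^\kappa\Vert\le 1$ by Assumption~\ref{assump:reachable}. Exponentiating, using that $\bar X_\tau$ and $(\bar V_\tau,\bar W_\tau)$ are independent (the latter pair is built from the $V$'s and $W$'s indexed in $[\kappa\tau,\kappa(\tau+1)-1]$, which are independent of everything determining $X_{\kappa\tau}$) and that $\bar V_\tau$ and $\bar W_\tau$ are independent, one gets $\mathbb{E}[e^{|\bar X_{\tau+1}|}] \le \rho\,\mathbb{E}[e^{|\bar X_\tau|}]$; since $\bar X_0=x_0$, induction yields $\mathbb{E}[e^{|\bar X_\tau|}] \le e^{|x_0|}\rho^\tau$ for every $\tau$. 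In particular $\mathbb{E}[e^{|\bar X_{\tau_0'}|}]\lambda(\epsilon)^{-\tau_0'} \le e^{|x_0|}(\rho/\lambda(\epsilon))^{\tau_0'} = K(\epsilon,\delta/2,x_0)$; and since the upper bound imposed on $\epsilon$ forces $\rho \ge \lambda(\epsilon)$, for every $\tau\le\tau_0'$ we also obtain $\mathbb{E}[e^{|\bar X_\tau|}] \le e^{|x_0|}\rho^\tau \le e^{|x_0|}\rho^{\tau_0'}\lambda(\epsilon)^{\tau-\tau_0'} = K(\epsilon,\delta/2,x_0)\lambda(\epsilon)^\tau$.

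Next I would invoke Proposition~\ref{prop:estim-bound-all-time} with failure probability $\delta/2$: on an event of probability at least $1-\delta/2$ one has $\Vert\hat\theta_T-\theta_*\Vert \le e(T,\delta/2,x_0)$ for all $T\ge T_0(\delta/2,x_0)$. By the definition~\eqref{def:tau_0'} of $\tau_0'$, every $i\ge\tau_0'$ satisfies $\kappa i\ge T_0(\delta/2,x_0)$ and $e(\kappa i,\delta/2,x_0)\le\epsilon$, so on that event $\bar\theta_i=\hat\theta_{\kappa i}\in\bar{\mathcal{B}}_\epsilon(\theta_*)$ for all $i\ge\tau_0'$; hence $P\big(\bigcup_{i=\tau_0'}^{\tau}\{\bar\theta_i\notin\bar{\mathcal{B}}_\epsilon(\theta_*)\}\big)\le\delta/2$ for every $\tau$. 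For $\tau>\tau_0'$ I then apply Proposition~\ref{prop:stability-random-varying-param} with the constant choice $h\equiv\epsilon$ (admissible since $\epsilon<m_q$), $\tau_0=\tau_0'$, and $\gamma=\delta/2$, obtaining with probability at least $1-\delta/2-\delta/2=1-\delta$ that $|\bar X_\tau| < \ln(2/\delta) + \ln\big(\mathbb{E}[e^{|\bar X_{\tau_0'}|}]\lambda(\epsilon)^{\tau-\tau_0'} + \beta(\epsilon)\sum_{k=0}^{\tau-1-\tau_0'}\lambda(\epsilon)^k\big)$; substituting $\mathbb{E}[e^{|\bar X_{\tau_0'}|}]\lambda(\epsilon)^{\tau-\tau_0'} = \big(\mathbb{E}[e^{|\bar X_{\tau_0'}|}]\lambda(\epsilon)^{-\tau_0'}\big)\lambda(\epsilon)^\tau \le K(\epsilon,\delta/2,x_0)\lambda(\epsilon)^\tau$ and $\sum_{k=0}^{\tau-1-\tau_0'}\lambda(\epsilon)^k \le (1-\lambda(\epsilon))^{-1}$ (here $\lambda(\epsilon)\in(0,1)$ is essential) gives exactly~\eqref{eqn:cor-stability-bound-all-epsilon-main-bound}. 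For the leftover range $0\le\tau\le\tau_0'$, Markov's inequality applied to $\mathbb{E}[e^{|\bar X_\tau|}]\le K(\epsilon,\delta/2,x_0)\lambda(\epsilon)^\tau$ gives, with probability at least $1-\delta/2\ge1-\delta$, $e^{|\bar X_\tau|}<(2/\delta)K(\epsilon,\delta/2,x_0)\lambda(\epsilon)^\tau$, which is stronger than~\eqref{eqn:cor-stability-bound-all-epsilon-main-bound}; thus the bound holds for all $\tau\ge0$.

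I expect the main obstacle to be bookkeeping rather than conceptual: carefully splitting the confidence budget into two $\delta/2$ pieces, and verifying that the good event of Proposition~\ref{prop:estim-bound-all-time} really does imply $\bar\theta_i\in\bar{\mathcal{B}}_\epsilon(\theta_*)$ for \emph{all} $i\ge\tau_0'$ through the definition~\eqref{def:tau_0'} of $\tau_0'$ (in particular that $\tau_0'$ is finite, using $e(T,\delta/2,x_0)\to0$). The only genuinely computational points are the elementary inequalities letting the crude moment bound be absorbed into $K(\epsilon,\delta/2,x_0)\lambda(\epsilon)^\tau$ --- chiefly $\rho\ge\lambda(\epsilon)$ --- and these are precisely where the two-sided role of the bound imposed on $\epsilon$ shows up: once to keep $\lambda(\epsilon)<1$, and once to dominate the crude growth rate $\rho$.
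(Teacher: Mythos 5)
Your proposal is correct and follows essentially the same route as the paper's proof: the crude exponential moment bound $\mathbb{E}[e^{|\bar{X}_{\tau}|}]\leq e^{|x_0|}\big(e^{\Vert \mathcal{R}_*\Vert D + M_{\bar{V}} + M_{\bar{W}}}\big)^{\tau}$, a case split at $\tau_0'(\epsilon,\delta/2,x_0)$, Proposition \ref{prop:estim-bound-all-time} at level $\delta/2$ to control the bad estimation event, Proposition \ref{prop:stability-random-varying-param} with $\gamma=\delta/2$ for $\tau>\tau_0'$, and Markov's inequality on the crude bound for $\tau\leq\tau_0'$. The only (harmless) difference is that you invoke Proposition \ref{prop:stability-random-varying-param} with the constant radius $h\equiv\epsilon$, whereas the paper uses $h(i)=e(\kappa i,\delta/2,x_0)$ and then bounds $\lambda(h(i))\leq\lambda(\epsilon)$ and $\beta(h(i))\leq\beta(\epsilon)$ by monotonicity.
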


\begin{proof}
    We begin by deriving a worst-case high probability bound on the states. For all $\tau \geq 0$, we have
    \begin{align}
        |\bar{X}_{\tau}| &= |A^{\kappa \tau} \bar{X}_0 + \sum_{i=0}^{\tau-1} A^{\kappa (\tau - 1 - i)} (\mathcal{R}_* \sat_D(-\mathcal{R}_{\kappa}(\bar{A}_i,\bar{B}_i)\bar{X}_{i}) \\
        &\quad + \bar{V}_i + \bar{W}_i)  | \label{eqn:theorem-stability-final-1} \\
        &\leq |x_0| + \sum_{i=0}^{\tau-1}  ( \Vert  \mathcal{R}_* \Vert D  + |\bar{V}_i| + |\bar{W}_i|), \label{eqn:theorem-stability-final-2}
    \end{align}
    where \eqref{eqn:theorem-stability-final-1} follows via iterative application of \eqref{eqn:closed-loop-system}, and \eqref{eqn:theorem-stability-final-2} holds from Assumption \ref{assump:reachable} and $\sat_D(\cdot) \leq D$. From the monotonicity of expectation and exponential functions, it follows that
    \begin{align}
        &\mathbb{E}[e^{|\bar{X}_{\tau}|}] \\
        &\leq e^{|x_0|} \prod_{i=0}^{\tau-1}(e^{ \Vert  \mathcal{R}_* \Vert D} \mathbb{E}[e^{|\bar{V}_i|}] \mathbb{E}[e^{|\bar{W}_i|}]) \label{eqn:theorem-stability-final-3} \\
        &\leq e^{ |x_0|} \prod_{i=0}^{\tau-1}(e^{ \Vert  \mathcal{R}_* \Vert D + M_{\bar{V}} + M_{\bar{W}}} ) \label{eqn:theorem-stability-final-4} \\
        &= e^{\Vert A\Vert^{\kappa \tau} | x_0 |} ( e^{\Vert \mathcal{R}_* \Vert D + M_{\bar{V}} + M_{\bar{W}}} )^{\sum_{i=0}^{\tau-1} \Vert  A\Vert^{\kappa i}}  \\
        &\leq e^{ | x_0 |} ( e^{\Vert \mathcal{R}_* \Vert D + M_{\bar{V}} + M_{\bar{W}}} )^{\tau} \label{eqn:theorem-stability-final-5}
    \end{align}
    holds for all $\tau \geq 0$, where \eqref{eqn:theorem-stability-final-3} holds due to Assumption \ref{assump:exciting-noise}. Using Markov's inequality, it follows from \eqref{eqn:theorem-stability-final-5} that for all $\gamma \in (0,1)$,
    \begin{align}
        e^{|\bar{X}_{\tau}|} < \frac{1}{\gamma}( e^{ | x_0 |} ( e^{\Vert \mathcal{R}_* \Vert D + M_{\bar{V}} + M_{\bar{W}}} )^{\tau} ) \label{eqn:combined-main-result-1}
    \end{align}
    holds with probability at least $1-\gamma$ for all $\tau \geq 0$. Equivalently, we have that for all $\gamma \in (0,1)$, 
    \begin{align}
        |\bar{X}_{\tau}| < \ln{(\frac{1}{\gamma} e^{ | x_0 |} ( e^{\Vert \mathcal{R}_* \Vert D + M_{\bar{V}} + M_{\bar{W}}} )^{\tau } )} \label{eqn:combined-main-result-2}
    \end{align}
    holds with probability at least $1-\gamma$ for all $\tau \geq 0$.

    We now proceed by considering the case where $\tau$ satisfies $0 \leq \tau \leq \tau_0'(\epsilon,\delta/2,x_0)$ and $\tau > \tau_0'(\epsilon,\delta/2,x_0)$ separately.

    \textit{Case 1:} Suppose $\tau > \tau_0'(\epsilon,\delta/2,x_0)$. Then,
    \begin{align}
        &P(\bigcup_{i = \tau_0'(\epsilon,\delta/2,x_0)}^{\tau} \{ \bar{\theta}_i \not \in \bar{\mathcal{B}}_{e(\kappa i, \delta/2, x_0)}(\theta_*) \} ) \\
        &\leq 1 - P(\bigcap_{i \geq \tau_0'(\epsilon,\delta/2,x_0)} \{ \bar{\theta}_i \in \bar{\mathcal{B}}_{e(\kappa i, \delta/2, x_0)}(\theta_*) \} ) \\
        &\leq 1 - P(\Vert \bar{\theta}_i-\theta_*\Vert _{2} \leq e(\kappa i, \delta/2, x_0) \\
        & \quad \quad \text{for all } i \geq \frac{1}{\kappa}T_0(\delta/2,x_0)) \label{eqn:theorem-stability-final-6} \\
        & \leq 1 - P(\Vert \hat{\theta}_{T}-\theta_*\Vert _{2} \leq e(T,\delta/2,x_0)\\
        &  \quad \quad  \text{for all } T \geq  T_0(\delta/2,x_0)) \label{eqn:theorem-stability-final-7} \\
        & \leq \delta/2 \label{eqn:theorem-stability-final-8}
    \end{align}
    where \eqref{eqn:theorem-stability-final-6} follows from the definition of $\tau_0'$ in \eqref{def:tau_0'}, and \eqref{eqn:theorem-stability-final-7} holds since $\bar{\theta}_{i} = \hat{\theta}_{\kappa i}$, alongside $P(\Vert \hat{\theta}_{T}-\theta_*\Vert _{2} \leq e(\kappa T, \delta,x_0) \text{ for all } T \geq  T_0(\delta,x_0)) \leq P(\Vert \hat{\theta}_{\kappa i}-\theta_*\Vert _{2} \leq e(\kappa i, \delta, x_0) \text{ for all } i \geq \frac{1}{\kappa}T_0(\delta,x_0))$. Moreover, \eqref{eqn:theorem-stability-final-8} holds following Proposition \ref{prop:estim-bound-all-time}. Combining \eqref{eqn:theorem-stability-final-8} with Proposition \ref{prop:stability-random-varying-param} and setting $h(i)\leftarrow e(\kappa i,\delta/2,x_0)$, $\tau_0 \leftarrow \tau_0'(\epsilon,\delta/2,x_0)$, and $\gamma \leftarrow \delta/2$, we find that
    \begin{align}
        &|\bar{X}_{\tau}| < \ln ( 2/\delta ) + \ln(  \mathbb{E}[ e^{|\bar{X}_{\tau_0}|} ] \prod_{i = \tau_0'(\epsilon,\delta/2,x_0)}^{\tau-1}\lambda(e(\kappa i, \delta/2, x_0)) \\
        & \ + \sum_{i = \tau_0'(\epsilon,\delta/2,x_0)}^{\tau - 1} \beta(e(\kappa i, \delta/2, x_0)) \prod_{j=i+1}^{\tau-1} \lambda(e(\kappa j, \delta/2, x_0)) ) \label{eqn:prop-stability-random-param-2-eq3} \\
        &\leq \ln ( 2/\delta ) + \ln( e^{ | x_0 |} ( e^{\Vert \mathcal{R}_* \Vert D + M_{\bar{V}} + M_{\bar{W}}} )^{\tau_0'(\epsilon,\delta/2,x_0)}  \lambda^{\tau}(\epsilon) \\
        & \ +   \beta(\epsilon)  \frac{1 - \lambda^{\tau- \tau_0'(\epsilon,\delta/2,x_0)}(\epsilon)}{1-\lambda(\epsilon)}  ) \label{eqn:combined-main-result-3}\\
        & \leq  \ln(2/\delta) + \ln(K(\epsilon,\delta/2)\lambda^{\tau}(\epsilon) + \frac{\beta(\epsilon)}{1 - \lambda(\epsilon)}) \label{eqn:combined-main-result-4}
    \end{align}
    holds with at least probability $1 - \delta$ for all $\tau > \tau_0'(\epsilon,\delta/2,x_0)$, where \eqref{eqn:combined-main-result-3} follows from the definition of $\tau_0'$ in \eqref{def:tau_0'} and by applying \eqref{eqn:combined-main-result-2} with the substitution $\gamma \leftarrow \delta/2$ at $\tau \leftarrow \tau_0'(\epsilon,\delta/2,x_0)$, and \eqref{eqn:combined-main-result-4} follows after simplification.
    
    \textit{Case 2:} Suppose $\tau \leq \tau_0'(\epsilon,\delta/2,x_0)$. Then, by applying \eqref{eqn:combined-main-result-2} with the substitution $\gamma \leftarrow \delta/2$, we find that for all $\tau \geq 0$,
    \begin{align}
        |\bar{X}_{\tau}| &< \ln(2/\delta) + \ln( e^{| x_0 |} ( e^{\Vert \mathcal{R}_* \Vert D + M_{\bar{V}} + M_{\bar{W}}} )^{\tau} ) \label{eqn:cor-stability-bound-all-epsilon-4}
    \end{align}
    holds with probability at least $1 - \delta/2$, implying it also holds with probability at least $1 - \delta$. Next, note that
    \begin{align}
        &e^{| x_0 |} ( e^{\Vert \mathcal{R}_* \Vert D + M_{\bar{V}} + M_{\bar{W}}} )^{\tau}  \\
        &\leq  e^{| x_0 |} ( e^{\Vert \mathcal{R}_* \Vert D + M_{\bar{V}} + M_{\bar{W}}} \lambda^{-1} )^{\tau_0'(\epsilon,\delta/2,x_0)}  \\
        &= K(\epsilon,\delta/2)\lambda^{\tau}(\epsilon) + \frac{\beta(\epsilon)}{1 - \lambda(\epsilon)} \label{eqn:cor-stability-bound-all-epsilon-5}
    \end{align}
    holds. By substituting \eqref{eqn:cor-stability-bound-all-epsilon-5} into \eqref{eqn:cor-stability-bound-all-epsilon-4}, it follows that \eqref{eqn:cor-stability-bound-all-epsilon-main-bound} has been verified for $\tau$ satisfying $0 \leq \tau \leq \tau_0'(\epsilon,\delta/2,x_0)$.
\end{proof}

Before proving Theorem 1, we provide Lemma \ref{lemma:bound-time-stable} to simplify $\tau_0'$ from \eqref{def:tau_0'} for a given $\epsilon$. 
\begin{lemma} \label{lemma:bound-time-stable}
    Suppose Assumptions \ref{assump:disturbance}, \ref{assump:reachable}, \ref{assump:exciting-noise}, and \ref{assump:bmsb} hold. For all $\epsilon > 0$, there exist $L_1 > 0$ and a function $L_2 : \mathbb{R}^n \rightarrow \mathbb{R}_{>0}$ such that $\tau_0'(\epsilon,\delta) \leq L_2(x_0) + L_1 \ln(1/\delta)$ for all $x_0 \in \mathbb{R}^n$ and $\delta \in (0,1)$.
\end{lemma}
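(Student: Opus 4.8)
The plan is to unwind the definition of $\tau_0'(\epsilon,\delta,x_0)$ in \eqref{def:tau_0'}, which requires two things of $\tau$: (i) $\kappa\tau \ge T_0(\delta,x_0)$, and (ii) $e(\kappa i,\delta,x_0)\le\epsilon$ for all integers $i\ge\tau$. I will show separately that each is guaranteed once $\tau$ exceeds a quantity of the shape $(\text{nondecreasing function of }|x_0|) + (\text{constant})\cdot\ln(1/\delta)$, and then take $L_1$, $L_2$ to be the sum of the two bounds.

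\textbf{Step 1 (crude over-estimates of $e$ and of the threshold inside $T_0$).} The only $T$- and $x_0$-dependent quantity appearing in \eqref{def:e} and \eqref{def:T_0} is $\overline\Gamma(T,x_0):=4|x_0|^2 + 2(D^2 + \tr(\Sigma_V)) + 4(\Vert B\Vert^2(D^2+\tr(\Sigma_V)) + \tr(\Sigma_W))T^2 + \lambda_{\text{max}}(\Gamma_{\text{sb}})$, a quadratic polynomial in $T$ whose constant term is affine in $|x_0|^2$; hence for $T\ge 1$, $\ln\overline\Gamma(T,x_0)\le 2\ln T + c_0(x_0)$ for an explicit nondecreasing $c_0$. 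Since $\Gamma_{\text{sb}}$ is a fixed positive definite matrix, $\ln\det\!\big(\tfrac{3(-1+\pi^2/6)(T+1)^2}{\delta}\,\overline\Gamma(T,x_0)\,\Gamma_{\text{sb}}^{-1}\big)= n\ln\!\big(\tfrac{3(-1+\pi^2/6)(T+1)^2}{\delta}\,\overline\Gamma(T,x_0)\big)+\ln\det\Gamma_{\text{sb}}^{-1}$; this together with the separate $\ln\!\big(\tfrac{3(-1+\pi^2/6)(T+1)^2}{\delta}\big)$ term is at most $a_1\ln T + (n+1)\ln(1/\delta) + c_1(x_0)$ for an $x_0$-free constant $a_1$ and a nondecreasing $c_1$ (the coefficient $n+1$ and the $x_0$-freeness of $a_1$ are the points that matter). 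Substituting into \eqref{def:e} gives $e(T,\delta,x_0)^2\le \tfrac1T\big(b_1\ln T + b_2\ln(1/\delta) + c_2(x_0)\big)$ with $x_0$-free constants $b_1,b_2>0$ and nondecreasing $c_2$; substituting into the bracket defining $T_0$ in \eqref{def:T_0}, its right-hand side is at most $d_1\ln T + d_2\ln(1/\delta) + c_3(x_0)$ with $x_0$-free $d_1,d_2>0$ and nondecreasing $c_3$.

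\textbf{Step 2 (``$a\ln T\le bT$ eventually'' plus monotonicity).} Using the elementary fact that for any $a,b>0$ there is a constant $T^\star(a,b)$ with $a\ln T\le bT$ for all $T\ge T^\star(a,b)$: splitting $b_1\ln T\le\tfrac{\epsilon^2}{2}T$ (for $T\ge T^\star(b_1,\epsilon^2/2)$) and $b_2\ln(1/\delta)+c_2(x_0)\le\tfrac{\epsilon^2}{2}T$ (for $T\ge\tfrac{2}{\epsilon^2}(b_2\ln(1/\delta)+c_2(x_0))$) yields $e(T,\delta,x_0)\le\epsilon$ once $T\ge T^\star(b_1,\epsilon^2/2) + \tfrac{2}{\epsilon^2}c_2(x_0) + \tfrac{2b_2}{\epsilon^2}\ln(1/\delta)$. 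Moreover $T\mapsto \tfrac1T(b_1\ln T + b_2\ln(1/\delta)+c_2(x_0))$ is nonincreasing on $[e,\infty)$, so once this upper bound on $e(\kappa\tau,\delta,x_0)^2$ drops below $\epsilon^2$ it remains below $\epsilon^2$ at every $T=\kappa i$ with $i\ge\tau$; this handles the ``for all $i\ge\tau$'' in (ii), and shows (ii) holds whenever $\tau\ge \tilde L_2(x_0)+\tilde L_1\ln(1/\delta)$ for suitable $\tilde L_1$ ($x_0$-free) and nondecreasing $\tilde L_2$. Applying the same fact to $d_1\ln T\le\tfrac12 T$ and $d_2\ln(1/\delta)+c_3(x_0)\le\tfrac12 T$ shows the defining inequality in \eqref{def:T_0} holds for all $T\ge T^\star(d_1,1/2)+2c_3(x_0)+2d_2\ln(1/\delta)$, so $T_0(\delta,x_0)\le \hat L_2(x_0)+\hat L_1\ln(1/\delta)$ and (i) follows from $\tau\ge\tfrac1\kappa(\hat L_2(x_0)+\hat L_1\ln(1/\delta))$. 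Taking $L_1:=\tilde L_1+\hat L_1/\kappa$ and $L_2(x_0):=1+\tilde L_2(x_0)+\hat L_2(x_0)/\kappa$ (the $+1$ absorbing the passage to an integer $\tau$ and the constraint $T\ge e$) finishes the argument.

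The main obstacle is not conceptual but organizational: one must carefully track how the polynomial $\overline\Gamma(T,x_0)$ propagates through the log-determinant and the square root in \eqref{def:e}--\eqref{def:T_0}, and bundle the numerous resulting constants so that the entire $|x_0|$-dependence is confined to $L_2$ while the coefficient of $\ln(1/\delta)$ stays $x_0$-free. The only genuinely analytic ingredients are the ``$a\ln T\le bT$ for large $T$'' estimate and the eventual-monotonicity observation required to convert a bound at $T=\kappa\tau$ into one valid for all later $T=\kappa i$.
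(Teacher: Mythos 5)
Your proposal is correct and follows essentially the same route as the paper's proof: reduce both requirements in \eqref{def:tau_0'} to inequalities of the form $T \geq (\text{const})\ln T + (\text{const})\ln(1/\delta) + (\text{term in } x_0)$ by expanding the log-determinant and bounding the scalar factor by a multiple of $T^2$, then exploit the sublinearity of the logarithm (the paper via the explicit tangent line $K\ln(T+1)\leq T/2 + K\ln(2K)-K+1/2$, you via an abstract threshold $T^\star(a,b)$), and finally combine the two thresholds and divide by $\kappa$. The only slip is cosmetic: since $\Gamma_{\text{sb}}$ has dimension $n+m$, the determinant expansion gives a coefficient $n+m$ (so $n+m+1$ on $\ln(1/\delta)$, matching the paper's $f_4$, $f_5$) rather than $n$, which changes nothing in the conclusion.
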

We defer the proof of this result to the supplementary materials, but describe the key details here. This result follows by first upper bounding $T_0(\delta)$ by using the fact that $\log(T+1)$ can linearly upper bounded by its tangent since it is concave, then using simple but tedious algebraic manipulation to find a sufficient bound independent of $T$ such that $T$ greater than this bound implies $T \geq T_0(\delta)$. This is repeated for upper bounding the minimum time $T'$ that $e(t,\delta,x_0) \leq \epsilon$ for all $t \geq T'$. Finally, $\tau_0'(\delta)$ is bounded by simplifying the maximum of these two bounds, then dividing by $\kappa$ to translate to sub-sampled time.

Our main result --- Theorem \ref{theorem:simplified-bound-v3} --- immediately follows from Theorem \ref{theorem:stability-bound-all-epsilon} and Lemma \ref{lemma:bound-time-stable}.
\begin{proof}[Proof of Theorem \ref{theorem:simplified-bound-v3}]
    Let $m_q,M_q>0$ satisfy Lemma \ref{lemma:control-error-simplified}. Let $\epsilon$ satisfy $\epsilon\in(0, \min ( m_q, (\Vert  \mathcal{R}_* \Vert M_q D)^{-1} ( \frac{D}{\Vert  \mathcal{R}_*^{\dagger} \Vert } - M_{\bar{V}} - M_{\bar{W}}  ) ))$, $N_3 = \beta(\epsilon)/(1-\lambda(\epsilon))$, and $\lambda_{\circ} = \lambda(\epsilon)$. Let $L_1 > 0$, $L_2:\mathbb{R}^n \rightarrow \mathbb{R}_{\geq 0}$ satisfy Lemma \ref{lemma:bound-time-stable} for the chosen $\epsilon$, such that $\tau_0'(\epsilon,\delta,x_0) \leq L_2(x_0) + L_1 \ln(1/\delta)$ holds. Let $N_1 = (\Vert \mathcal{R}_*\Vert D + M_{\bar{V}} + M_{\bar{W}} + \ln (\lambda^{-1}(\epsilon)))L_1$, and $N_2(x_0) = e^{|x_0|} ( e^{(\Vert \mathcal{R}_*\Vert D + M_{\bar{V}} + M_{\bar{W}} + \ln (\lambda^{-1}(\epsilon))) L_2(x_0)} )$. Then, we have
    \begin{align}
        K(\epsilon,\frac{\delta}{2},x_0) &=  e^{|x_0|} ( e^{\Vert \mathcal{R}_*\Vert D + M_{\bar{V}} + M_{\bar{W}} + \ln (\lambda^{-1}(\epsilon))} )^{ \tau_0'\big(\epsilon,\frac{\delta}{2},x_0\big) } \\
        &\leq  N_2(x_0)(2/\delta)^{N_1}
    \end{align}
    The result then follows by applying Theorem \ref{theorem:stability-bound-all-epsilon}.
\end{proof}

\section{Numerical Examples} \label{sec:sim}

To demonstrate the effectiveness of our adaptive control strategy in Algorithm \ref{alg:control-algorithm}, we first tested it on three different plants where $W_t \stackrel{i.i.d.}{\sim} \mathcal{N}(0,\mathbf{I})$:
\begin{enumerate}
    \item $A_1 = \begin{bmatrix} \cos (\pi/4) & \sin (\pi/4) \\ -\sin (\pi/4) & \cos (\pi/4) \end{bmatrix}$, $B_1 = \begin{bmatrix} 0 \\ 1 \end{bmatrix}$;
    \item $A_2 = \begin{bmatrix} \cos (-\pi/2) & \sin (-\pi/2) \\ -\sin (-\pi/2) & \cos (-\pi/2) \end{bmatrix}$, $B_2 = \begin{bmatrix} 0.3 \\ -0.5 \end{bmatrix}$;
    \item $A_3 = \begin{bmatrix} 0.8 \cos (\pi/4) & 0.8 \sin (\pi/4) \\ -0.8\sin (\pi/4) & 0.8\cos (\pi/4) \end{bmatrix}$, $B_3 = \begin{bmatrix} 0.5 \\ 0 \end{bmatrix}$.
\end{enumerate}
The algorithm parameters were also fixed to $U_{\text{max}}\leftarrow 1$, $C \leftarrow 0.4$, $\kappa \leftarrow 2$, with $V_t \stackrel{\text{i.i.d.}}{\sim}\text{Uniform}(-C,C)$ and $(\bar{A}_0,\bar{B}_0)$ randomly selected (but fixed across all trials).
Moreover, we additionally simulated system $(A_1,B_1)$ when it is uncontrolled (i.e. $U_t=0$). The plots of the median and 90th percentiles of $|X_t|$ over 100 trials are shown in Fig. \ref{fig:sim-compare-sys-v2}. It can be seen that in all tests where Algorithm \ref{alg:control-algorithm} is applied, at both the median and 90th percentile, $|X_t|$ seems to exhibit stable behaviour in the sense of boundedness, which is consistent with Theorem \ref{theorem:simplified-bound-v3}. This is in contrast to the case with no controls, where the median and 90th percentile plots have unbounded growth. 

Secondly, we tested Algorithm \ref{alg:control-algorithm} using the same algorithm parameters on $(A_1,B_1)$ again, but with $\Sigma_W = 0.1 \mathbf{I}$ and varying $x_0$. The plots are shown in Figure \ref{fig:sim-compare-x0}. We see that regardless of the initial state, both the median and $90th$ percentile converge. In particular, if we focus individually on either the median or $90$th percentile plots, and vary $x_0$, it appears that convergence occurs to the same steady state. Moreover, this convergence \textit{seems} linear. This is consistent with the trends of the upper bound in Theorem \ref{theorem:simplified-bound-v3}.

\begin{figure}[h]
    \begin{subfigure}[b]{0.49\textwidth}
        \centering
        \includegraphics[width=0.6\textwidth]{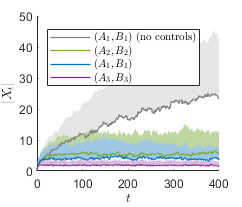}
        \caption{Simulation for varying $(A,B)$, and with no controls.}
        \label{fig:sim-compare-sys-v2}
    \end{subfigure}
    \hfill
    \begin{subfigure}[b]{0.49\textwidth}
        \centering
        \includegraphics[width=0.6\textwidth]{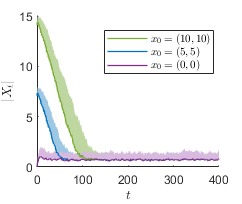}
        \caption{Simulation for varying $x_0$.}
        \label{fig:sim-compare-x0}
    \end{subfigure}%
    \caption{Plots for median and 90th percentile of $|X_t|$ over 100 trials.}
    \label{fig:median-90-plots}
\end{figure}


\section{Conclusion} \label{sec:conclusion}
We proposed an excited CE control scheme for adaptive control of multi-dimensional, stochastic, linear systems subject to additive, i.i.d, unbounded stochastic disturbances, with positive upper bound constraints on the control magnitude. Moreover, we established a high probability stability bound on the $\kappa$-sub-sampled states of the closed-loop system. The stability of our control strategy is verified in numerical examples.

This work can be extended in several directions. Our method has the potential to be extended to linear systems where $(A,B)$ are stabilizable, $\lambda_{\text{max}}(A)=1$, and eigenvalues of $A$ on the unit circle have equal algebraic and geometric multiplicity, since the existence of mean square stabilizing controllers has been demonstrated \cite{ramponi2010attaining}. However, controller design based on \cite{ramponi2010attaining} involves a similarity transformation $T(A)$ that takes $A$ to real Jordan form. Algorithm \ref{alg:control-algorithm} could be modified to support such controllers, however, stability analysis would require inspection of the continuity properties of $T(A)$. Another interesting direction is the consideration of output-feedback problems, since we only address the full-state feedback setting. Overcoming Assumption \ref{assump:large-sat-level} is also of interest, since it is known that in the non-adaptive setting, arbitrarily small controls are sufficient for stochastic stability \cite{chatterjee2012mean}. We leave this to future work.





\bibliography{references.bib}
\bibliographystyle{ieeetr.bst}


\newpage

\section*{Supplementary Materials}
\subsection{Perturbation Bounds for Certainty Equivalent Component of Controls}

\begin{proof}[Proof of Lemma \ref{lemma:control-error-simplified}]
    Let 
    \begin{align}
        m_q&=q_1(\kappa,A,B)/2, \label{def:m_q}\\
        M_q&= \frac{4 q_5(q_1(\kappa, A,B)/2,\kappa,A,B)}{ q_1(\kappa, A, B) (\sigma_{\text{min}}(g(A,B)) - q_5(q_1(\kappa, A,B)/2,\kappa,A,B))}, \label{def:M_q}
    \end{align}
    with $q_1, q_5$ defined in \eqref{def:q_1} and \eqref{def:q_5} respectively. Suppose $\bar{A} \in \bar{\mathcal{B}}_{\epsilon}(A)$, $\bar{B} \in \bar{\mathcal{B}}_{\epsilon}(B)$, and $x \in \mathbb{R}^n$. Then,
    \begin{align}
        &|\sat_D(-g(A_2,B_2)x) - \sat_D(-g(A,B)x)| \\
        & \leq q_2(q_1(\kappa,A,B)/2, D, \kappa,A,B) \label{eqn:lemma-control-error-simplified-1} \\
        & \leq \frac{2q_2(q_1(\kappa, A,B)/2, D, \kappa, A, B )}{q_1(\kappa, A, B)}  \epsilon \label{eqn:lemma-control-error-simplified-2} \\
        &= M_q(\kappa, A, B) D \epsilon, \label{eqn:lemma-control-error-simplified-3}
    \end{align}
    holds, where \eqref{eqn:lemma-control-error-simplified-1} follows from Lemma \ref{lemma:control-error} with $q_2$ defined in \ref{def:q_2}, \eqref{eqn:lemma-control-error-simplified-2} follows from the convexity of $q_2(\epsilon,D,\kappa,A,B)$ in $\epsilon$ over the interval $[0,q_1(\kappa,A,B)/2]$, and \eqref{eqn:lemma-control-error-simplified-3} follows \eqref{def:M_q}. The convexity of $q_2$ in $\epsilon$ holds since it is constructed via 1) the composition of convex functions where the deepest function is convex in $\epsilon$, and 2) the multiplication and addition of convex functions in $\epsilon$, over this interval.
\end{proof}

We state Lemma \ref{lemma:control-error}, which provides perturbation bounds on the certainty-equivalent component of the controls as a function of parameter estimation error, and holds uniformly over all states in the state space.
\begin{lemma} \label{lemma:control-error}
    Consider $(A_1,B_1) \in \mathbb{R}^{n \times n} \times \mathbb{R}^{n \times m}$ that is $\kappa$-step reachable, and $r >0$. Fix $\epsilon \in [0, q_1(\kappa,A_1,B_1) )$. Then,
    \begin{align}
        &|\sat_r(-g(A_2,B_2)x) - \sat_r(-g(A_1,B_1)x)| \\
        & \leq q_2(\epsilon,r,\kappa,A_1,B_1)
    \end{align}
    holds for all $A_2 \in \bar{\mathcal{B}}_{\epsilon}(A_1)$, $B_2 \in \bar{\mathcal{B}}_{\epsilon}(B_1)$, and $x \in \mathbb{R}^n$. The functions $q_1$ and $q_2$ are defined as
    \begin{align}
        &q_1(\kappa,A,B) := \min \{ q_3(\kappa,A,B), q_4(\kappa,A,B)\}, \label{def:q_1} \\
        &q_2(\epsilon,r,\kappa,A,B) := \frac{2 r q_5(\epsilon,\kappa,A,B)}{\sigma_{\text{min}}(g(A,B)) - q_5(\epsilon,\kappa,A,B)}. \label{def:q_2}
    \end{align}
    They are supported by $q_3, q_4, q_5, q_6, q_7$, and $q_8$, which are defined as
    \begin{align}
        &q_3(\kappa,A,B) := \sup \{ a > 0 \mid q_6(\epsilon,\kappa,A,B) < \\
        & \quad \sigma_{\text{min}} (\mathcal{R}_{\kappa}(A,B)\mathcal{R}^{\top}_{\kappa}(A,B) ) \text{ for all } \epsilon \in [0,a] \},\\
        &q_4(\kappa,A,B) := \sup \{ a > 0 \mid q_5(\epsilon,\kappa,A,B) < \\
        &\quad \sigma_{\text{min}} (g(A,B)) \text{ for all } \epsilon \in [0,a] \}, \\
        & q_5(\epsilon,\kappa,A,B):= q_7(\epsilon,\kappa,A,B) q_{10}(\epsilon,\kappa,A) \label{def:q_5} \\
        & \quad + \Vert  R^{\dagger}_{\kappa}(A,B) \Vert q_{10}(\epsilon,\kappa,A) + \Vert  A^{\kappa} \Vert q_7(\epsilon,\kappa,A,B), \\
        & q_6(\epsilon,\kappa,A,B):= ( \sum_{i=0}^{\kappa-1} q_9(\epsilon,i,A,B) )^2 + \\
        & \quad 2 \Vert  \mathcal{R}_{\kappa}(A,B) \Vert ( \sum_{i=0}^{\kappa-1} q_9(\epsilon,i,A,B)) \label{def:q_6} \\
        & q_7(\epsilon,\kappa,A,B):=(\sum_{i=0}^{\kappa-1} q_9(\epsilon,i,A,B)) q_8(\epsilon,\kappa,A,B) \\
        & \quad + \Vert \mathcal{R}_{\kappa}(A,B)\Vert q_8(\epsilon,\kappa,A,B) \\
        &\quad + \Vert (R^{\top}_{\kappa}(A,B)R_{\kappa}(A,B))^{-1}\Vert (\sum_{i=0}^{\kappa-1} q_9(\epsilon,i,A,B)) \label{def:q_7} \\
        & q_8(\epsilon,\kappa,A,B) :=  \\
        & \quad \frac{\Vert  ( \mathcal{R}_{\kappa}(A,B)\mathcal{R}_{\kappa}^{\top}(A,B) )^{-1} \Vert^2  q_6(\kappa,\epsilon,A,B)}{1 - \Vert  ( \mathcal{R}_{\kappa}(A,B)\mathcal{R}_{\kappa}^{\top}(A,B) )^{-1} \Vert q_6(\kappa,\epsilon,A,B)} \label{def:q_8},
    \end{align}
    with $q_9,q_{10}$ defined in \eqref{def:q_9} and \eqref{def:q_10} respectively.
\end{lemma}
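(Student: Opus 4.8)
The plan is to reduce the whole statement to a single operator-norm perturbation estimate for the map $g(A,B)=\mathcal{R}_\kappa(A,B)^\dagger A^\kappa$, and then to use the geometry of $\sat_r$ to remove the dependence on $x$. Write $\mathcal{R}_i:=\mathcal{R}_\kappa(A_i,B_i)$ and recall that its $j$-th block column is $A_i^j B_i$. A block-by-block estimate, expanding $A_2^j-A_1^j$ as a telescoping sum and using $\Vert A_2\Vert\le\Vert A_1\Vert+\epsilon$ and $\Vert B_2\Vert\le\Vert B_1\Vert+\epsilon$, gives $\Vert\mathcal{R}_2-\mathcal{R}_1\Vert\le\sum_{j=0}^{\kappa-1}q_9(\epsilon,j,A_1,B_1)$ for the appropriate choice of $q_9$ (a bound on $\Vert A_2^j B_2-A_1^j B_1\Vert$), and likewise $\Vert A_2^\kappa-A_1^\kappa\Vert\le q_{10}(\epsilon,\kappa,A_1)$. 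Expanding $\mathcal{R}_2\mathcal{R}_2^\top-\mathcal{R}_1\mathcal{R}_1^\top=(\mathcal{R}_2-\mathcal{R}_1)\mathcal{R}_1^\top+\mathcal{R}_1(\mathcal{R}_2-\mathcal{R}_1)^\top+(\mathcal{R}_2-\mathcal{R}_1)(\mathcal{R}_2-\mathcal{R}_1)^\top$ then gives $\Vert\mathcal{R}_2\mathcal{R}_2^\top-\mathcal{R}_1\mathcal{R}_1^\top\Vert\le q_6(\epsilon,\kappa,A_1,B_1)$.

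I would then propagate this through the inverse and the pseudoinverse. Since $(A_1,B_1)$ is $\kappa$-step reachable, $\mathcal{R}_1\mathcal{R}_1^\top$ is invertible, and a Neumann-series bound --- valid precisely when $\Vert(\mathcal{R}_1\mathcal{R}_1^\top)^{-1}\Vert q_6<1$, i.e. $q_6<\sigma_{\min}(\mathcal{R}_1\mathcal{R}_1^\top)$, which is the threshold built into $q_3$ --- yields $\Vert(\mathcal{R}_2\mathcal{R}_2^\top)^{-1}-(\mathcal{R}_1\mathcal{R}_1^\top)^{-1}\Vert\le q_8(\epsilon,\kappa,A_1,B_1)$. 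Using $\mathcal{R}_i^\dagger=\mathcal{R}_i^\top(\mathcal{R}_i\mathcal{R}_i^\top)^{-1}$ (full row rank) and multiplying out the perturbations of the two factors gives $\Vert\mathcal{R}_2^\dagger-\mathcal{R}_1^\dagger\Vert\le q_7(\epsilon,\kappa,A_1,B_1)$, whence $\Vert\mathcal{R}_2^\dagger\Vert\le\Vert\mathcal{R}_1^\dagger\Vert+q_7(\epsilon,\kappa,A_1,B_1)$. Finally, from $g(A_2,B_2)-g(A_1,B_1)=\mathcal{R}_2^\dagger(A_2^\kappa-A_1^\kappa)+(\mathcal{R}_2^\dagger-\mathcal{R}_1^\dagger)A_1^\kappa$ I obtain the key estimate $\Vert g(A_2,B_2)-g(A_1,B_1)\Vert\le q_5(\epsilon,\kappa,A_1,B_1)$, exactly \eqref{def:q_5}. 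From $\epsilon<q_1=\min\{q_3,q_4\}$ two things follow at once: the whole tower is well-defined, and $\sigma_{\min}(g(A_2,B_2))\ge\sigma_{\min}(g(A_1,B_1))-q_5(\epsilon,\kappa,A_1,B_1)>0$, the strict positivity being the defining property of $q_4$; moreover $g(A_i,B_i)\in\mathbb{R}^{\kappa m\times n}$ with $\kappa m\ge n$, so $|g(A_i,B_i)x|\ge\sigma_{\min}(g(A_i,B_i))\,|x|$ for every $x$.

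The uniformity in $x$ is then won by a short case split. Fix $x\in\mathbb{R}^n$ and put $a:=-g(A_1,B_1)x$, $b:=-g(A_2,B_2)x$, so $|a-b|\le q_5(\epsilon,\kappa,A_1,B_1)\,|x|$. If $|a|\le r$ or $|b|\le r$, the relevant singular-value lower bound forces $|x|\le r/(\sigma_{\min}(g(A_1,B_1))-q_5(\epsilon,\kappa,A_1,B_1))$ (using $\sigma_{\min}(g(A_2,B_2))$ and the previous paragraph in the $|b|\le r$ case), and since $\sat_r$ is the $1$-Lipschitz projection onto the closed $r$-ball, $|\sat_r(b)-\sat_r(a)|\le|a-b|\le q_5(\epsilon,\kappa,A_1,B_1)\,r/(\sigma_{\min}(g(A_1,B_1))-q_5(\epsilon,\kappa,A_1,B_1))=\tfrac12\,q_2(\epsilon,r,\kappa,A_1,B_1)$. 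If instead $|a|>r$ and $|b|>r$, then $\sat_r(a)=ra/|a|$ and $\sat_r(b)=rb/|b|$, and the elementary bound $\bigl|a/|a|-b/|b|\bigr|\le 2|a-b|/\max\{|a|,|b|\}$ with $\max\{|a|,|b|\}\ge|a|\ge\sigma_{\min}(g(A_1,B_1))\,|x|$ gives $|\sat_r(b)-\sat_r(a)|\le 2r\,q_5(\epsilon,\kappa,A_1,B_1)/\sigma_{\min}(g(A_1,B_1))\le q_2(\epsilon,r,\kappa,A_1,B_1)$. In both regimes the bound is independent of $x$ and at most $q_2$, as claimed; the statement is vacuous if $q_1=0$, since then no admissible $\epsilon$ exists.

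The main obstacle is essentially bookkeeping: assembling the tower $q_9,q_{10}\mapsto q_6\mapsto q_8\mapsto q_7\mapsto q_5$ with fully $\epsilon$-explicit constants, checking that the Neumann-series invertibility threshold is captured exactly by $q_3$ and the positivity threshold exactly by $q_4$, and keeping the expressions in a form amenable to the later convexity argument in Lemma \ref{lemma:control-error-simplified}. The conceptual crux, by contrast, is the case split above: saturation lets one trade the otherwise unbounded factor $|x|$ either against the $1$-Lipschitz constant on a ball of radius controlled by $r/\sigma_{\min}(g(A_1,B_1))$, or against the $\sigma_{\min}(g(A_1,B_1))\,|x|$ in the denominator of the unit-vector perturbation bound.
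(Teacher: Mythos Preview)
Your proposal is correct and follows essentially the same route as the paper: assemble the perturbation tower $q_9,q_{10}\to q_6\to q_8\to q_7\to q_5$ to bound $\Vert g(A_2,B_2)-g(A_1,B_1)\Vert$, then do a case split exploiting the $1$-Lipschitz property of $\sat_r$ in the unsaturated regime and a unit-vector perturbation bound in the doubly saturated regime. The only cosmetic differences are that the paper splits on the set $S=\{|x|\le r/(\sigma_{\min}(g(A_1,B_1))-q_5)\}$ rather than on which of $a,b$ is saturated, and in the saturated case invokes its Lemma~\ref{lemma:unit-ball-error} (giving $2\delta/(\sigma_{\min}-\delta)$) instead of your sharper elementary bound $|a/|a|-b/|b||\le 2|a-b|/\max\{|a|,|b|\}$; both routes land on $q_2$.
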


\begin{proof}

For ease of notation, let $\mathcal{R}_1' = \mathcal{R}_{\kappa}(A_1,B_1)$, and $\mathcal{R}_2' = \mathcal{R}_{\kappa}(A_2,B_2)$. Suppose $\epsilon \in [0,q_1(\kappa,A_1,B_1))$, $A_2 \in \bar{\mathcal{B}}_{\epsilon}(A_1)$, $B_2 \in \bar{\mathcal{B}}_{\epsilon}(B_1)$, and $x \in \mathbb{R}^n$. Then,
\begin{align}
    &\Vert \mathcal{R}_2' - \mathcal{R}_1'\Vert \\
    &= \Vert \begin{bmatrix} 
        B_2-B_1 & A_2B_2 - A_1B_1 & \hdots & A_2^{\kappa-1}B_2 - A_1^{\kappa-1}B_1
    \end{bmatrix}\Vert \label{eqn:lemma-control-error-1} \\
    &\leq \sum_{i = 0}^{\kappa-1} \Vert  A_2^iB_2 - A_1^iB_1 \Vert \\
    &\leq \sum_{i = 0}^{\kappa-1} q_9(\epsilon,i,A_1,B_1), \label{eqn:lemma-control-error-2}
\end{align}
where \eqref{eqn:lemma-control-error-1} follows from the definition of $\mathcal{R}_{\kappa}$ in Assumption \ref{assump:reachable}, and \eqref{eqn:lemma-control-error-2} follows from Lemma \ref{lemma:perturb-matrix-power-product} with $q_9$ defined in \eqref{def:q_9}. 
Next, we have,
\begin{align}
    & \Vert  \mathcal{R}_2' (\mathcal{R}_2')^{\top} - \mathcal{R}_1' (\mathcal{R}_1')^{\top} \Vert \\
    & \leq \Vert \mathcal{R}_2' - \mathcal{R}_1'  \Vert^2 + 2\Vert \mathcal{R}_1' \Vert \Vert \mathcal{R}_2' - \mathcal{R}_1'  \Vert \label{eqn:lemma-control-error-3} \\
    &\leq q_6(\epsilon,\kappa,A_1,B_1), \label{eqn:lemma-control-error-4}
\end{align}
where \eqref{eqn:lemma-control-error-3} follows from Lemma \ref{lemma:perturb-bound-matrix-product}, and \eqref{eqn:lemma-control-error-4} follows from \eqref{eqn:lemma-control-error-2} and the definition of $q_6$ in \eqref{def:q_6}.
Then,
\begin{align}
    &\Vert  (\mathcal{R}_2' (\mathcal{R}_2')^{\top})^{-1} - (\mathcal{R}_1' (\mathcal{R}_1')^{\top})^{-1} \Vert \\
    &\leq \frac{\Vert  (\mathcal{R}_1' (\mathcal{R}_1')^{\top})^{-1} \Vert^2 \Vert  \mathcal{R}_2' (\mathcal{R}_2')^{\top} - \mathcal{R}_1' (\mathcal{R}_1')^{\top} \Vert}{1 - \Vert (\mathcal{R}_1' (\mathcal{R}_1')^{\top})^{-1} \Vert \Vert  \mathcal{R}_2' (\mathcal{R}_2')^{\top} - \mathcal{R}_1' (\mathcal{R}_1')^{\top} \Vert } \label{eqn:lemma-control-error-5} \\
    &\leq q_8(\epsilon,\kappa,A_1,B_1), \label{eqn:lemma-control-error-6}
\end{align}
where \eqref{eqn:lemma-control-error-5} holds via Lemma \ref{lemma:perturb-bound-matrix-inverse}, and the fact that $\Vert \mathcal{R}_2'(\mathcal{R}_2')^{\top} - \mathcal{R}_1'(\mathcal{R}_1')^{\top}\Vert < \sigma_{\text{min}}(\mathcal{R}_1'(\mathcal{R}_1')^{\top})$ is satisfied since $\epsilon < q_1(\kappa,A_1,B_1)$, as well as the definition of $q_1$ in \eqref{def:q_1}. Moreover, \eqref{eqn:lemma-control-error-6} follows from \eqref{eqn:lemma-control-error-4} and the definition of $q_8$ in \eqref{def:q_8}. Next, note that $\mathcal{R}_2'$ is full rank, and $\Vert \mathcal{R}_2'(\mathcal{R}_2')^{\top} - \mathcal{R}_1'(\mathcal{R}_1')^{\top}\Vert < \sigma_{\text{min}}(\mathcal{R}_1'(\mathcal{R}_1')^{\top})$ implies that $\mathcal{R}_1'(\mathcal{R}_1')^{\top}$ is full rank, and hence, so is $\mathcal{R}_1'$ since it is a fat matrix. Since $\mathcal{R}_1', \mathcal{R}_2'$ are both full rank and fat, it follows that $(\mathcal{R}_2')^{\dagger} = (\mathcal{R}_2')^{\top}(\mathcal{R}_2' (\mathcal{R}_2')^{\top})^{-1}$ and $(\mathcal{R}_1')^{\dagger} = (\mathcal{R}_1')^{\top}(\mathcal{R}_1' (\mathcal{R}_1')^{\top})^{-1}$. Thus, we have
\begin{align}
    &\Vert  (\mathcal{R}_2')^{\dagger} - (\mathcal{R}_1')^{\dagger} \Vert \\
    &= \Vert  (\mathcal{R}_2')^{\top}(\mathcal{R}_2' (\mathcal{R}_2')^{\top})^{-1} - (\mathcal{R}_1')^{\top}(\mathcal{R}_1' (\mathcal{R}_1')^{\top})^{-1} \Vert\\
    &\leq \Vert \mathcal{R}_2' - \mathcal{R}_1' \Vert \Vert  (\mathcal{R}_2' (\mathcal{R}_2')^{\top})^{-1} - (\mathcal{R}_1' (\mathcal{R}_1')^{\top})^{-1} \Vert \\
    & \quad + \Vert \mathcal{R}_1'\Vert \Vert (\mathcal{R}_2' (\mathcal{R}_2')^{\top})^{-1} - (\mathcal{R}_1' (\mathcal{R}_1')^{\top})^{-1} \Vert \\
    & \quad + \Vert (\mathcal{R}_1' (\mathcal{R}_1')^{\top})^{-1}\Vert \Vert \mathcal{R}_2' - \mathcal{R}_1' \Vert \label{eqn:lemma-control-error-7}\\
    &\leq q_7(\epsilon,\kappa,A_1,B_1) \label{eqn:lemma-control-error-8}
\end{align}
where \eqref{eqn:lemma-control-error-7} holds via Lemma \ref{lemma:perturb-bound-matrix-product}, and \eqref{eqn:lemma-control-error-8} is satisfied via \eqref{eqn:lemma-control-error-2}, \eqref{eqn:lemma-control-error-6}, and the definition of $q_7$ in \eqref{def:q_7}. Recalling the definition of $g$ from \eqref{eqn:control-strategy}, we have
\begin{align}
    &\Vert  g(A_2,B_2) - g(A_1,B_1) \Vert \\
    &= \Vert  (\mathcal{R}_2')^{\dagger} A_2^{\kappa} - (\mathcal{R}_1')^{\dagger} A_1^{\kappa} \Vert \\
    &\leq \Vert (\mathcal{R}_2')^{\dagger} - (\mathcal{R}_1')^{\dagger} \Vert \Vert  A_2^{\kappa} - A_1^{\kappa} \Vert \\
    & \quad + \Vert (\mathcal{R}_1')^{\dagger}\Vert \Vert A_2^{\kappa} - A_1^{\kappa} \Vert + \Vert A_1^{\kappa}\Vert \Vert (\mathcal{R}_2')^{\dagger} - (\mathcal{R}_1')^{\dagger} \Vert \label{eqn:lemma-control-error-9} \\
    &\leq q_5(\epsilon,\kappa,A_1,B_1) \label{eqn:lemma-control-error-10}
\end{align}
where \eqref{eqn:lemma-control-error-9} follows from Lemma \ref{lemma:perturb-bound-matrix-product}, and \eqref{eqn:lemma-control-error-10} follows from the fact $\Vert A_2^{\kappa} - A_1^{\kappa} \Vert \leq q_{10}(\epsilon,\kappa,A_1)$ using Lemma \ref{lemma:perturb-bound-matrix-power} with $q_{10}$ defined in \ref{def:q_10}, alongside \eqref{eqn:lemma-control-error-8} and the definition of $q_5$ in \eqref{def:q_5}.

Now, let $S = \{ x \in \mathbb{R}^n \mid |x| \leq \frac{r}{\sigma_{\text{min}}(g(A_1,B_1)) - q_5(\epsilon,\kappa,A_1,B_1)} \}.$ Our proof proceeds by considering the case where $x \in S$ and $x \not \in S$ separately.

\textit{Case 1:} Suppose $x \in S$. Then,
\begin{align}
    &| \sat_r(-g(A_2,B_2)x) - \sat_r(-g(A_1,B_1)x) | \\
    &\leq \Vert  g(A_2,B_2) -g(A_1,B_1) \Vert \\
    & \quad \times \frac{r}{\sigma_{\text{min}}(g(A_1,B_1)) - q_5(\epsilon,\kappa,A_1,B_1)} \label{eqn:lemma-control-error-11} \\
    &\leq  \frac{r q_5(\epsilon,\kappa,A_1,B_1)}{\sigma_{\text{min}}(g(A_1,B_1)) - q_5(\epsilon,\kappa,A_1,B_1)}, \label{eqn:lemma-control-error-12}
\end{align}
where \eqref{eqn:lemma-control-error-11} follows from Lemma \ref{lemma:sat-func-bound} and $x \in S$, and \eqref{eqn:lemma-control-error-12} follows from \eqref{eqn:lemma-control-error-10}.

\textit{Case 2:} Suppose $x \not \in S$. Then, $|g(A_1,B_1)x| \geq \sigma_{\text{min}}(g(A_1,B_1)) |x| \geq (\sigma_{\text{min}}(g(A_1,B_1))-q_5(\epsilon,\kappa,A_1,B_1)) |x| \geq r$ and $|g(A_2,B_2)x| \geq \sigma_{\text{min}}(g(A_2,B_2)) |x| \geq (\sigma_{\text{min}}(g(A_1,B_1))-q_5(\epsilon,\kappa,A_1,B_1)) |x| \geq r$, and therefore,
\begin{align}
    &| \sat_r(-g(A_2,B_2)x) - \sat_r(-g(A_1,B_1)x) | \\
    &\leq r | \frac{g(A_2,B_2)x}{|g(A_2,B_2)x|} - \frac{g(A_1,B_1)x}{|g(A_1,B_1)x|} | \label{eqn:lemma-control-error-13} \\
    & \leq r\frac{2 \Vert  g(A_2,B_2) - g(A_1,B_1) \Vert}{\sigma_{\text{min}}(g(A_1,B_1)) - \Vert  g(A_2,B_2) - g(A_1,B_1) \Vert } \label{eqn:lemma-control-error-14} \\
    & \leq \frac{2 r q_5(\epsilon,\kappa,A_1,B_1)}{\sigma_{\text{min}}(g(A_1,B_1)) - q_5(\epsilon,\kappa,A_1,B_1)}, \label{eqn:lemma-control-error-15}
\end{align}
where \eqref{eqn:lemma-control-error-13} follows from Lemma \ref{lemma:sat-func-bound}, \eqref{eqn:lemma-control-error-14} follows from Lemma \ref{lemma:unit-ball-error} using the fact that $\epsilon < q_1(\kappa,A_1,B_1)$ implies that $\Vert  g(A_2,B_2) - g(A_1,B_1) \Vert  < \sigma_{\text{min}}(g(A_1,B_1))$ via the definition of $q_1$ in \eqref{def:q_1}, and \eqref{eqn:lemma-control-error-15} holds due to \eqref{eqn:lemma-control-error-12}.
 
The conclusion follows.
\end{proof}

\begin{lemma} \label{lemma:perturb-bound-matrix-product}
    Consider matrices $M_1,M_2 \in \mathbb{R}^{d_1 \times d_2}$ and $N_1,N_2 \in  \mathbb{R}^{d_2 \times d_3}$. The following result holds:
    \begin{align}
        &\Vert  M_2 N_2 - M_1 N_1 \Vert \leq \Vert M_2 - M_1 \Vert \Vert  N_2 - N_1 \Vert \\
        &\quad + \Vert M_1\Vert \Vert N_2 - N_1 \Vert + \Vert N_1\Vert \Vert M_2 - M_1 \Vert.
    \end{align}
\end{lemma}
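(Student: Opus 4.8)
The plan is to reduce the inequality to a one-line algebraic identity followed by the triangle inequality and submultiplicativity of the induced $2$-norm. First I would introduce the perturbation matrices $\Delta M := M_2 - M_1 \in \mathbb{R}^{d_1 \times d_2}$ and $\Delta N := N_2 - N_1 \in \mathbb{R}^{d_2 \times d_3}$, so that $M_2 = M_1 + \Delta M$ and $N_2 = N_1 + \Delta N$. Note that with these dimensions all the products appearing below ($M_1 \Delta N$, $\Delta M\, N_1$, $\Delta M \Delta N$, each lying in $\mathbb{R}^{d_1 \times d_3}$) are well-defined.

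Next I would expand $M_2 N_2 = (M_1 + \Delta M)(N_1 + \Delta N) = M_1 N_1 + M_1 \Delta N + \Delta M\, N_1 + \Delta M \Delta N$, which rearranges to the identity $M_2 N_2 - M_1 N_1 = M_1 \Delta N + \Delta M\, N_1 + \Delta M \Delta N$. Applying the triangle inequality for $\Vert \cdot \Vert$ then gives $\Vert M_2 N_2 - M_1 N_1 \Vert \leq \Vert M_1 \Delta N \Vert + \Vert \Delta M\, N_1 \Vert + \Vert \Delta M \Delta N \Vert$. Finally, using submultiplicativity of the induced $2$-norm, i.e. $\Vert P Q \Vert \leq \Vert P \Vert \Vert Q \Vert$ whenever $PQ$ is defined, I would bound each of the three terms by $\Vert M_1 \Vert \Vert \Delta N \Vert$, $\Vert N_1 \Vert \Vert \Delta M \Vert$, and $\Vert \Delta M \Vert \Vert \Delta N \Vert$ respectively, and substitute back $\Delta M = M_2 - M_1$, $\Delta N = N_2 - N_1$ to recover exactly the stated bound.

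There is no real obstacle here: the argument invokes only the triangle inequality and submultiplicativity of the operator norm, both entirely standard, so the proof is essentially a two-sentence computation. The only point deserving a brief remark is the dimension bookkeeping needed to ensure every matrix product in the expansion is conformable, which follows immediately from the hypotheses $M_1, M_2 \in \mathbb{R}^{d_1 \times d_2}$ and $N_1, N_2 \in \mathbb{R}^{d_2 \times d_3}$.
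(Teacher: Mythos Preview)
Your proposal is correct and essentially identical to the paper's proof: both obtain the decomposition $M_2 N_2 - M_1 N_1 = (M_2 - M_1)(N_2 - N_1) + M_1(N_2 - N_1) + (M_2 - M_1)N_1$ and then apply the triangle inequality and submultiplicativity of $\Vert\cdot\Vert$. The only cosmetic difference is that the paper reaches this identity by adding and subtracting $M_2 N_1$ rather than by introducing the perturbation variables $\Delta M,\Delta N$.
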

\begin{proof}
    The result holds following basic matrix algebra and properties of $\Vert \cdot \Vert$:
    \begin{align}
        &\Vert  M_2 N_2 - M_1 N_1 \Vert  \\
        &= \Vert  M_2 N_2 - M_2 N_1 + M_2 N_1 - M_1 N_1 \Vert \\
        &= \Vert  M_2 (N_2 - N_1) + (M_2 - M_1) N_1 \Vert \\
        &= \Vert  (M_2 - M_1 + M_1) (N_2 - N_1) + (M_2 - M_1) N_1 \Vert \\
        &= \Vert  (M_2 - M_1) (N_2 - N_1) + M_1 (N_2 - N_1) \\
        &\quad + (M_2 - M_1) N_1 \Vert \\
        &\leq \Vert  M_2 - M_1 \Vert \Vert N_2 - N_1\Vert + \Vert M_1\Vert \Vert N_2 - N_1\Vert \\
        &\quad + \Vert M_2 - M_1\Vert \Vert N_1 \Vert.
    \end{align}
\end{proof}
\begin{lemma} \label{lemma:perturb-bound-matrix-inverse}
    Consider nonsingular matrix $M_1 \in \mathbb{R}^{d_1 \times d_1}$. Fix $\delta \in [0, \sigma_{\text{min}}(M_1))$. We have
    \begin{align}
        \Vert  M_2^{-1} - M_1^{-1} \Vert \leq \frac{\Vert  M_1^{-1} \Vert^2 \delta}{1 - \Vert M_1^{-1}\Vert \delta}
    \end{align}
    for all $M_2 \in \bar{\mathcal{B}}_{\delta}(M_1)$.
\end{lemma}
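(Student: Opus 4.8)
The plan is to reduce the statement to a Neumann-series estimate. First I would set $E := M_2 - M_1$, so that $\Vert E \Vert \leq \delta$ by the hypothesis $M_2 \in \bar{\mathcal{B}}_{\delta}(M_1)$, and factor $M_2 = M_1(\mathbf{I} + M_1^{-1}E)$. Since $M_1$ is square and nonsingular, $\Vert M_1^{-1}\Vert = 1/\sigma_{\text{min}}(M_1)$, hence $\Vert M_1^{-1}E\Vert \leq \Vert M_1^{-1}\Vert\,\Vert E\Vert \leq \delta/\sigma_{\text{min}}(M_1) < 1$, where the strict inequality uses $\delta < \sigma_{\text{min}}(M_1)$. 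This strict inequality is the crux that makes everything else go through.

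Second, I would invoke the standard fact that $\mathbf{I} + X$ is invertible whenever $\Vert X\Vert < 1$, with $\Vert(\mathbf{I}+X)^{-1}\Vert \leq 1/(1 - \Vert X\Vert)$, via the convergent geometric series $\sum_{k \geq 0}(-X)^k$. Applied with $X = M_1^{-1}E$, this shows $M_2 = M_1(\mathbf{I}+M_1^{-1}E)$ is invertible and $M_2^{-1} = (\mathbf{I}+M_1^{-1}E)^{-1}M_1^{-1}$, so the conclusion is at least well-posed.

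Third, I would use the algebraic identity $(\mathbf{I}+X)^{-1} - \mathbf{I} = -(\mathbf{I}+X)^{-1}X$ to write $M_2^{-1} - M_1^{-1} = [(\mathbf{I}+M_1^{-1}E)^{-1} - \mathbf{I}]M_1^{-1} = -(\mathbf{I}+M_1^{-1}E)^{-1}M_1^{-1}EM_1^{-1}$, then take induced $2$-norms and use submultiplicativity together with the bounds $\Vert(\mathbf{I}+M_1^{-1}E)^{-1}\Vert \leq 1/(1-\Vert M_1^{-1}\Vert\delta)$ and $\Vert E\Vert \leq \delta$. This yields $\Vert M_2^{-1} - M_1^{-1}\Vert \leq \Vert M_1^{-1}\Vert^2\delta/(1 - \Vert M_1^{-1}\Vert\delta)$, exactly the claim. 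The degenerate case $\delta = 0$ forces $M_2 = M_1$, and both sides vanish, consistent with the formula.

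On difficulty: there is no genuine obstacle here; the only point requiring care is confirming $\Vert M_1^{-1}\Vert\delta < 1$ so that the Neumann series converges, after which the argument is bookkeeping with submultiplicativity of $\Vert\cdot\Vert$. It is worth noting that we only ever use $\Vert E\Vert \leq \delta$ together with $\delta < \sigma_{\text{min}}(M_1)$, so working over the closed ball $\bar{\mathcal{B}}_{\delta}(M_1)$ rather than the open ball causes no trouble.
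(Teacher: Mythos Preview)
Your proof is correct. It differs from the paper's route, though both are standard. The paper starts from the resolvent identity $M_2^{-1}-M_1^{-1}=M_2^{-1}(M_1-M_2)M_1^{-1}$, then substitutes $M_2^{-1}=(M_2^{-1}-M_1^{-1})+M_1^{-1}$ to obtain a self-referential inequality
\[
\Vert M_2^{-1}-M_1^{-1}\Vert \leq \Vert M_2^{-1}-M_1^{-1}\Vert\,\Vert M_1^{-1}\Vert\,\delta + \Vert M_1^{-1}\Vert^2\delta,
\]
which it rearranges and divides by $1-\Vert M_1^{-1}\Vert\delta>0$. You instead factor $M_2=M_1(\mathbf{I}+M_1^{-1}E)$ and appeal directly to the Neumann bound $\Vert(\mathbf{I}+X)^{-1}\Vert\leq 1/(1-\Vert X\Vert)$. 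Your approach has the minor advantage that invertibility of $M_2$ is established explicitly as a byproduct of the Neumann series convergence, whereas the paper's proof simply asserts ``such that $M_2$ is nonsingular'' without separately justifying it (though it is of course implied by $\delta<\sigma_{\text{min}}(M_1)$). The paper's bootstrap trick, on the other hand, avoids invoking the series expansion altogether. Both arrive at the identical bound with the same dependence on $\delta$.
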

\begin{proof}
    Suppose $M_2 \in \bar{\mathcal{B}}_{\delta}(M_1)$, such that $M_2$ is nonsingular. Then, using basic matrix algebra and properties of $\Vert \cdot \Vert$, we have:
    \begin{align}
        &\Vert  M_2^{-1} - M_1^{-1} \Vert \\
        &= \Vert  M_2^{-1}(M_1 - M_2) M_1^{-1} \Vert \\
        &= \Vert  (M_2^{-1}-M_1^{-1} + M_1^{-1} )(M_1 - M_2) M_1^{-1} \Vert \\
        &\leq \Vert  M_2^{-1}-M_1^{-1} + M_1^{-1} \Vert \Vert  M_1 - M_2 \Vert \Vert  M_1^{-1}\Vert \\
        &\leq \Vert  M_2^{-1}-M_1^{-1} \Vert \Vert  M_1^{-1}\Vert \Vert  M_1 - M_2 \Vert  \\
        &\quad + \Vert  M_1^{-1} \Vert^2 \Vert  M_1 - M_2 \Vert\\
        &\leq \Vert  M_2^{-1}-M_1^{-1} \Vert \Vert  M_1^{-1}\Vert \delta  + \Vert  M_1^{-1} \Vert^2 \delta. \label{eqn:lemma-perturb-multi-1}
    \end{align}
    Rearranging \eqref{eqn:lemma-perturb-multi-1}, we conclude
    \begin{align}
        &\Vert  M_2^{-1} - M_1^{-1} \Vert - \Vert  M_2^{-1}-M_1^{-1} \Vert \Vert  M_1^{-1}\Vert \delta  \leq \Vert  M_1^{-1} \Vert^2 \delta \\
        &\Vert  M_2^{-1} - M_1^{-1} \Vert(1 - \Vert  M_1^{-1}\Vert \delta ) \leq \Vert  M_1^{-1} \Vert^2 \delta \\
        &\Vert  M_2^{-1} - M_1^{-1} \Vert \leq  \frac{\Vert  M_1^{-1} \Vert^2 \delta}{1 - \Vert  M_1^{-1}\Vert \delta}, \label{eqn:lemma-perturb-multi-2}
    \end{align}
    where \eqref{eqn:lemma-perturb-multi-2} holds since $\delta \in [0,\sigma_{\text{min}}(M_1))$.
\end{proof}

\begin{lemma} \label{lemma:perturb-bound-matrix-power}
    Consider $M_1 \in \mathbb{R}^{d \times d}$. We have
    \begin{align}
        \Vert M_2^i - M_1^i\Vert \leq q_{10}(\delta,i,M_1) \label{eqn:lemma-perturb-bound-matrix-power-claim}
    \end{align}
    for all $\delta > 0$, $i \in \mathbb{N}$ and $M_2 \in \bar{\mathcal{B}}_{\delta}(M_1)$,
    where
    \begin{align}
        q_{10}(\delta,i,M_1) := \begin{cases}
            \delta, \quad i = 1 \\
            q_{10}(\delta,i-1,M_1) \delta + \Vert M_1^{i-1} \Vert \delta \\
            \quad + \Vert M_1\Vert q_{10}(\delta,i-1,M_1), \quad i \geq 2.
        \end{cases} \label{def:q_10}
    \end{align} 
\end{lemma}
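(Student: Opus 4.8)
The plan is to prove Lemma \ref{lemma:perturb-bound-matrix-power} by induction on $i$, using the perturbation estimate in Lemma \ref{lemma:perturb-bound-matrix-product} at the inductive step and observing that this reproduces exactly the recursion \eqref{def:q_10} defining $q_{10}$. Throughout, fix $\delta > 0$ and $M_1 \in \mathbb{R}^{d \times d}$, and let $M_2 \in \bar{\mathcal{B}}_{\delta}(M_1)$, so that $\Vert M_2 - M_1 \Vert \leq \delta$.

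For the base case $i = 1$, the claim $\Vert M_2 - M_1 \Vert \leq q_{10}(\delta,1,M_1) = \delta$ is immediate from $M_2 \in \bar{\mathcal{B}}_{\delta}(M_1)$. For the inductive step, suppose $i \geq 2$ and that $\Vert M_2^{i-1} - M_1^{i-1} \Vert \leq q_{10}(\delta,i-1,M_1)$. Writing $M_2^i = M_2 M_2^{i-1}$ and $M_1^i = M_1 M_1^{i-1}$, I would apply Lemma \ref{lemma:perturb-bound-matrix-product} with the substitutions $M_1 \leftarrow M_1$, $M_2 \leftarrow M_2$, $N_1 \leftarrow M_1^{i-1}$, $N_2 \leftarrow M_2^{i-1}$, which yields
\begin{align}
    \Vert M_2^i - M_1^i \Vert &\leq \Vert M_2 - M_1 \Vert\, \Vert M_2^{i-1} - M_1^{i-1} \Vert + \Vert M_1 \Vert\, \Vert M_2^{i-1} - M_1^{i-1} \Vert \\
    &\quad + \Vert M_1^{i-1} \Vert\, \Vert M_2 - M_1 \Vert .
\end{align}
Bounding $\Vert M_2 - M_1 \Vert \leq \delta$ and $\Vert M_2^{i-1} - M_1^{i-1} \Vert \leq q_{10}(\delta,i-1,M_1)$ in each term, the right-hand side becomes $q_{10}(\delta,i-1,M_1)\,\delta + \Vert M_1 \Vert\, q_{10}(\delta,i-1,M_1) + \Vert M_1^{i-1} \Vert\, \delta$, which is exactly $q_{10}(\delta,i,M_1)$ by \eqref{def:q_10}. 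This closes the induction and establishes \eqref{eqn:lemma-perturb-bound-matrix-power-claim}.

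Since the argument is purely algebraic once Lemma \ref{lemma:perturb-bound-matrix-product} is available, I do not anticipate any real obstacle. The only point requiring a bit of care is matching the order of the factors (and the indices $i-1$) so that the monotone substitution of the upper bounds $\delta$ and $q_{10}(\delta,i-1,M_1)$ reproduces the stated recursion term-by-term without introducing slack; one should also note in passing that $q_{10}(\delta,i,M_1)$ is finite for every $i$ by the same recursion, so the bound is nonvacuous.
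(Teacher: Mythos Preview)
Your proposal is correct and matches the paper's proof essentially verbatim: both argue by induction on $i$, invoke the product perturbation bound of Lemma \ref{lemma:perturb-bound-matrix-product} at the inductive step (the paper factors $M_2^{n+1}=M_2^n M_2$ rather than $M_2 M_2^{i-1}$, but this is immaterial), and then substitute $\Vert M_2-M_1\Vert\leq\delta$ and the inductive hypothesis to recover the recursion \eqref{def:q_10}.
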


\begin{proof}
    Suppose $\delta$, and $M_2 \in \bar{\mathcal{B}}_{\delta}(M_1)$. We proceed via proof by induction. When $i = 1$, we have that $\Vert M_2^i - M_1^i\Vert \leq \delta$ since $M_2 \in \bar{\mathcal{B}}_{\delta}(M_1)$. Now, assume that the claim \eqref{eqn:lemma-perturb-bound-matrix-power-claim} holds for $i = n \in \mathbb{N}$. We have
    \begin{align}
        &\Vert M_2^{n+1} - M_1^{n+1} \Vert \\
        &\leq \Vert M_2^n - M_1^n\Vert \Vert  M_2 - M_1 \Vert  + \Vert M_1^n\Vert \Vert  M_2 - M_1 \Vert \\
        &\quad + \Vert M_1\Vert \Vert M_2^n - M_1^n \Vert \\
        &\leq \Vert M_2^n - M_1^n\Vert \delta + \Vert M_1^n\Vert \delta + \Vert M_1\Vert \Vert M_2^n - M_1^n \Vert \\
        &\leq q_{10}(\delta,n,M_1) \delta + \Vert M_1^n\Vert \delta + \Vert M_1\Vert q_{10}(\delta,n,M_1), \label{eqn:lemma-perturb-bound-matrix-power-1}
    \end{align}
    where \eqref{eqn:lemma-perturb-bound-matrix-power-1} follows from the assumption that \eqref{eqn:lemma-perturb-bound-matrix-power-claim} holds. The conclusion follows.
\end{proof}

\begin{lemma} \label{lemma:perturb-matrix-power-product}
    Consider $M_1 \in \mathbb{R}^{d_1 \times d_1}$, and $N_1 \in \mathbb{R}^{d_1 \times d_2}$. We have
    \begin{align}
        \Vert M_2^i N_2 - M_1^i N_1\Vert \leq q_9(\delta,i,M_1,N_1)
    \end{align}
    for all $\delta > 0$, $i \in \mathbb{N}_0$, $M_2 \in \bar{\mathcal{B}}_{\delta}(M_1)$, and $N_2 \in \bar{\mathcal{B}}_{\delta}(N_1)$, where
    \begin{align}
        q_9(\delta,i,M_1,N_1) := \begin{cases}
            \delta, \quad i = 0 \\
            q_{10}(\delta,i,M_1) \delta + \Vert M_1^i\Vert \delta \\
            \quad + \Vert N_1\Vert q_{10}(\delta,i,M_1), \quad i \geq 1
        \end{cases} \label{def:q_9}
    \end{align} 
\end{lemma}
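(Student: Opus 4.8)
\textbf{Proof proposal for Lemma \ref{lemma:perturb-matrix-power-product}.}
The plan is a direct case split on $i$, combining the product perturbation bound (Lemma \ref{lemma:perturb-bound-matrix-product}) with the matrix-power perturbation bound (Lemma \ref{lemma:perturb-bound-matrix-power}). Fix $\delta > 0$, $M_2 \in \bar{\mathcal{B}}_{\delta}(M_1)$, and $N_2 \in \bar{\mathcal{B}}_{\delta}(N_1)$, so that in particular $\Vert M_2 - M_1\Vert \leq \delta$ and $\Vert N_2 - N_1\Vert \leq \delta$.

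First I would handle the base case $i = 0$. Here $M_2^0 = M_1^0 = \mathbf{I}$, so $\Vert M_2^0 N_2 - M_1^0 N_1\Vert = \Vert N_2 - N_1\Vert \leq \delta = q_9(\delta,0,M_1,N_1)$, matching the first branch of the definition in \eqref{def:q_9}.

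Next, for $i \geq 1$, I would apply Lemma \ref{lemma:perturb-bound-matrix-product} with the substitutions $M_1 \leftarrow M_1^i$, $M_2 \leftarrow M_2^i$, $N_1 \leftarrow N_1$, $N_2 \leftarrow N_2$, which yields
\begin{align}
    \Vert M_2^i N_2 - M_1^i N_1\Vert &\leq \Vert M_2^i - M_1^i\Vert \, \Vert N_2 - N_1\Vert \\
    &\quad + \Vert M_1^i\Vert \, \Vert N_2 - N_1\Vert + \Vert N_1\Vert \, \Vert M_2^i - M_1^i\Vert .
\end{align}
Then I would invoke Lemma \ref{lemma:perturb-bound-matrix-power} to substitute $\Vert M_2^i - M_1^i\Vert \leq q_{10}(\delta,i,M_1)$ and use $\Vert N_2 - N_1\Vert \leq \delta$ in the first two terms, which gives the bound $q_{10}(\delta,i,M_1)\delta + \Vert M_1^i\Vert \delta + \Vert N_1\Vert q_{10}(\delta,i,M_1)$, exactly the second branch of $q_9$ in \eqref{def:q_9}. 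The conclusion then follows.

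There is no real obstacle here: the argument is just the assembly of two already-established perturbation lemmas plus the defining ball inclusions. The only points requiring a bit of care are (i) checking that the case $i=0$ is not covered by Lemma \ref{lemma:perturb-bound-matrix-power} (which is stated for $i \in \mathbb{N}$) and must be treated separately, and (ii) ensuring the term-by-term matching with the piecewise definition of $q_9$ is exact, including which factor $\delta$ versus $q_{10}$ multiplies in each summand.
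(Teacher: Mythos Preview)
Your proposal is correct and follows essentially the same approach as the paper's proof: a case split on $i=0$ versus $i\geq 1$, with the latter handled by applying Lemma~\ref{lemma:perturb-bound-matrix-product} to the product $M_2^i N_2$ and then bounding $\Vert M_2^i - M_1^i\Vert$ via Lemma~\ref{lemma:perturb-bound-matrix-power}. The paper's argument and yours are step-for-step identical.
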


\begin{proof}
    Suppose $\delta > 0$, $M_2 \in \bar{\mathcal{B}}_{\delta}(M_1)$ and $N_2 \in \bar{\mathcal{B}}_{\delta}(N_1)$. We proceed by considering the case where $i=0$ and $i \in \mathbb{N}$ separately.
    
    \textit{Case 1:} Suppose $i = 0$. Then, $\Vert M_2^iN_2 - M_1^iN_1\Vert = \Vert N_2 - N_1\Vert \leq \delta$ holds. 
    
    \textit{Case 2:} Suppose $i \in \mathbb{N}$. Then,
    \begin{align}
        &\Vert M_2^iN_2 - M_1^iN_1\Vert \\
        &\leq \Vert M_2^i - M_1^i \Vert \Vert  N_2 - N_1 \Vert + \Vert M_1^i\Vert \Vert N_2 - N_1 \Vert \\
        &\quad + \Vert N_1\Vert \Vert M_2^i - M_1^i \Vert \label{eqn:lemma-perturb-matrix-power-product-1}\\
        &\leq \Vert M_2^i - M_1^i \Vert \delta + \Vert M_1^i\Vert \delta + \Vert N_1\Vert \Vert M_2^i - M_1^i \Vert \\
        &\leq q_{10}(\delta,i,M_1) \delta + \Vert M_1^i\Vert \delta + \Vert N_1\Vert q_{10}(\delta,i,M_1), \label{eqn:lemma-perturb-matrix-power-product-2}
    \end{align}
    where \eqref{eqn:lemma-perturb-matrix-power-product-1} holds via Lemma \ref{lemma:perturb-bound-matrix-product}, and \eqref{eqn:lemma-perturb-matrix-power-product-2} holds via Lemma \ref{lemma:perturb-bound-matrix-power}. The conclusion follows.
\end{proof}

\begin{lemma} \label{lemma:sat-func-bound}
    Consider $r > 0$. Then,
    \begin{align}
        &|\sat_r(x_2) - \sat_r(x_1)| \\
        &\leq \begin{cases} 
            r | \frac{x_2}{|x_2|} - \frac{x_1}{|x_1|} |, \quad |x_1| > r, \ |x_2| > r\\
            |x_2 - x_1 |, \quad x_1 \in \mathbb{R}^d, \ x_2 \in \mathbb{R}^d,
        \end{cases}
    \end{align}
    holds for all $x_1 \in \mathbb{R}^d$, and $x_2 \in \mathbb{R}^2$.
\end{lemma}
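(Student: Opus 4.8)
The plan is to verify the two branches of the stated bound separately. The first branch, under the hypothesis $|x_1| > r$ and $|x_2| > r$, is immediate: by definition $\sat_r(x_i) = r x_i/|x_i|$ for $i=1,2$, so $|\sat_r(x_2)-\sat_r(x_1)| = r\left|\frac{x_2}{|x_2|}-\frac{x_1}{|x_1|}\right|$ actually holds with equality, which in particular yields the claimed inequality. The real content is therefore the second branch: the map $\sat_r$ is nonexpansive, i.e. $|\sat_r(x_2)-\sat_r(x_1)| \le |x_2-x_1|$ for all $x_1,x_2$.

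To establish nonexpansiveness I would first dispose of the degenerate case $x_1=0$ (or symmetrically $x_2=0$): then $\sat_r(x_1)=0$ and $|\sat_r(x_2)| = \min(|x_2|,r) \le |x_2| = |x_2-x_1|$. For $x_1,x_2\neq 0$, write $\sat_r(x_i) = \alpha_i x_i$ with $\alpha_i := \min(1, r/|x_i|) \in (0,1]$, so that $\alpha_i|x_i| = \min(|x_i|,r)$. Expanding squared Euclidean norms gives $|x_2-x_1|^2 - |\sat_r(x_2)-\sat_r(x_1)|^2 = (1-\alpha_1^2)|x_1|^2 + (1-\alpha_2^2)|x_2|^2 - 2(1-\alpha_1\alpha_2)\langle x_1,x_2\rangle$. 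Since $1-\alpha_1\alpha_2 \ge 0$ and $\langle x_1,x_2\rangle \le |x_1||x_2|$ by Cauchy--Schwarz, the right-hand side is bounded below by $(1-\alpha_1^2)|x_1|^2 + (1-\alpha_2^2)|x_2|^2 - 2(1-\alpha_1\alpha_2)|x_1||x_2| = (|x_1|-|x_2|)^2 - (\alpha_1|x_1| - \alpha_2|x_2|)^2$. Finally, because $t\mapsto \min(t,r)$ is $1$-Lipschitz on $[0,\infty)$, one has $|\alpha_1|x_1| - \alpha_2|x_2|| = |\min(|x_1|,r) - \min(|x_2|,r)| \le \bigl||x_1|-|x_2|\bigr|$, so the displayed lower bound is nonnegative, proving the second branch.

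As an alternative route I would note that $\sat_r$ is precisely the Euclidean metric projection $\proj_{\bar{\mathcal{B}}_r(0)}$ onto the closed (hence convex) ball of radius $r$ about the origin, so nonexpansiveness follows from the classical variational characterization $\langle x_i - \proj(x_i),\, y - \proj(x_i)\rangle \le 0$ for all $y$ in the ball, added over $i=1,2$; I would still include the short elementary computation above to keep the supplement self-contained. There is no genuine obstacle in this lemma: the only points requiring care are the $x_i=0$ boundary case and applying Cauchy--Schwarz with the correct sign, which is why the nonnegativity of $1-\alpha_1\alpha_2$ is recorded explicitly before that step.
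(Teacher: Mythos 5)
Your proof is correct, and your primary argument differs from the paper's in how it handles the nonexpansiveness branch. For the case $|x_1|>r$, $|x_2|>r$ you and the paper do the same thing: unwind the definition $\sat_r(x_i)=r x_i/|x_i|$ and observe the bound holds (as you note, with equality). For the general branch, the paper simply identifies $\sat_r$ with the Euclidean projection $\proj_{\bar{\mathcal{B}}_r(0)}$ and invokes the standard nonexpansiveness of projection onto a closed convex set (citing Bertsekas), which is the route you only sketch as an alternative. Your main argument instead gives a self-contained computation: writing $\sat_r(x_i)=\alpha_i x_i$ with $\alpha_i=\min(1,r/|x_i|)$, expanding $|x_2-x_1|^2-|\sat_r(x_2)-\sat_r(x_1)|^2$, applying Cauchy--Schwarz together with $1-\alpha_1\alpha_2\ge 0$, rewriting the lower bound as $(|x_1|-|x_2|)^2-(\alpha_1|x_1|-\alpha_2|x_2|)^2$, and concluding via the $1$-Lipschitzness of $t\mapsto\min(t,r)$; the degenerate case $x_i=0$ is disposed of separately, and all steps check out (the algebraic identity and the sign of the Cauchy--Schwarz step are handled correctly). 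The trade-off is the usual one: the paper's proof is two lines but leans on an external convex-analysis fact, while yours is longer but elementary and self-contained, and it additionally records that the first branch holds with equality, which is slightly stronger than the statement requires.
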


\begin{proof}
    Consider the case where $x_1 \in \mathbb{R}^d$, and $x_2 \in \mathbb{R}^d$. Then,
    \begin{align}
        &|\sat_r(x_2) - \sat_r(x_1) | \\
        &= | \proj_{\bar{\mathcal{B}}_r(0)}(x_2) - \proj_{\bar{\mathcal{B}}_r(0)}(x_1) | \label{eqn:lemma-sat-func-bound-1} \\
        &\leq |x_2 - x_1|, \label{eqn:lemma-sat-func-bound-2}
    \end{align}
    where \eqref{eqn:lemma-sat-func-bound-1} follows from the fact that $\sat_r(x)=\proj_{\bar{\mathcal{B}}_r(0)}(x)$ for all $x \in \mathbb{R}^d$, and \eqref{eqn:lemma-sat-func-bound-2} follows from the nonexpansive property of projection onto a closed convex set (see \cite[Proposition~B.11]{bertsekas1997nonlinear}).

    Now, consider the case where $|x_1| > r$ and $|x_2| > r$. We have
    \begin{align}
        |\sat_r(x_2) - \sat_r(x_1) | &= |r \frac{x_2}{|x_2|} - r \frac{x_1}{|x_1|} | \\
        &= r| \frac{x_2}{|x_2|} - \frac{x_1}{|x_1|} |.
    \end{align}

    The conclusion follows.
\end{proof}

\begin{lemma} \label{lemma:unit-ball-error}
    Consider $M_1 \in \mathbb{R}^{d_1 \times d_2}$ with with $\sigma_{\text{min}}(M_1) > 0$. Then,
    \begin{align}
        \Big| \frac{M_2 x}{|M_2 x|} - \frac{M_1 x}{|M_1 x|} \Big| &\leq \frac{2\delta}{\sigma_{\text{min}}(M_1) - \delta}
    \end{align}
    for all $\delta \in [0,\sigma_{\text{min}}(M_1))$, $M_2 \in \bar{\mathcal{B}}_{\delta}(M_1)$, and $x \in \mathbb{R}^{d_2}$.
\end{lemma}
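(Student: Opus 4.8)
The plan is to reduce the statement to an elementary inequality comparing the normalizations $u/|u|$ and $v/|v|$ of two nonzero vectors, and then substitute $u = M_2 x$ and $v = M_1 x$. Concretely, I would first prove that for any nonzero $u, v \in \mathbb{R}^{d_1}$ one has
\[
\left| \frac{u}{|u|} - \frac{v}{|v|} \right| \leq \frac{2|u-v|}{\max(|u|,|v|)}.
\]
To see this, assume without loss of generality that $|u| \geq |v|$ and write $\frac{u}{|u|} - \frac{v}{|v|} = \frac{u-v}{|u|} + v\big(\frac{1}{|u|} - \frac{1}{|v|}\big)$; the triangle inequality bounds the norm of this by $\frac{|u-v|}{|u|} + \frac{\big||u|-|v|\big|}{|u|}$, and then the reverse triangle inequality $\big||u|-|v|\big| \leq |u-v|$ together with $|v| \leq |u|$ finishes the claim.

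Next I would fix $x \neq 0$ (for $x = 0$ the normalized vectors are taken to be $0$ and there is nothing to prove) and apply the above with $u = M_2 x$ and $v = M_1 x$. Since $\sigma_{\text{min}}(M_1) > 0$, the matrix $M_1$ has full column rank, so $|M_1 x| \geq \sigma_{\text{min}}(M_1)|x| > 0$; and since $M_2 \in \bar{\mathcal{B}}_{\delta}(M_1)$ with $\delta < \sigma_{\text{min}}(M_1)$, we also get $|M_2 x| \geq |M_1 x| - \|M_2 - M_1\|\,|x| \geq (\sigma_{\text{min}}(M_1) - \delta)|x| > 0$, so both vectors are nonzero and the auxiliary bound applies. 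Moreover $|u - v| = |(M_2-M_1)x| \leq \|M_2-M_1\|\,|x| \leq \delta |x|$. Combining these with $\max(|M_1 x|, |M_2 x|) \geq |M_1 x| \geq (\sigma_{\text{min}}(M_1) - \delta)|x|$ yields $\big|\tfrac{M_2 x}{|M_2 x|} - \tfrac{M_1 x}{|M_1 x|}\big| \leq \frac{2\delta |x|}{(\sigma_{\text{min}}(M_1) - \delta)|x|} = \frac{2\delta}{\sigma_{\text{min}}(M_1) - \delta}$, which is the desired inequality.

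I do not expect any serious obstacle: the argument is a short computation. The only points requiring care are ensuring that $M_2 x$ is nonzero so that the left-hand side is well defined — this is precisely where the hypothesis $\delta \in [0, \sigma_{\text{min}}(M_1))$ is used, via the lower bound $|M_2 x| \geq (\sigma_{\text{min}}(M_1)-\delta)|x|$ — and pinning down the constant $2$ in the auxiliary normalization inequality, which is the sole quantitative content of the proof.
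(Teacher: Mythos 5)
Your proof is correct and follows essentially the same route as the paper's: the auxiliary inequality $\bigl|\tfrac{u}{|u|}-\tfrac{v}{|v|}\bigr|\leq \tfrac{2|u-v|}{\max(|u|,|v|)}$ is obtained from exactly the decomposition $\tfrac{u}{|u|}-\tfrac{v}{|v|}=\tfrac{u-v}{|u|}+v\bigl(\tfrac{1}{|u|}-\tfrac{1}{|v|}\bigr)$ that the paper applies directly with $u=M_2x$, $v=M_1x$, and the triangle/reverse-triangle estimates and the lower bounds $|M_1x|\geq\sigma_{\text{min}}(M_1)|x|$, $|M_2x|\geq(\sigma_{\text{min}}(M_1)-\delta)|x|$ are the same. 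Packaging the step as a general normalization inequality is a clean (and marginally sharper, since your denominator could be kept as $\sigma_{\text{min}}(M_1)$ before weakening) presentation of the identical argument.
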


\begin{proof}
    Suppose $\delta \in [0,\sigma_{\text{min}}(M_1))$, $M_2 \in \bar{\mathcal{B}}_{\delta}(M_1)$, and $x \in \mathbb{R}^{d_2}$. Using basic matrix algebra and properties of $\Vert \cdot \Vert$, we have
    \begin{align}
        &\Big| \frac{M_2 x}{|M_2 x|} - \frac{M_1 x}{|M_1 x|} \Big| \\
        &= \Big| \frac{(M_1 + M_2 - M_1) x}{|(M_1 + M_2 - M_1) x|} - \frac{M_1 x}{|M_1 x|} \Big| \\
        &= \Big| \frac{M_1 x}{|(M_1 + M_2 - M_1) x|} \\
        &\quad + \frac{(M_2 - M_1) x}{|(M_1 + M_2 - M_1) x|} - \frac{M_1 x}{|M_1 x|} \Big| \\
        &\leq \Big| \frac{(M_2 - M_1) x}{|(M_1 + M_2 - M_1) x|} \Big| \\
        &\quad + \Big| \frac{M_1 x}{|(M_1 + M_2 - M_1) x|} - \frac{M_1 x}{|M_1 x|} \Big| \label{eqn:lemma-unit-ball-5}
    \end{align}

    Next, note that the following holds:
    \begin{align}
        \sigma_{\text{min}}(M_1 + M_2 - M_1) &\geq \min_{|x|=1} | |M_1 x| - |(M_2 - M_1)x| | \\
        & \geq \sigma_{\text{min}}(M_1) - \Vert M_2 - M_1 \Vert. \label{eqn:lemma-unit-ball-2}
    \end{align}
    
    The left hand side of \eqref{eqn:lemma-unit-ball-5} can be upper bounded as follows:
    \begin{align}
        \Big| \frac{(M_2 - M_1) x}{|(M_1 + M_2 - M_1) x|} \Big| &\leq \frac{\Vert  M_2 - M_1 \Vert |x|}{\sigma_{\text{min}} (M_1 + M_2 - M_1)|x|} \\
        &= \frac{\Vert  M_2 - M_1 \Vert }{\sigma_{\text{min}} (M_1 + M_2 - M_1)} \\
        &\leq \frac{\Vert  M_2 - M_1 \Vert }{\sigma_{\text{min}} (M_1) - \Vert  M_2 - M_1 \Vert} \label{eqn:lemma-unit-ball-1}\\
        &\leq \frac{\delta }{\sigma_{\text{min}} (M_1) - \delta}, \label{eqn:lemma-unit-ball-4}
    \end{align}
    where \eqref{eqn:lemma-unit-ball-1} follows from \eqref{eqn:lemma-unit-ball-2}.

    The right hand side of \eqref{eqn:lemma-unit-ball-5} is upper bounded as follows:
    \begin{align}
        &\Big| \frac{M_1 x}{|(M_1 + M_2 - M_1) x|} - \frac{M_1 x}{|M_1 x|} \Big| \\
        &= \Big| M_1 x \Big( \frac{1}{|(M_1 + M_2 - M_1)x|} - \frac{1}{| M_1 x |} \Big) \Big| \\
        &\leq |M_1 x| \Big| \frac{1}{|(M_1 + M_2 - M_1)x|} - \frac{1}{| M_1 x |}  \Big| \\
        &= |M_1 x| \Big| \frac{ |M_1 x| - |(M_1 + M_2 - M_1) x| }{|M_1 x | | (M_1 + M_2 - M_1) x |} \Big| \\
        &= \Big| \frac{ |M_1 x| - |(M_1 + M_2 - M_1) x| }{ | (M_1 + M_2 - M_1) x |} \Big| \\
        &\leq  \frac{ | |M_1 x| - |(M_1 + M_2 - M_1) x| | }{  \sigma_{\text{min}}(M_1 + M_2 - M_1) | x |} \\
        &\leq \frac{ | M_1 x - (M_1 + M_2 - M_1) x|  }{  \sigma_{\text{min}}(M_1 + M_2 - M_1) | x |} \\
        &= \frac{ | (M_1 - M_2) x|  }{  \sigma_{\text{min}}(M_1 + M_2 - M_1) | x |} \\
        &\leq \frac{\delta}{\sigma_{\text{min}}(M_1) - \delta} \label{eqn:lemma-unit-ball-3}
    \end{align}
    where \eqref{eqn:lemma-unit-ball-3} follows from \eqref{eqn:lemma-unit-ball-4}.
    
    The conclusion follows by combining \eqref{eqn:lemma-unit-ball-5} with \eqref{eqn:lemma-unit-ball-4} and \eqref{eqn:lemma-unit-ball-3}.
\end{proof}

\subsection{Proof of Supporting Results for Estimation Error Bound}

\begin{proof}[Proof of Proposition \ref{prop:estim-bound-all-time}]
    From the definition of our control strategy in \eqref{eqn:control-strategy}, we have
    \begin{align}
        \mathbb{E}[|U_t|^2] &\leq \mathbb{E}[(D + |V_t|)^2] = 2(D^2 + \trace{\Sigma_V}) \label{eqn:prop-estim-bound-all-time-5}
    \end{align}
    for all $t \in \mathbb{N}_0$, where \eqref{eqn:prop-estim-bound-all-time-5} follows from the AM-QM inequality.

    Moreover, for all $t \in \mathbb{N}_0$, we have
    \begin{align}
        &\mathbb{E}[|X_t|^2] \\
        &= \mathbb{E}[| A^t x_0 + \sum_{s=0}^{t-1} A^s B U_{t-1-s} + \sum_{s=0}^{t-1} A^s W_{t-1-s}|^2] \label{eqn:prop-estim-bound-all-time-1} \\
        &\leq \mathbb{E}[(\Vert  A \Vert^t | x_0 | + \sum_{s=0}^{t-1} \Vert  A \Vert^s \Vert B\Vert |U_{t-1-s}| \\
        &\quad + \sum_{s=0}^{t-1} \Vert  A \Vert^s |W_{t-1-s}|)^2] \\
        &\leq \mathbb{E}[(| x_0 | + t \Vert B\Vert D + \Vert B\Vert \sum_{s=0}^{t-1} |V_{t-1-s}| \\
        &\quad + \sum_{s=0}^{t-1} |W_{t-1-s}|)^2] \label{eqn:prop-estim-bound-all-time-2} \\
        &\leq \mathbb{E}[4 ( | x_0 |^2 + t^2 \Vert B\Vert^2 D^2 \\
        &\quad + t\Vert B\Vert^2 \sum_{s=0}^{t-1} |V_{t-1-s}|^2 + t\sum_{s=0}^{t-1} |W_{t-1-s}|^2 )] \label{eqn:prop-estim-bound-all-time-3} \\
        &= 4 ( | x_0 |^2 + t^2 (\Vert B\Vert^2 (D^2 + \trace{\Sigma_V}) +  \trace{\Sigma_W} )), \label{eqn:prop-estim-bound-all-time-4}
    \end{align}
    where \eqref{eqn:prop-estim-bound-all-time-1} follows via iterative application of \eqref{eqn:open-system}, \eqref{eqn:prop-estim-bound-all-time-2} follows from \eqref{eqn:prop-estim-bound-all-time-5} and Assumption \ref{assump:reachable}, \eqref{eqn:prop-estim-bound-all-time-3} follows via the AM-QM inequality, and \eqref{eqn:prop-estim-bound-all-time-4} follows via linearity of expectation and $\mathbb{E}[|V_s|^2] = \tr(\Sigma_V)$, $\mathbb{E}[|W_s|^2] = \tr(\Sigma_W)$ for all $s \in \mathbb{N}_0$
    
    Next, for all $T \in \mathbb{N}$, we have
    \begin{align}
        &\sum_{t=1}^T \trace{\mathbb{E}[Z_tZ_t^{\top}]} = \sum_{t=1}^T \mathbb{E}[|X_{t-1}|^2] + \sum_{t=1}^T \mathbb{E}[|U_{t-1}|^2] \\
        & \leq 4 \sum_{t=1}^T ( | x_0 |^2 + (t-1)^2 (\Vert B\Vert^2 ( D^2 + \trace{\Sigma_V}) +  \trace{\Sigma_W} ))  \\
        & \quad + 2 \sum_{t=1}^T (D^2 + \trace{\Sigma_V}) \label{eqn:prop-estim-bound-all-time-6} \\
        & \leq 4 T ( | x_0 |^2 + T^2 (\Vert B\Vert^2 (D^2 + \trace{\Sigma_V}) +  \trace{\Sigma_W} ))  \\
        & \quad + 2 T (D^2 + \trace{\Sigma_V})  \\
        &= T ( 4 |x_0|^2 + 2(D^2 + \trace{\Sigma_V})  \\
        &\quad + 4(\Vert B\Vert^2 (D^2 + \trace{\Sigma_V})+  \trace{\Sigma_W} ) T^2 ), \label{eqn:prop-estim-bound-all-time-7}
    \end{align}
    where \eqref{eqn:prop-estim-bound-all-time-6} follows from \eqref{eqn:prop-estim-bound-all-time-5} and \eqref{eqn:prop-estim-bound-all-time-4}, and \eqref{eqn:prop-estim-bound-all-time-7} follows from $t-1 \leq T$ for all $t \in \{ 1, \hdots, T\}$.
    
    We now derive a high probability positive semidefinite upper bound on $\sum_{t=1}^T Z_tZ_t^{\top}$. For convenience, let $C_1(T,x_0)=( 4 |x_0|^2 + 2(D^2 + \trace{\Sigma_V}) + 4(\Vert B\Vert^2 (D^2 + \trace{\Sigma_V}) +  \trace{\Sigma_W} ) T^2  + \lambda_{\text{max}}(\Gamma_{\text{sb}}))$. For all $T \in \mathbb{N}$ and $\delta' \in (0,1)$,
    \begin{align}
        &P( \sum_{t=1}^T Z_tZ_t^{\top} \not \preceq \frac{T}{\delta'} C_1(T,x_0) I ) \\
        & = P( (T C_1(T,x_0) )^{-1} \lambda_{\text{max}} (\sum_{t=1}^T Z_t Z_t^{\top}) \geq \frac{1}{\delta'} ) \label{eqn:prop-estim-bound-all-time-8} \\
        & \leq \delta' \mathbb{E} [ (T C_1(T,x_0) )^{-1} \lambda_{\text{max}} (\sum_{t=1}^T Z_t Z_t^{\top}) ] \label{eqn:prop-estim-bound-all-time-9} \\
        & \leq \delta' (T C_1(T,x_0) )^{-1} \mathbb{E} [ \tr(\sum_{t=1}^T Z_t Z_t^{\top}]) \label{eqn:prop-estim-bound-all-time-10} \\
        & =  \delta' (T C_1(T,x_0) )^{-1}  \sum_{t=1}^T \tr( \mathbb{E} [ Z_t Z_t^{\top}] ) \leq \delta', \label{eqn:prop-estim-bound-all-time-11}
    \end{align} 
    where \eqref{eqn:prop-estim-bound-all-time-8} follows from the definition of $\preceq$, \eqref{eqn:prop-estim-bound-all-time-9} follows from Markov's inequality, \eqref{eqn:prop-estim-bound-all-time-10} holds since $\lambda_{\text{max}}(\cdot) \leq \tr(\cdot)$, and \eqref{eqn:prop-estim-bound-all-time-11} follows from \eqref{eqn:prop-estim-bound-all-time-7}.
        
    For ease of notation over the remainder of the proof, let $p(T,\delta',x_0)=\frac{90 \sqrt{\lambda_{\text{min}}(\Sigma_W)}}{p}  ( (T \lambda_{\text{min}}(\Gamma_{\text{sb}}) )^{-1} ( n + (n+m) \ln (10/p) + \ln \det ( (1/\delta')C_1(T,x_0) \Gamma_{\text{sb}}^{-1} ) + \ln (1/\delta') )  )^{1/2}$. We now derive a finite sample estimation error bound for the least squares parameter estimate obtained from \eqref{eqn:parameter-estimate} by satisfying the premise of Proposition \ref{prop:estim-bound}. In particular, suppose $\delta' \in (0,1)$ and $T \in \mathbb{N}$, and consider the covariate-response sequence $(Z_t,X_t)_{t \in \mathbb{N}}$. Then, (a) $X_t = \theta_* Z_t + W_{t-1}$ for $t \leq T$ holds, where $W_{t-1} \mid \mathcal{F}_{t-1}$ is mean-zero and $\lambda_{\text{max}}(\Sigma_W)$-sub-Gaussian with $\mathcal{F}_t$ denoting the sigma-algebra generated by $W_0,\hdots,W_{t-1},Z_1,\hdots,Z_t$. Moreover, (b) $(Z_1,\hdots,Z_T)$ satisfies the $(k,\Gamma_{\text{sb}},p)$-BMSB condition as per Assumption \ref{assump:bmsb}. Finally, (c) $P( \sum_{t=1}^T Z_tZ_t^{\top} \not \preceq \frac{T}{\delta'} C_1(T,x_0) I ) \leq \delta'$ is satisfied from \eqref{eqn:prop-estim-bound-all-time-11}. Thus, using Proposition \ref{prop:estim-bound} it follows that for all $\delta' \in (0,1)$ and $T \in \mathbb{N}$, if $T \geq \frac{10 k}{p^2}(\ln(1/\delta')+2(n+m)\ln(10/p)+\ln \det (\frac{1}{\delta'} C_1(T,x_0) \Gamma_{\text{sb}}^{-1}))$, we have
    \begin{align}
        &P(\Vert \hat{\theta}_T-\theta_*\Vert _{2} > p(T,\delta',x_0)) \leq 3 \delta'. \label{eqn:prop-estim-bound-all-time-12}
    \end{align}
    
    Let $C_2=1/(-1+\pi^2/6)$. We conclude that
    \begin{align}
        &P(\Vert \hat{\theta}_T-\theta_*\Vert _{2} \leq e(T,\delta,x_0) \text{ for all } T \geq T_0(\delta,x_0)) \\
        & = P( \bigcap_{T \geq T_0(\delta,x_0)} \{ \Vert \hat{\theta}_T-\theta_*\Vert _{2} \leq e(T,\delta,x_0) \}) \\
        &= 1 - P( \bigcup_{T \geq T_0(\delta,x_0)} \{ \Vert \hat{\theta}_T-\theta_*\Vert _{2} > p(T,\frac{C_2\delta}{3(T+1)^2}, x_0) \}) \label{eqn:prop-estim-bound-all-time-16} \\
        & \geq 1 - \sum_{T \geq T_0(\delta,x_0)}P( \{ \Vert \hat{\theta}_T-\theta_*\Vert _{2} > p(T,\frac{C_2\delta}{3(T+1)^2}, x_0) \}) \label{eqn:prop-estim-bound-all-time-13} \\
        & \geq 1 - C_2 \sum_{T \geq T_0(\delta,x_0)} \frac{\delta}{(T+1)^2} \label{eqn:prop-estim-bound-all-time-14} \\
        & \geq 1 - C_2 \sum_{T \geq 1} \frac{\delta}{(T+1)^2} = 1 - \delta, \label{eqn:prop-estim-bound-all-time-17}
     \end{align}
     where \eqref{eqn:prop-estim-bound-all-time-16} holds since $e(T,\delta,x_0)=p(T,C_2 \delta/(3 (T+1)^2)$, \eqref{eqn:prop-estim-bound-all-time-13} follows via the union bound, \eqref{eqn:prop-estim-bound-all-time-14} follows from \eqref{eqn:prop-estim-bound-all-time-12}, and \eqref{eqn:prop-estim-bound-all-time-17} follows since $T_0(\delta,x_0) \geq 1$, and $\sum_{T\geq 1} 1/(T+1)^2 = -1 + \pi^2/6=1/C_2$.
\end{proof}

\subsection{Proof of Supporting Results for Stability Bound} 

\begin{proof}[Proof of Lemma \ref{lemma:stability-deterministic-param}]
    Let $\bar{\mathcal{R}} = \mathcal{R}_{\kappa}(\bar{A},\bar{B})$, and $K = \{ z \in \mathbb{R}^n \mid |g(A,B)z| \leq D \}$. Suppose $\bar{A} \in \bar{\mathcal{B}}_{\epsilon}(A)$ and $\bar{B} \in \bar{\mathcal{B}}_{\epsilon}(B)$, and consider the corresponding random sequence $(Z^{\bar{A},\bar{B}}_{\tau})_{\tau \in \mathbb{N}_0}$. Since $(Z^{\bar{A},\bar{B}}_{\tau})_{\tau \in \mathbb{N}_0}$ is a Markov process, our proof proceeds by satisfying the conditions in Proposition \ref{prop:geometric-stability} with $V(\cdot) \leftarrow e^{|\cdot|}$, $K$, $\lambda \leftarrow \lambda(\epsilon)$ and $\beta \leftarrow \beta(\epsilon)$.
    
    Firstly, note that the following holds:
        \begin{align}
        &\mathbb{E}[e^{|Z^{\bar{A},\bar{B}}_1|} \mid Z_0 = z] \\
        &= \mathbb{E}[e^{| A^{\kappa}Z^{\bar{A},\bar{B}}_0 + \mathcal{R}_*\text{sat}_D(-\bar{\mathcal{R}}^{\dagger} \bar{A}^{\kappa} Z^{\bar{A},\bar{B}}_0) + \tilde{V}_0 + \tilde{W}_0 |} \mid Z^{\bar{A},\bar{B}}_0 = z] \\
        &= \mathbb{E}[e^{| A^{\kappa}z + \mathcal{R}_*\text{sat}_D(-\bar{\mathcal{R}}^{\dagger} \bar{A}^{\kappa} z) + \tilde{V}_0 + \tilde{W}_0 |} ] \\
        &\leq e^{| A^{\kappa}z + \mathcal{R}_*\text{sat}_D(-\bar{\mathcal{R}}^{\dagger} \bar{A}^{\kappa} z)| + M_{\bar{V}} + M_{\bar{W}} } \\
        &\leq e^{|\mathcal{R}_*\text{sat}_D(-\bar{\mathcal{R}}^{\dagger} \bar{A}^{\kappa} z) - \mathcal{R}_*\text{sat}_D(-\mathcal{R}_*^{\dagger} A^{\kappa} z)| + M_{\bar{V}} + M_{\bar{W}} } \\
        &\quad \times e^{| A^{\kappa}z + \mathcal{R}_*\text{sat}_D(-\mathcal{R}_*^{\dagger} A^{\kappa} z)|}. \label{eqn:lemma-stability-deterministic-param-1}
    \end{align}
    
    We now show that \eqref{eqn:geometric-condition-1} is satisfied in Proposition \ref{prop:geometric-stability}. 
    Suppose $z \not \in K$. Since $|\mathcal{R}_*^{\dagger} A^{\kappa} z| > D$ is satisfied, we have
    \begin{align}
        &| A^{\kappa}z + \mathcal{R}_*\text{sat}_D(-\mathcal{R}_*^{\dagger} A^{\kappa} z)| - |z| \\
        &=| A^{\kappa}z - \frac{\mathcal{R}_* \mathcal{R}_*^{\dagger}A^{\kappa}zD}{| \mathcal{R}_*^{\dagger} A^{\kappa} z |} | - |z| \label{eqn:lemma-stability-deterministic-param-2} \\
        &= | \frac{A^{\kappa} z}{ | A^{\kappa} z | } ( |A^{\kappa}z| - \frac{D|A^{\kappa}z|}{| \mathcal{R}_*^{\dagger} A^{\kappa} z |} ) | - |z| \\
        &= |A^{\kappa}z| - \frac{D|A^{\kappa}z|}{| R^{\dagger} A^{\kappa} z |}  - |z| \\ 
        &\leq \Vert A^{\kappa}\Vert|z| - \frac{D|A^{\kappa}z|}{| R^{\dagger} A^{\kappa} z |}  - |z| \\
        &\leq  - \frac{D|A^{\kappa}z|}{| \mathcal{R}_*^{\dagger} A^{\kappa} z |} \leq - \frac{D}{\Vert  \mathcal{R}_*^{\dagger} \label{eqn:lemma-stability-deterministic-param-3}  \Vert}
    \end{align}
    where \eqref{eqn:lemma-stability-deterministic-param-2} follows from the definition of $\text{sat}_D(\cdot)$, and \eqref{eqn:lemma-stability-deterministic-param-3} follows from Assumption \ref{assump:reachable}, and the fact that $\Vert  \mathcal{R}_*^{\dagger}\Vert \geq \frac{ | \mathcal{R}_*^{\dagger}A^{\kappa}z| }{|A^{\kappa}z|}$.
    Next, we have 
    \begin{align}
        &|\mathcal{R}_*(\text{sat}_D(-\bar{\mathcal{R}}^{\dagger} \bar{A}^{\kappa} z) - \text{sat}_D(-\mathcal{R}_*^{\dagger} A^{\kappa} z) )| \\
        & \leq \Vert \mathcal{R}_* \Vert |\text{sat}_D(-\bar{\mathcal{R}}^{\dagger} \bar{A}^{\kappa} z) - \text{sat}_D(-\mathcal{R}_*^{\dagger} A^{\kappa} z )|\\
        &\leq  \Vert \mathcal{R}_*\Vert M_q D \epsilon \label{eqn:lemma-stability-deterministic-param-4}
    \end{align}
    where \eqref{eqn:lemma-stability-deterministic-param-4} follows from Lemma \ref{lemma:control-error-simplified}. Combining \eqref{eqn:lemma-stability-deterministic-param-3}, \eqref{eqn:lemma-stability-deterministic-param-4} and \eqref{eqn:lemma-stability-deterministic-param-1}, we find
    \begin{align}
        &\frac{\mathbb{E}[e^{| Z^{\bar{A},\bar{B}}_1 |} \mid Z_0 = z]}{e^{ | z | }} \\
        & \leq e^{| A^{\kappa}z + \mathcal{R}_*\text{sat}_D(-\mathcal{R}_*^{\dagger} A^{\kappa} z)| - |z|} \\
        & \quad \times e^{|\mathcal{R}_*(\text{sat}_D(-\bar{\mathcal{R}}^{\dagger} \bar{A}^{\kappa} z) - \text{sat}_D(-\mathcal{R}_*^{\dagger} A^{\kappa} z) )| + M_{\bar{V}} + M_{\bar{W}}} \\ \label{eqn:lemma-stability-deterministic-param-6}
        &\leq e^{- \frac{D}{\Vert  \mathcal{R}_*^{\dagger} \Vert} + \Vert \mathcal{R}_*\Vert M_q D \epsilon + M_{\bar{V}} + M_{\bar{W}}} = \lambda(\epsilon),
    \end{align}
    thus satisfying \eqref{eqn:geometric-condition-1} for all $z \not \in K$. 

    Next, we show that \eqref{eqn:geometric-condition-2} is satisfied in Proposition \ref{prop:geometric-stability}. For all $z \in K$, $|g(A,B)z| \leq D$ holds, so
    \begin{align}
        &| A^{\kappa}z + \mathcal{R}_*\text{sat}_D(-\mathcal{R}_*^{\dagger} A^{\kappa} z)| \\
        &= | A^{\kappa}z - \mathcal{R}_*\mathcal{R}_*^{\dagger} A^{\kappa} z| = 0. \label{eqn:lemma-stability-deterministic-param-5}
    \end{align}
    Combining \eqref{eqn:lemma-stability-deterministic-param-5}, \eqref{eqn:lemma-stability-deterministic-param-4}, and \eqref{eqn:lemma-stability-deterministic-param-1}, we find
    \begin{align}
        &\mathbb{E}[e^{|Z^{\bar{A},\bar{B}}_1|} \mid Z_0 = z] \leq e^{\Vert R\Vert M_q D \epsilon + M_{\bar{V}} + M_{\bar{W}} } = \beta(\epsilon),
    \end{align}
    for all $z \in K$, thus satisfying \eqref{eqn:geometric-condition-2}.

    The conclusion then follows from Proposition \ref{prop:geometric-stability}.
\end{proof}


\begin{proof}[Proof of Lemma \ref{lemma:bound-time-stable}]
    We start by deriving an upper bound on $T_0(\delta,x_0)$.
    Let $K_1=10k/p^2$, $f_1(\delta)=3(-1+\pi^2/6)/\delta$, $K_2=2(n+m)\ln(10/p)$, $f_2(x_0,T) = 4 |x_0|^2 + 2(D^2 + \trace{\Sigma_V}) +4(\Vert B\Vert^2 (D^2 + \trace{\Sigma_V}) +  \trace{\Sigma_W} ) T^2 + \lambda_{\text{max}}(\Gamma_{\text{sb}})$, such that
    \begin{align}
        T_0(\delta,x_0) = \min\{ T_0' \in \mathbb{N} \mid T \geq K_1 ( \ln(f_1(\delta)(T+1)^2) + K_2 + \ln( \det ( f_1(\delta)(T+1)^2 f_2(x_0,T) \Gamma_{\text{sb}}^{-1} ) ) ) \text{ for all $T \geq T_0'$} \}
    \end{align}
    holds from the definition of $T_0$ in \eqref{def:T_0}.
    Next, let $f_3(x_0) = 4 |x_0|^2 + 2(D^2 + \trace{\Sigma_V}) +4(\Vert B\Vert^2 (D^2 + \trace{\Sigma_V}) +  \trace{\Sigma_W} ) + \lambda_{\text{max}}(\Gamma_{\text{sb}})$. Then,
    \begin{align}
        &\ln( \det ( f_1(\delta)(T+1)^2 f_2(x_0,T) \Gamma_{\text{sb}}^{-1} ) ) \\
        &\leq \ln( (f_1(\delta)(T+1)^2 f_2(x_0,T))^{n+m} \det ( \Gamma_{\text{sb}}^{-1} ) ) \\
        & \leq \ln( (f_1(\delta)(T+1)^4 f_3(x_0))^{n+m} \det ( \Gamma_{\text{sb}}^{-1} ) ) \label{eqn:lemma-bound-time-stable-1}\\
        & = (n+m) \ln( f_1(\delta) f_3(x_0) ) + 4 (n+m) \ln(T + 1)  + \ln(\det ( \Gamma_{\text{sb}}^{-1} ) ) \label{eqn:lemma-bound-time-stable-2}
    \end{align}
    holds, where \eqref{eqn:lemma-bound-time-stable-1} follows from $ f_2(x_0,T) \leq f_3(x_0)(T+1)^2 $. We now bound
    \begin{align}
        &K_1 ( \ln(f_1(\delta)(T+1)^2) + K_2 + \ln( \det ( f_1(\delta)(T+1)^2 f_2(x_0,T) \Gamma_{\text{sb}}^{-1} ) ) ) \\
        &\leq K_1 ( \ln(f_1(\delta)(T+1)^2) + K_2 + (n+m) \ln( f_1(\delta) f_3(x_0) ) + 4 (n+m) \ln(T + 1)  + \ln(\det ( \Gamma_{\text{sb}}^{-1} ) )  ) \label{eqn:lemma-bound-time-stable-3} \\
        &= K_1 ( 4 (n+m) + 2 ) \ln(T + 1) + K_1 ( (n+m+1)\ln(f_1(\delta)) + K_2 + (n+m) \ln( f_3(x_0) ) + \ln(\det ( \Gamma_{\text{sb}}^{-1} ) ) ) \\
        &= K_3 \ln(T + 1) + f_4(\delta,x_0) \label{eqn:lemma-bound-time-stable-4}
    \end{align}
    where \eqref{eqn:lemma-bound-time-stable-3} follows from \eqref{eqn:lemma-bound-time-stable-2}, and \eqref{eqn:lemma-bound-time-stable-4} holds by setting  $K_3 = K_1 ( 4 (n+m) + 2 )$, and $f_4(\delta,x_0) = K_1 ( (n+m+1) \ln(f_1(\delta)) + K_2 + (n+m) \ln( f_3(x_0) ) + \ln(\det ( \Gamma_{\text{sb}}^{-1} ) ) )$.
    Next, note that 
    \begin{align}
        K_3 \ln (T + 1) \leq (1/2) T + K_3 \ln( 2 K_3 ) - K_3 + 1/2 \label{eqn:lemma-bound-time-stable-7}
    \end{align}
    holds, where the RHS is the tangent line of $K_3 \ln (T + 1)$ with gradient $1/2$. It follows that if
    \begin{align}
        T \geq 2 K_3 \ln (2 K_3) - 2K_3 + 1 + 2f_4(\delta,x_0) \label{eqn:lemma-bound-time-stable-5}
    \end{align}
    then
    \begin{align}
        T &\geq (1/2) T + K_3 \ln( 2 K_3 ) - K_3 + 1/2 + f_4(\delta,x_0) \label{eqn:lemma-bound-time-stable-6} \\
        &\geq K_1 ( \ln(f_1(\delta)(T+1)^2) + K_2 + \ln( \det ( f_1(\delta)(T+1)^2 f_2(x_0,T) \Gamma_{\text{sb}}^{-1} ) ) ) \label{eqn:lemma-bound-time-stable-9}
    \end{align}
    where \eqref{eqn:lemma-bound-time-stable-6} follows by rearranging \eqref{eqn:lemma-bound-time-stable-5}, and \eqref{eqn:lemma-bound-time-stable-9} follows from \eqref{eqn:lemma-bound-time-stable-7} and \eqref{eqn:lemma-bound-time-stable-4}. Thus, we conclude
    \begin{align}
        T_0(\delta,x_0) \leq 2 K_3 \ln (2 K_3) - 2K_3 + 1 + f_4(\delta,x_0). \label{eqn:lemma-bound-time-stable-16}
    \end{align}

    Next, we derive an upper bound on the time $T$ such that $e(t,\delta,x_0) \leq \epsilon$ holds for all $t \geq T$. Let $K_4 = 90\sqrt{\lambda_{\text{max}}(\Sigma_W)}/p$ and $K_5=n + (n+m) \ln(10/p)$, such that
    \begin{align}
        e(T,\delta,x_0) = K_4 \sqrt{ \frac{K_5 + \ln( \det ( f_1(\delta)(T+1)^2 f_2(x_0,T) \Gamma_{\text{sb}}^{-1} ) ) + \ln(f_1(\delta)(T+1)^2) }{ T \lambda_{\text{min}}(\Gamma_{\text{sb}}) } }
    \end{align}
    holds. The inequality $e(T,\delta,x_0) \leq \epsilon$ can then be rearranged as follows:
    \begin{align}
        \epsilon &\geq e(T,\delta,x_0) \\
        \epsilon &\geq K_4 \sqrt{ \frac{K_5 + \ln( \det ( f_1(\delta)(T+1)^2 f_2(x_0,T) \Gamma_{\text{sb}}^{-1} ) ) + \ln(f_1(\delta)(T+1)^2) }{ T \lambda_{\text{min}}(\Gamma_{\text{sb}}) } }  \\
        \frac{\epsilon^2  \lambda_{\text{min}}(\Gamma_{\text{sb}})}{K_4^2} T &\geq K_5 + \ln( \det ( f_1(\delta)(T+1)^2 f_2(x_0,T) \Gamma_{\text{sb}}^{-1} ) ) + \ln(f_1(\delta)(T+1)^2) \\
        T &\geq \frac{K_4^2}{\epsilon^2  \lambda_{\text{min}}(\Gamma_{\text{sb}})} ( K_5 + \ln( \det ( f_1(\delta)(T+1)^2 f_2(x_0,T) \Gamma_{\text{sb}}^{-1} ) ) + \ln(f_1(\delta)(T+1)^2) ). \label{eqn:lemma-bound-time-stable-10}
    \end{align}
    Focusing on the RHS of the inequality, and applying similar steps to \eqref{eqn:lemma-bound-time-stable-4}, we have
    \begin{align}
        &\frac{K_4^2}{\epsilon^2  \lambda_{\text{min}}(\Gamma_{\text{sb}})} ( K_5 + \ln( \det ( f_1(\delta)(T+1)^2 f_2(x_0,T) \Gamma_{\text{sb}}^{-1} ) ) + \ln(f_1(\delta)(T+1)^2) ) \\
        &\leq \frac{K_4^2}{\epsilon^2  \lambda_{\text{min}}(\Gamma_{\text{sb}})} ( K_5 + \ln(f_1(\delta)) + 2\ln(T+1) + (n+m) \ln( f_1(\delta) f_3(x_0) ) + 4 (n+m) \ln(T + 1)  + \ln(\det ( \Gamma_{\text{sb}}^{-1} ) ) ) \\
        &= \frac{K_4^2}{\epsilon^2  \lambda_{\text{min}}(\Gamma_{\text{sb}})} ( K_5 + (n+m+1)\ln(f_1(\delta)) + (n+m) \ln( f_3(x_0) ) + (4 (n+m) + 2) \ln(T + 1)  + \ln(\det ( \Gamma_{\text{sb}}^{-1} ) ) ) \\
        &= K_6 \ln(T + 1) + f_5(x_0,\delta), \label{eqn:lemma-bound-time-stable-15}
    \end{align}
    where $f_5(x_0,\delta) = \frac{K_4^2}{\epsilon^2  \lambda_{\text{min}}(\Gamma_{\text{sb}})}(K_5 + (n+m+1)\ln(f_1(\delta)) + (n+m) \ln( f_3(x_0) ) + \ln(\det(\Gamma_{\text{sb}}^{-1})))$, and $K_6 = \frac{K_4^2}{\epsilon^2  \lambda_{\text{min}}(\Gamma_{\text{sb}})} (4 (n+m) + 2) $.
    Next, note that 
    \begin{align}
        K_6 \ln (T + 1) \leq (1/2) T + K_6 \ln( 2 K_6 ) - K_6 + 1/2 \label{eqn:lemma-bound-time-stable-14}
    \end{align}
    holds, where the RHS is the tangent line of $K_6 \ln (T + 1)$ with gradient $1/2$. It follows that if
    \begin{align}
        T \geq 2 K_6 \ln (2 K_6) - 2K_6 + 1 + 2 f_5(x_0,\delta) \label{eqn:lemma-bound-time-stable-12}
    \end{align}
    then
    \begin{align}
        T &\geq (1/2) T + K_6 \ln( 2 K_6 ) - K_6 + 1/2+ f_5(x_0,\delta) \label{eqn:lemma-bound-time-stable-11} \\
        &\geq \frac{K_4^2}{\epsilon^2  \lambda_{\text{min}}(\Gamma_{\text{sb}})} ( K_5 + \ln( \det ( f_1(\delta)(T+1)^2 f_2(x_0,T) \Gamma_{\text{sb}}^{-1} ) ) + \ln(f_1(\delta)(T+1)^2) ) \label{eqn:lemma-bound-time-stable-13}
    \end{align}
    where \eqref{eqn:lemma-bound-time-stable-11} holds after rearranging \eqref{eqn:lemma-bound-time-stable-12}, and \eqref{eqn:lemma-bound-time-stable-13} holds from \eqref{eqn:lemma-bound-time-stable-14} and \eqref{eqn:lemma-bound-time-stable-15}, which is equivalent to $e(T,\delta,x_0)\leq \epsilon$. Thus,
    \begin{align}
        \min \{ T \in \mathbb{N} \mid e(t,\delta,x_0) \leq \epsilon \text{ for all } t \geq T \} \leq 2 K_6 \ln (2 K_6) - 2K_6 + 1 + f_5(x_0,\delta) \label{eqn:lemma-bound-time-stable-17}
    \end{align}
    holds.

    Combining \eqref{eqn:lemma-bound-time-stable-16} and \eqref{eqn:lemma-bound-time-stable-17}:
    \begin{align}
        \tau_0'(\epsilon,\delta,x_0) &= \min \{ \tau \in \mathbb{N} \mid \kappa \tau \geq T_0(\delta,x_0), e(\kappa i,\delta, x_0) \leq \epsilon \text{ for all } i \geq \tau \} \\
        &\leq (1/\kappa)\max( 2 K_3 \ln (2 K_3) - 2K_3 + 1 + f_4(\delta,x_0), 2 K_6 \ln (2 K_6) - 2K_6 + 1 + f_5(x_0,\delta) ) \\
        &\leq (1/\kappa) [ \max(2 K_3 \ln (2 K_3) - 2K_3 + 1, 2 K_6 \ln (2 K_6) - 2K_6 + 1) + \max(f_4(x_0,\delta), f_5(x_0,\delta)) ] \label{eqn:lemma-bound-time-stable-18} \\
        &= (1/\kappa)[L_3 + \max(f_4(x_0,\delta), f_5(x_0,\delta))], \label{eqn:lemma-bound-time-stable-19}
    \end{align}
    where \eqref{eqn:lemma-bound-time-stable-18} holds since $\max(a + b, c + d) \leq \max(a,c) + \max(b,d)$, and \eqref{eqn:lemma-bound-time-stable-19} follows after setting $L_3 = \max(2 K_3 \ln (2 K_3) - 2K_3 + 1, 2 K_6 \ln (2 K_6) - 2K_6 + 1)$.
    We then bound $\max(f_4(x_0,\delta), f_5(x_0,\delta)$ as follows:
    \begin{align}
        &\max( f_4(\delta,x_0), f_5(\delta,x_0) ) \\
        &= \max( K_1 [ (n+m+1) [ \ln ( 3(-1 + \pi^2/6) ) + \ln(1/\delta) ] + K_2 + (n+m) \ln( f_3(x_0) ) + \ln(\det ( \Gamma_{\text{sb}}^{-1} ) ) ] , \\
        &\quad \frac{K_4^2}{\epsilon^2  \lambda_{\text{min}}(\Gamma_{\text{sb}})}[K_5 + (n+m+1)( \ln ( 3(-1 + \pi^2/6) ) + \ln(1/\delta) ) + (n+m) \ln( f_3(x_0) ) + \ln(\det(\Gamma_{\text{sb}}^{-1}))] ) \\
        &\leq \max( K_1, \frac{K_4^2}{\epsilon^2  \lambda_{\text{min}}(\Gamma_{\text{sb}})}  ) \big [ (n+m+1)\ln(1/\delta) + (n+m+1) \ln ( 3(-1 + \pi^2/6) ) + (n+m) \ln (f_3(x_0)) + \max(K_5,K_2) \big ] \\
        &= L_4(x_0) + L_5\ln(1/\delta), \label{eqn:lemma-bound-time-stable-20}
    \end{align}
    where $L_4(x_0) = \max( K_1, \frac{K_4^2}{\epsilon^2  \lambda_{\text{min}}(\Gamma_{\text{sb}})}  ) \big[ (n+m+1) \ln ( 3(-1 + \pi^2/6) ) + (n+m) \ln (f_3(x_0)) + \max(K_5,K_2) \big]$, and $L_5 = \max( K_1, \frac{K_4^2}{\epsilon^2  \lambda_{\text{min}}(\Gamma_{\text{sb}})}  ) (n+m+1) $.
    Combining \eqref{eqn:lemma-bound-time-stable-19} with \eqref{eqn:lemma-bound-time-stable-20}, we conclude
    \begin{align}
        \tau_0'(\epsilon,\delta,x_0) &\leq  (1/\kappa) [L_3 + L_4(x_0) + L_5\ln(1/\delta)] \\
        &= L_2(x_0) + L_1 \ln(1/\delta)
    \end{align}
    where $L_2(x_0) = (1/\kappa)(L_3 + L_4(x_0))$, and $L_1 = (1/\kappa)L_5$.
\end{proof}




\addtolength{\textheight}{-12cm}

\end{document}